\theoremstyle{plain}
\newtheorem{theorem}{Theorem}
\newtheorem{Definition}{Definition}
\newtheorem{Lemma}{Lemma}
\newtheorem{Remark}{Remark}
\newtheorem{Example}{Example}
\title{Breaking the Communication-Privacy-Accuracy Tradeoff with $f$-Differential Privacy}
\author{%
  Richeng Jin\textsuperscript{1}~~~~~
  Zhonggen Su\textsuperscript{1}~~~~~
  Caijun Zhong\textsuperscript{1}\\
  \textbf{Zhaoyang Zhang\textsuperscript{1}}~~~~~
  \textbf{Tony Q.S. Quek\textsuperscript{2}}~~~~~
  \textbf{Huaiyu Dai\textsuperscript{3}}\\
  \\
  \textsuperscript{1}Zhejiang University~~~~~\textsuperscript{2}Singapore University of Technology and Design\\\textsuperscript{3}North Carolina State University
}
\begin{document}
\setlength{\abovedisplayskip}{0pt}
\setlength{\belowdisplayskip}{0pt}
\maketitle

\begin{abstract}
{\color{black}
We consider a federated data analytics problem in which a server coordinates the collaborative data analysis of multiple users with privacy concerns and limited communication capability. The commonly adopted compression schemes introduce information loss into local data while improving communication efficiency, and it remains an open problem whether such discrete-valued mechanisms provide any privacy protection. In this paper, we study the local differential privacy guarantees of discrete-valued mechanisms with finite output space through the lens of $f$-differential privacy (DP). More specifically, we advance the existing literature by deriving tight $f$-DP guarantees for a variety of discrete-valued mechanisms, including the binomial noise and the binomial mechanisms that are proposed for privacy preservation, and the sign-based methods that are proposed for data compression, in closed-form expressions. We further investigate the amplification in privacy by sparsification and propose a ternary stochastic compressor. By leveraging compression for privacy amplification, we improve the existing methods by removing the dependency of accuracy (in terms of mean square error) on communication cost in the popular use case of distributed mean estimation, therefore breaking the three-way tradeoff between privacy, communication, and accuracy.
}
\end{abstract}

\section{Introduction}\label{Introduction}
\noindent {\color{black}Nowadays, the massive data generated and collected for analysis, and consequently the prohibitive communication overhead for data transmission, are overwhelming the centralized data analytics paradigm. Federated data analytics is, therefore, proposed as a new distributed computing paradigm that enables data analysis while keeping the raw data locally on the user devices \cite{wang2021federated}. Similarly to its most notable use case, i.e., federated learning (FL) \cite{mcmahan2017communication,kairouz2019advances}, federated data analytics faces two critical challenges:} data privacy and communication efficiency. On one hand, the local data of users may contain sensitive information, and privacy-preserving mechanisms are needed. On the other hand, the user devices are usually equipped with limited communication capabilities, and compression mechanisms are {\color{black}often} adopted to improve communication efficiency. 

Differential privacy (DP) has become the gold standard for privacy measures due to its {\color{black}rigorous foundation and} simple implementation. One classic technique to ensure DP is adding Gaussian or Laplacian noises to the data \cite{dwork2006calibrating}. However, they are prone to numerical errors on finite-precision computers \cite{mironov2012significance} and may not be suitable for federated data analytics with communication constraints due to their continuous nature. With such consideration, various discrete noises with privacy guarantees have been proposed, e.g., the binomial noise \cite{agarwal2018cpsgd}, the discrete Gaussian mechanism \cite{kairouz2021distributed}, and the Skellam mechanism \cite{agarwal2021skellam}. Nonetheless, the additive noises in \cite{kairouz2021distributed} and \cite{agarwal2021skellam} assume infinite range, which renders them less communication-efficient without appropriate clipping. Unfortunately, clipping usually ruins the unbiasedness of the mechanism. \cite{chen2022poisson} develops a Poisson binomial mechanism (PBM) that does not rely on additive noise. {\color{black}In PBM, each user adopts a binomial mechanism, which takes a continuous input and encodes it into the success probability of a binomial distribution. The output of the binomial mechanism is shared with a central server which releases the aggregated result that follows the Poisson binomial distribution. {\color{black}However, \cite{chen2022poisson} focuses on distributed DP in which the server only observes the output of the aggregated results instead of the data shared by each individual user, and therefore, requires a secure computation function (e.g., secure aggregation \cite{kairouz2019advances}).}

{\color{black}In addition to discrete DP mechanisms, existing works have investigated the fundamental tradeoff between communication, privacy, and accuracy under the classic $(\epsilon,\delta)$-DP framework (e.g., \cite{nguyen2016collecting,wang2019locally,gandikota2021vqsgd,chen2020breaking}). {\color{black}Notably, in the case of distributed mean estimation,} \cite{chen2020breaking} incorporates Kashin's representation and proposed Subsampled and Quantized Kashin's Response (SQKR), which achieves order-optimal mean square error (MSE) that has a linear dependency on the dimension of the private data $d$. SQKR first computes Kashin's representation of the private data and quantizes each coordinate into a 1-bit message. Then, $k$ coordinates are randomly sampled and privatized by the $2^{k}$-Random Response mechanism \cite{suresh2017distributed}. {\color{black}SQKR achieves an order-optimal three-way tradeoff between privacy, accuracy, and communication. Nonetheless, it does not account for the privacy introduced during sparsification.}

{\color{black}Intuitively, as compression becomes more aggressive, less information will be shared by the users, which naturally leads to better privacy protection. However, formally quantifying the privacy guarantees of compression mechanisms remains an open {\color{black}problem}. In this work, we close the gap by investigating the local DP guarantees of discrete-valued mechanisms, based on which a ternary stochastic compressor is proposed to leverage the privacy amplification by compression and advance the literature by achieving a better communication-privacy-accuracy tradeoff. More specifically, we focus on the emerging concept of $f$-DP \cite{dong2021gaussian} that can be readily converted to $(\epsilon,\delta)$-DP and R\'enyi differential privacy \cite{mironov2017renyi} in a lossless way while enjoying better composition property \cite{el2022differential}.}

}

\textbf{Our contributions}. {\color{black}In this work, we derive the closed-form expressions of the tradeoff function between type I and type II error rates in the hypothesis testing problem for a generic discrete-valued mechanism with a finite output space, based on which $f$-DP guarantees of the binomial noise (c.f. Section \ref{binomialnoise}) and the binomial mechanism (c.f. Section \ref{binomalmechanism}) that covers a variety of discrete differentially private mechanisms and compression mechanisms as special cases are obtained. Our analyses lead to tighter privacy guarantees for binomial noise than \cite{agarwal2018cpsgd} and extend the results for the binomial mechanism in \cite{chen2022poisson} to local DP. To the best of our knowledge, this is the first work that investigates the $f$-DP guarantees of discrete-valued mechanisms, and the results could possibly inspire the design of better differentially private compression mechanisms.

Inspired by the analytical results, we also leverage the privacy amplification of the sparsification scheme and propose a ternary stochastic compressor (c.f. Section \ref{ternarycompressor}). By accounting for the privacy amplification of compression, our analyses reveal that given a privacy budget $\mu$-GDP {\color{black}(which is a special case of $f$-DP)} with $\mu < \sqrt{4dr/(1-r)}$ (in which $r$ is the ratio of non-zero coordinates {\color{black}in expectation for the sparsification scheme}), the MSE of the ternary stochastic compressor only depends on $\mu$ in the use case of distributed mean estimation (which is the building block of FL). In this sense, we break the three-way tradeoff between communication overhead, privacy, and accuracy by removing the dependency of accuracy on the communication overhead. Different from existing works which suggest that, in the high privacy regime, the error introduced by compression is dominated by the error introduced for privacy, we show that the error caused by compression could be translated into enhancement in privacy.} Compared to SQKR \cite{chen2020breaking}, the proposed scheme yields better privacy guarantees given the same MSE and communication cost. For the scenario where each user $i$ observes $x_{i} \in \{-c,c\}^{d}$ for some constant $c>0$, the proposed scheme achieves the same privacy guarantee and MSE as those of the classic Gaussian mechanism in the large $d$ regime, {\color{black}which essentially means that the improvement in communication efficiency is achieved for free.} We remark that the regime of large $d$ is often of interest in practical FL in which $d$ is the number of training parameters.
}

\section{Related Work}
Recently, there is a surge of interest in developing differentially private data analysis techniques, which can be divided into three categories: central differential privacy (CDP) that assumes a trusted central server to perturb the collected data \cite{dwork2006our}, {\color{black}distributed differential privacy that relies on secure aggregation during data collection \cite{kairouz2019advances}}, and local differential privacy (LDP) that avoids the need for the trusted server by perturbing the local data on the user side \cite{kasiviswanathan2011can}. To overcome the drawbacks of the Gaussian and Laplacian mechanisms, several discrete mechanisms have been proposed. \cite{dwork2006our} introduces the one-dimensional binomial noise, which is extended to the general $d$-dimensional case in \cite{agarwal2018cpsgd} with more comprehensive analysis in terms of $(\epsilon,\delta)$-DP. {\color{black}\cite{canonne2020discrete} analyzes the LDP guarantees of discrete Gaussian noise, while \cite{kairouz2021distributed} further considers secure aggregation.} \cite{agarwal2021skellam} studies the R\'enyi DP guarantees of the Skellam mechanism. However, both {\color{black}the discrete Gaussian mechanism and the Skellam mechanism assume infinite ranges at the output,} which makes them less communication efficient without appropriate clipping. Moreover, all the above three mechanisms achieve differential privacy {\color{black}at the cost of} exploding variance {\color{black}for the additive noise} in the high-privacy regimes. 

Another line of studies jointly considers privacy preservation and compression. {\color{black}\cite{nguyen2016collecting,wang2019locally} propose to achieve DP by quantizing, sampling, and perturbing each entry, while} \cite{gandikota2021vqsgd} proposes a vector quantization scheme with local differential privacy. However, the MSE of these schemes grows with $d^{2}$. \cite{chen2020breaking} investigates the three-way communication-privacy-accuracy tradeoff and incorporates Kashin's representation to achieve order-optimal estimation error in mean estimation. \cite{cormode2021bit} proposes to first sample a portion of coordinates, followed by the randomized response mechanism \cite{kairouz2016extremal}. \cite{girgis2021shuffled} and \cite{girgis2022shuffled} further incorporate shuffling for privacy amplification. \cite{feldman2021lossless} proposes to compress the LDP schemes using a pseudorandom generator, while \cite{shah2022optimal} utilizes minimal random coding. \cite{chaudhuri2022privacy} proposes a privacy-aware compression mechanism that accommodates DP requirement and unbiasedness simultaneously. However, they consider {\color{black}pure $\epsilon$-DP, which} cannot be easily generalized to the relaxed variants. \cite{chen2022poisson} proposes the Poisson binomial mechanism with R\'enyi DP guarantees. Nonetheless, R\'enyi DP lacks the favorable hypothesis testing interpretation and the conversion to $(\epsilon,\delta)$-DP is lossy. Moreover, most of the existing works focus on privatizing the compressed data or vice versa, leaving the privacy guarantees of compression mechanisms largely unexplored. \cite{koskela2021tight} proposes a numerical accountant based on fast Fourier transform \cite{koskela2020computing} to evaluate $(\epsilon,\delta)$-DP of general discrete-valued mechanisms. Recently, an independent work \cite{chen2023privacy} studies privacy amplification by compression for central ($\epsilon,\delta$)-DP and multi-message shuffling frameworks. In this work, we consider LDP through the lens of $f$-DP and eliminate the need for a trusted server or shuffler.

{\color{black}Among the relaxations of differential privacy notions \cite{dwork2016concentrated,mironov2017renyi,bun2018composable}, $f$-DP \cite{dong2021gaussian} is a variant of $\epsilon$-DP with hypothesis testing interpretation, which enjoys the property of lossless conversion to $(\epsilon,\delta)$-DP and tight composition \cite{zheng2020sharp}. As a result, it leads to favorable performance in distributed/federated learning \cite{bu2020deep,zheng2021federated}. However, to the best of our knowledge, none of the existing works study the $f$-DP of discrete-valued mechanisms. In this work, we bridge the gap by deriving tight $f$-DP guarantees of various compression mechanisms in closed form, based on which a ternary stochastic compressor is proposed to achieve a better communication-privacy-accuracy tradeoff {\color{black}than existing methods}.}


\section{Problem Setup and Preliminaries}
\subsection{Problem Setup}
\noindent We consider a set of $N$ users (denoted by $\mathcal{N}$) with local data $x_{i} \in \mathbb{R}^{d}$. The users aim to share $x_{i}$'s with a central server in a privacy-preserving and communication-efficient manner. More specifically, the users adopt a privacy-preserving mechanism $\mathcal{M}$ to obfuscate their data and share the perturbed results $\mathcal{M}(x_{i})$'s with the central server. {\color{black}In the use case of distributed/federated learning, each user has a local dataset $S$. During each training step, it computes the local stochastic gradients and shares the obfuscated gradients with the server. In this sense, the overall gradient computation and obfuscation mechanism $\mathcal{M}$ takes the local dataset $S$ as the input and outputs the obfuscated result $\mathcal{M}(S)$. Upon receiving the shared $\mathcal{M}(S)$'s, the server estimates the mean of the local gradients.
}

\subsection{Differential Privacy}
Formally, differential privacy is defined as follows.
\begin{Definition}[$(\epsilon,\delta)$-DP \cite{dwork2006our}]
A randomized mechanism $\mathcal{M}$ is $(\epsilon,\delta)$-differentially private {\color{black}if} for all neighboring datasets $S$ and $S'$ and all $O \subset \mathcal{O}$ in the range of $\mathcal{M}$, we have
\begin{equation}
P(\mathcal{M}(S)\in O) \leq e^{\epsilon} P(\mathcal{M}(S')\in O) + \delta,
\end{equation}
in which $S$ and $S'$ are neighboring datasets that differ in only one record, and $\epsilon,\delta \geq 0$ are the parameters that characterize the level of differential privacy.  
\end{Definition}

\subsection{$f$-Differential Privacy}\label{prelifdp}
Assuming that there exist two neighboring datasets $S$ and $S'$, from the hypothesis testing perspective, we have the following two hypotheses
\begin{equation}\label{hypothesistestingproblem}
\begin{split}
H_{0}: \text{the underlying dataset is}~S,~~H_{1}: \text{the underlying dataset is}~S'.    
\end{split}
\end{equation}

Let $P$ and $Q$ denote the probability distribution of $\mathcal{M}(S)$ and $\mathcal{M}(S')$, respectively. \cite{dong2021gaussian} formulates the problem of distinguishing the two hypotheses as the tradeoff between the achievable type I and type II error rates. More precisely, consider a rejection rule $0 \leq \phi \leq 1$ {\color{black}(which rejects $H_{0}$ with a probability of $\phi$)}, the type I and type II error rates are defined as $\alpha_{\phi} = \mathbb{E}_{P}[\phi]$ and $\beta_{\phi} = 1 - \mathbb{E}_{Q}[\phi]$, respectively. In this sense, $f$-DP characterizes the tradeoff between type I and type II error rates. The tradeoff function and $f$-DP are formally defined as follows. 

\begin{Definition}[tradeoff function \cite{dong2021gaussian}]\label{tradeofffunction}
For any two probability distributions $P$ and $Q$ on the same space, the tradeoff function $T(P,Q): [0,1]\rightarrow [0,1]$ is defined as $T(P,Q)(\alpha) = \inf\{\beta_{\phi}:\alpha_{\phi}\leq\alpha\}$,
where the infimum is taken over all (measurable) rejection rule $\phi$.
\end{Definition}

\begin{Definition}[$f$-DP \cite{dong2021gaussian}]\label{dpdefinition}
Let $f$ be a tradeoff function. With a slight abuse of notation, a mechanism $\mathcal{M}$ is $f$-differentially private if $T(\mathcal{M}(S),\mathcal{M}(S')) \geq f$ for all neighboring datasets $S$ and $S'$, which suggests that the attacker cannot achieve a type II error rate smaller than $f(\alpha)$.
\end{Definition}
$f$-DP can be converted to $(\epsilon,\delta)$-DP as follows.
\begin{Lemma}\label{Lemmaftoappro}\cite{dong2021gaussian}
A mechanism is $f(\alpha)$-differentially private if and only if it is $(\epsilon,\delta)$-differentially private with
\begin{equation}\label{ftodp}
f(\alpha) = \max\{0,1-\delta-e^{\epsilon}\alpha,e^{-\epsilon}(1-\delta-\alpha)\}.
\end{equation}
\end{Lemma}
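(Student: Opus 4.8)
The plan is to recast both privacy notions in terms of the same hypothesis-testing quantities and then match them branch by branch. Fix neighboring datasets $S,S'$ and write $P=\mathcal{M}(S)$, $Q=\mathcal{M}(S')$. The starting observation I would establish is that the event-level $(\epsilon,\delta)$-DP inequality extends to all randomized rejection rules: if $P(O)\leq e^{\epsilon}Q(O)+\delta$ for every measurable event $O$, then for any $0\leq\phi\leq1$ the layer-cake identity $\phi=\int_0^1 \mathbf{1}\{\phi>t\}\,dt$ gives $\mathbb{E}_P[\phi]\leq e^{\epsilon}\mathbb{E}_Q[\phi]+\delta$ by integrating the event inequality over the threshold $t$. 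Since the neighboring relation is symmetric, the same extended inequality holds with $P$ and $Q$ interchanged.

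For the direction $(\epsilon,\delta)$-DP $\Rightarrow$ $f$-DP, I would take any test $\phi$ with $\alpha_\phi=\mathbb{E}_P[\phi]\leq\alpha$ and read off two lower bounds on $\beta_\phi=1-\mathbb{E}_Q[\phi]$. Applying the extended inequality in the direction $\mathbb{E}_Q[\phi]\leq e^{\epsilon}\mathbb{E}_P[\phi]+\delta$ yields $\beta_\phi\geq 1-\delta-e^{\epsilon}\alpha_\phi\geq 1-\delta-e^{\epsilon}\alpha$; applying it to the complementary rule $1-\phi$ in the other ordering yields $1-\alpha_\phi\leq e^{\epsilon}\beta_\phi+\delta$, i.e. $\beta_\phi\geq e^{-\epsilon}(1-\delta-\alpha)$. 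Together with the trivial bound $\beta_\phi\geq0$, taking the infimum over all such $\phi$ gives $T(P,Q)(\alpha)\geq\max\{0,1-\delta-e^{\epsilon}\alpha,e^{-\epsilon}(1-\delta-\alpha)\}=f(\alpha)$, which is exactly $f$-DP.

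For the converse $f$-DP $\Rightarrow$ $(\epsilon,\delta)$-DP, I would instantiate the tradeoff inequality $T(P,Q)\geq f$ at deterministic indicator tests. For an arbitrary event $O$, the rule $\phi=\mathbf{1}_{O^{c}}$ has $\alpha_\phi=1-P(O)$ and $\beta_\phi=Q(O)$, and the definition of $T$ as an infimum gives $\beta_\phi\geq T(P,Q)(\alpha_\phi)\geq f(1-P(O))$. Using the linear piece $e^{-\epsilon}(1-\delta-\alpha)$ of $f$ evaluated at $\alpha=1-P(O)$ yields $Q(O)\geq e^{-\epsilon}(P(O)-\delta)$, i.e. $P(O)\leq e^{\epsilon}Q(O)+\delta$; ranging over all $O$ and over both orderings of every neighboring pair then recovers $(\epsilon,\delta)$-DP.

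The step I expect to require the most care is the bookkeeping that matches each of the three branches of $f$ to the correct direction of the DP inequality: the two nontrivial linear pieces arise from the two (symmetric) orderings of $P,Q$, and one must verify that the outer $\max$ and the floor at $0$ introduce no spurious constraints — in the forward direction this is automatic, since a maximum of valid lower bounds on $\beta_\phi$ is still a lower bound, while in the converse one simply selects whichever linear branch dominates at the relevant argument. The secondary technical point is the passage from events to randomized tests via the layer-cake identity, which is precisely what makes the event-level definition of $(\epsilon,\delta)$-DP interchangeable with the test-level quantities $\alpha_\phi,\beta_\phi$ entering $T(P,Q)$.
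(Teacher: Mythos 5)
Your proof is correct; the paper itself does not prove this lemma (it is quoted from Dong et al.'s work on Gaussian differential privacy), and your argument is essentially the standard one given there: reformulating the event-level $(\epsilon,\delta)$ inequality at the level of randomized tests via the layer-cake identity, reading off the two linear lower bounds on $\beta_\phi$ from the two orderings of $P$ and $Q$, and recovering the event-level inequality in the converse by instantiating the tradeoff bound at indicator tests. No gaps.
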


Finally, we introduce a special case of $f$-DP with $f(\alpha) = \Phi(\Phi^{-1}(1-\alpha)-\mu)$, which is denoted as $\mu$-GDP. More specifically, $\mu$-GDP corresponds to the tradeoff function of two normal distributions with mean 0 and $\mu$, respectively, {\color{black} and a variance of 1}.
\section{Tight $f$-DP Analysis for Existing Discrete-Valued Mechanisms}\label{resultsscalar}
{\color{black}In this section, we derive the $f$-DP guarantees for a variety of existing differentially private discrete-valued mechanisms in the scalar case (i.e., $d=1$) {\color{black}to illustrate the main ideas}. The vector case will be discussed in Section \ref{commprivacyaccuracytradeoff}. More specifically, according to Definition \ref{dpdefinition}, the $f$-DP of a mechanism $\mathcal{M}$ is given by the infimum of the tradeoff function over all neighboring datasets $S$ and $S'$, i.e., $f(\alpha) =  \inf_{S,S'}\inf_{\phi}\{\beta_{\phi}(\alpha):\alpha_{\phi} \leq \alpha\}$. Therefore, the analysis consists of two steps: 1) we obtain the closed-form expressions of the tradeoff functions, i.e., $\inf_{\phi}\{\beta_{\phi}(\alpha):\alpha_{\phi} \leq \alpha\}$, for a generic discrete-valued mechanism (see Section \ref{tradeofffunctiongeneric} in the supplementary material); and 2) given the tradeoff functions, we derive the $f$-DP by identifying the {\color{black}mechanism-specific} infimums of the tradeoff functions over {\color{black}all possible} neighboring datasets. We remark that the tradeoff functions for the discrete-valued mechanisms are essentially piece-wise functions with both the domain and range of each piece determined by both the mechanisms and the datasets, which renders the analysis for the second step highly non-trivial.}

\subsection{Binomial Noise}\label{binomialnoise}
In this subsection, we consider the binomial noise (i.e., Algorithm \ref{AddBinomialMechanism}) proposed in \cite{agarwal2018cpsgd}, which serves as a communication-efficient alternative to the classic Gaussian noise. More specifically, the output of stochastic quantization in \cite{agarwal2018cpsgd} is perturbed by {\color{black}a binomial random variable.} 

\begin{algorithm}
\caption{Binomial Noise \cite{agarwal2018cpsgd}}
\label{AddBinomialMechanism}
\begin{algorithmic}
\STATE \textbf{Input}: $x_{i} \in [0,1,\cdots,l]$, $i \in \mathcal{N}$, number of trials $M$, success probability $p$.
\STATE Privatization: $Z_{i} \triangleq  x_{i} + Binom(M,p)$.
\end{algorithmic}
\end{algorithm}

\begin{theorem}\label{privacyofaddbinomial}
Let $\tilde{Z}= Binom(M,p)$, the {\color{black}binomial noise mechanism} in Algorithm \ref{AddBinomialMechanism} is $f^{bn}(\alpha)$-differentially private with
\begin{equation}
    f^{bn}(\alpha) = \min\{\beta_{\phi,\inf}^{+}(\alpha),\beta_{\phi,\inf}^{-}(\alpha)\},
\end{equation}
in which 
\begin{equation}
\begin{split}
&\beta_{\phi,\inf}^{+}(\alpha) = 
\begin{cases}
P(\tilde{Z} \geq \tilde{k}+l) + \frac{P(Z = \tilde{k}+l)P(\tilde{Z} < \tilde{k})}{P(\tilde{Z} = \tilde{k})} - \frac{P(\tilde{Z} = \tilde{k}+l)}{P(\tilde{Z} = \tilde{k})}\alpha, \\
\hfill ~~~~~~~~~~~~~~~~~~~~~~~~~~~~~~~~~~~~~~~~~~~~~~~\text{for $\alpha \in [P(\tilde{Z} < \tilde{k}), P(\tilde{Z} \leq \tilde{k})]$}, \text{$\tilde{k} \in [0,M-l]$},\\
  0, \hfill ~~~~~~~~~~~~~~~~~~~~~~~~~~~~~~~~~~~~~~~~~~~~~~~\text{for $\alpha \in [P(\tilde{Z} \leq M-l), 1]$}.\\
\end{cases}    
\end{split}
\end{equation}
\begin{equation}
\begin{split}
&\beta_{\phi,\inf}^{-}(\alpha) = 
\begin{cases}
P(\tilde{Z} \leq \tilde{k}-l) + \frac{P(\tilde{Z} = \tilde{k}-l)P(\tilde{Z} > \tilde{k})}{P(\tilde{Z} = \tilde{k})} - \frac{P(\tilde{Z} = \tilde{k}-l)}{P(\tilde{Z} = \tilde{k} )}\alpha, \\
\hfill  ~~~~~~~~~~~~~~~~~~~~~~~~~~~~~~~~~~~~~~~~~~~~~~~\text{for $\alpha \in [P(\tilde{Z} > \tilde{k} ), P(\tilde{Z} \geq \tilde{k} )]$}, \text{$\tilde{k} \in [l,M]$},\\
  0, \hfill  ~~~~~~~~~~~~~~~~~~~~~~~~~~~~~~~~~~~~~~~~~~~~~~~\text{for $\alpha \in [P(\tilde{Z} \geq l), 1]$}.\\
\end{cases}    
\end{split}
\end{equation}
{\color{black}Given that $P(\Tilde{Z}=k)={M \choose k}p^{k}(1-p)^{M-k}$}, it can be readily shown that when $p = 0.5$, both $\beta^{+}_{\phi,\inf}(\alpha)$ and $\beta^{-}_{\phi,\inf}(\alpha)$ are maximized, and $f(\alpha) = \beta^{+}_{\phi,\inf}(\alpha)=\beta^{-}_{\phi,\inf}(\alpha)$. 
\end{theorem}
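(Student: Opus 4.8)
The plan is to compute the $f$-DP guarantee via the decomposition $f(\alpha) = \inf_{S,S'}\inf_{\phi}\{\beta_{\phi}(\alpha) : \alpha_{\phi} \le \alpha\}$ in two stages: first fix a neighboring pair and identify the optimal rejection rule, then optimize over neighbors. Since the scalar input satisfies $x_{i} \in \{0,1,\dots,l\}$ and neighbors differ in a single record, the two output laws are shifted binomials, $P = x + \tilde{Z}$ and $Q = x' + \tilde{Z}$ with $\tilde{Z}=Binom(M,p)$ and $|x-x'|\le l$. The inner problem is therefore a Neyman--Pearson test between two shifted binomials, and this is exactly the instantiation of the generic discrete tradeoff-function characterization to the mechanism of Algorithm \ref{AddBinomialMechanism}.

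The structural fact I would exploit is that $\{Binom(M,p)+s\}_{s}$ has monotone likelihood ratio in the output: the ratio $P(\tilde{Z}=k-l)/P(\tilde{Z}=k) = \bigl(\binom{M}{k-l}/\binom{M}{k}\bigr)((1-p)/p)^{l}$ is monotone in $k$. By Neyman--Pearson this makes the level-$\alpha$ optimal test a threshold test on the observed output, with randomization at the single boundary atom needed to realize every $\alpha\in[0,1]$. Parametrizing the boundary by the noise value $\tilde{k}$ and the randomization probability $\gamma\in[0,1]$, I would write $\alpha(\tilde{k},\gamma)=P(\tilde{Z}<\tilde{k})+\gamma P(\tilde{Z}=\tilde{k})$, which sweeps the interval $[P(\tilde{Z}<\tilde{k}),P(\tilde{Z}\le\tilde{k})]$, and the matching type II error $\beta(\tilde{k},\gamma)=P(\tilde{Z}>\tilde{k}+l)+(1-\gamma)P(\tilde{Z}=\tilde{k}+l)$. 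Eliminating $\gamma$ produces exactly the linear-in-$\alpha$ piece of $\beta^{+}_{\phi,\inf}$; checking the endpoints $\gamma=0,1$ (which give $\beta=P(\tilde{Z}\ge\tilde{k}+l)$ and $\beta=P(\tilde{Z}>\tilde{k}+l)$) confirms that consecutive pieces join continuously and that the flat piece $\beta=0$ is reached once $\alpha\ge P(\tilde{Z}\le M-l)$, i.e. once the support of $Q$ can be exhausted.

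With the tradeoff function fixed for a given shift, I would complete the optimization over neighbors, where two points need care. First, the shift direction: the ``reject small outputs'' test yields $\beta^{+}_{\phi,\inf}$ (the ordering $x=l,\ x'=0$) and the mirror ``reject large outputs'' test yields $\beta^{-}_{\phi,\inf}$ (the ordering $x=0,\ x'=l$); since $f$-DP must hold for all \emph{ordered} neighboring pairs per Definition \ref{dpdefinition}, the guarantee is the pointwise minimum of the two. Second, the shift magnitude: I would argue that the tradeoff function is pointwise decreasing in $|x-x'|$, so the infimum over neighbors is attained at the maximal shift $l$, uniquely realized by $(0,l)$ and $(l,0)$. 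This monotonicity reduces to the monotone likelihood ratio of $\{Binom(M,p)+s\}$ in $s$, which orders the shifted laws so that larger separation is uniformly easier to test; stating this stochastic-ordering step rigorously is the part I expect to require the most care.

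Finally, for the $p=1/2$ claim I would use the symmetry $P(\tilde{Z}=k)=P(\tilde{Z}=M-k)$ valid at $p=1/2$: the reindexing $\tilde{k}\mapsto M-\tilde{k}$ carries the expression for $\beta^{+}_{\phi,\inf}$ into that for $\beta^{-}_{\phi,\inf}$ term by term, so the two directional functions coincide and $f=\beta^{+}_{\phi,\inf}=\beta^{-}_{\phi,\inf}$. For optimality over $p$, I would first note that replacing $p$ by $1-p$ mirrors the laws and hence swaps $\beta^{+}$ and $\beta^{-}$, so the guarantee is invariant under $p\leftrightarrow 1-p$ and it suffices to treat $p\le 1/2$; I would then show that each piece of $\beta^{\pm}_{\phi,\inf}(\alpha)$ is monotone in $p$ on $(0,1/2]$, intuitively because the variance $Mp(1-p)$ grows toward $p=1/2$, making the fixed shift $l$ relatively less detectable and thus $\beta$ larger. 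Turning this variance-spreading intuition into a clean pointwise monotonicity in $p$ — the coefficient $P(\tilde{Z}=\tilde{k}+l)/P(\tilde{Z}=\tilde{k})$ and the tail $P(\tilde{Z}\ge\tilde{k}+l)$ both depend on $p$ and must be controlled simultaneously — is the main obstacle, and is what the statement's ``readily shown'' conceals.
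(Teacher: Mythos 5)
Your proposal is correct and follows essentially the same route as the paper: the paper's Lemma \ref{generaltheorem} is exactly your Neyman--Pearson threshold-test parametrization of the tradeoff function, its Lemma \ref{CorollaryAddBinom} is your monotone-likelihood-ratio check for shifted binomials, and the worst-case neighbor is identified as the maximal shift $\Delta=l$ with the two orderings giving $\beta^{+}_{\phi,\inf}$ and $\beta^{-}_{\phi,\inf}$. The only cosmetic difference is at the step you flag as delicate: the paper establishes monotonicity in the shift by a direct finite-difference computation $J(\Delta+1,\tilde{k})-J(\Delta,\tilde{k})<0$ on each linear piece (your stochastic-dominance argument, with the test held fixed in the $\tilde{k}$-parametrization, gives the same conclusion), and the $p=1/2$ maximality claim is asserted without proof in the paper just as it is left open in your sketch.
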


Fig. \ref{suppimpact_N} shows the impact of $M$ when $l=8$, which confirms the result in \cite{agarwal2018cpsgd} that a larger $M$ provides better privacy protection (recall that given the same $\alpha$, a larger $\beta_{\alpha}$ indicates that the attacker makes mistakes in the hypothesis testing more likely and therefore corresponds to better privacy protection). {\color{black}Note that the output of Algorithm \ref{AddBinomialMechanism} $Z_{i} \in \{0,1,...,M+l\}$, which reqiures a communication overhead of $\log_{2}(M+l+1)$ bits.} We can readily convert $f(\alpha)$-DP to $(\epsilon,\delta)$-DP by utilizing Lemma \ref{Lemmaftoappro}.
\begin{wrapfigure}{r}{0.3\textwidth}
\vspace{0in}
  \begin{center}
    \includegraphics[width=0.3\textwidth]{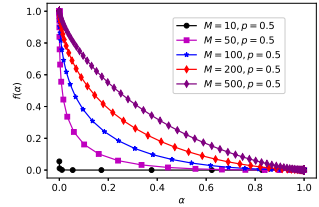}
  \end{center}
  \vspace{-0.2in}
  \caption{\footnotesize Impact of $M$ on Algorithm \ref{AddBinomialMechanism} with $l=8$.}
  \label{suppimpact_N}
\end{wrapfigure}
\begin{Remark}
The results derived in this work improve \cite{agarwal2018cpsgd} in two aspects: (1) Theorem 1 in \cite{agarwal2018cpsgd} requires $Mp(1-p) \geq \max(23\log(10d/\delta),2l/s) > \max(23\log(10),2l/s)$, in which $1/s \in \mathbb{N}$ is some scaling factor. When $p = 1/2$, it requires $M \geq 212$. More specifically, for $M = 500$, \cite{agarwal2018cpsgd} requires $\delta > 0.044$. Our results {\color{black}imply} that there exists some $(\epsilon,\delta)$ such that Algorithm \ref{AddBinomialMechanism} is $(\epsilon,\delta)$-DP as long as $M > l$. For $M = 500$, $\delta$ can be as small as $4.61\times10^{-136}$. (2) Our results are tight, in the sense that no relaxation is applied in our derivation. As an example, when $M = 500$ and $p=0.5$, Theorem 1 in \cite{agarwal2018cpsgd} gives $(3.18,0.044)$-DP while Theorem \ref{privacyofaddbinomial} in this paper yields $(1.67,0.039)$-DP.
\end{Remark}

\subsection{Binomial Mechanism}\label{binomalmechanism}
\begin{algorithm}
\caption{Binomial Mechanism \cite{chen2022poisson}}
\label{BinomialMechanism}
\begin{algorithmic}
\STATE \textbf{Input}: $c > 0$, $x_{i} \in [-c,c]$, $M \in \mathbb{N}$, $p_{i}(x_{i}) \in [p_{min},p_{max}]$
\STATE Privatization: $Z_{i} \triangleq Binom(M,p_{i}(x_{i}))$.
\end{algorithmic}
\end{algorithm}


In this subsection, we consider the binomial mechanism (i.e., Algorithm \ref{BinomialMechanism}). Different from Algorithm \ref{AddBinomialMechanism} that perturbs the data with noise following the binomial distribution with the same success probability, the binomial mechanism encodes the input $x_{i}$ into the success probability of the binomial distribution. We establish the privacy guarantee of Algorithm \ref{BinomialMechanism} as follows.

\begin{theorem}\label{privacyofbinomial}
The binomial mechanism in Algorithm \ref{BinomialMechanism} is $f^{bm}(\alpha)$-differentially private with 
\begin{equation}
f^{bm}(\alpha) = \min\{\beta_{\phi,\inf}^{+}(\alpha),\beta_{\phi,\inf}^{-}(\alpha)\},
\end{equation}
in which 
\begin{equation}\nonumber
\begin{split}
&\beta_{\phi,\inf}^{+}(\alpha) = 1 - [P(Y < k) + \gamma P(Y=k)]  = P(Y \geq k) + \frac{P(Y=k)P(X < k)}{P(X=k)} - \frac{P(Y=k)}{P(X=k)}\alpha,
\end{split}
\end{equation}
for $\alpha \in [P(X < k), P(X \leq k)]$ and $k \in \{0,1,2,\cdots,M\}$, where $X = Binom(M,p_{max})$ and $Y = Binom(M,p_{min})$, and
\begin{equation}\nonumber
\begin{split}
&\beta_{\phi,\inf}^{-}(\alpha) = 1 - [P(Y > k) + \gamma P(Y=k)] = P(Y \leq k) + \frac{P(Y=k)P(X > k)}{P(X=k)} - \frac{P(Y=k)}{P(X=k)}\alpha,
\end{split}
\end{equation}
for $\alpha \in [P(X > k), P(X \geq k)]$ and $k \in \{0,1,2,\cdots,M\}$, where $X = Binom(M,p_{min})$ and $Y = Binom(M,p_{max})$. When $p_{max} = 1-p_{min}$, we have $\beta_{\phi,\inf}^{+}(\alpha) = \beta_{\phi,\inf}^{-}(\alpha)$.
\end{theorem}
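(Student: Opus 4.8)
The plan is to reduce the $f$-DP to a two-parameter optimization over the success probabilities, characterize the tradeoff function for each fixed pair via the Neyman--Pearson lemma, and then locate the worst-case pair. Since the law of the output of Algorithm \ref{BinomialMechanism} depends on the input only through $p_i(x_i)\in[p_{min},p_{max}]$, for any neighboring datasets $S,S'$ the distributions $P=\mathcal{M}(S)$ and $Q=\mathcal{M}(S')$ equal $Binom(M,p)$ and $Binom(M,p')$ for some $p,p'\in[p_{min},p_{max}]$. Hence by Definition \ref{dpdefinition},
\begin{equation}\nonumber
f^{bm}(\alpha)=\inf_{p,p'\in[p_{min},p_{max}]}T\big(Binom(M,p),Binom(M,p')\big)(\alpha),
\end{equation}
and the two branches $\beta^{+}_{\phi,\inf}$ and $\beta^{-}_{\phi,\inf}$ will correspond to the two orderings $p>p'$ and $p<p'$, the outer $\min$ recording that the infimum is taken over both.

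First I would fix an ordered pair, say $p>p'$, and invoke the generic discrete-valued tradeoff function of Section \ref{tradeofffunctiongeneric}. By the Neyman--Pearson lemma the optimal level-$\alpha$ test thresholds the likelihood ratio $P(Y=k)/P(X=k)=(p'/p)^{k}\big((1-p')/(1-p)\big)^{M-k}$, which is strictly monotone in $k$ (the binomial family has monotone likelihood ratio); the test therefore rejects $H_0$ on a contiguous block of the smallest outcomes and randomizes with weight $\gamma$ on a single boundary value $k$. Writing $\alpha=P(X<k)+\gamma P(X=k)$ and reading off the corresponding type II error $\beta=P(Y>k)+(1-\gamma)P(Y=k)=1-[P(Y<k)+\gamma P(Y=k)]$ yields the piecewise-linear expression for $\beta^{+}_{\phi,\inf}$ on $\alpha\in[P(X<k),P(X\le k)]$, with slope $-P(Y=k)/P(X=k)$ and the stated vertices; reversing the ordering and thresholding the largest outcomes gives $\beta^{-}_{\phi,\inf}$.

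The crux is to show that the infimum over $p,p'$ is attained at the extremes $\{p_{min},p_{max}\}$, i.e. that spreading the two probabilities apart lowers the tradeoff function pointwise. I would argue this by the data-processing inequality for tradeoff functions, lifting the problem to the level of Bernoulli coordinates: since the count is a sufficient statistic for the binomial parameter, $T(Binom(M,p),Binom(M,p'))=T(\mathrm{Bern}(p)^{\otimes M},\mathrm{Bern}(p')^{\otimes M})$. For $p_{min}\le p'<p\le p_{max}$ I would construct a single binary channel $K$ with crossover parameters $a=P(1\mid 0)$ and $b=P(1\mid 1)$ solving $a+(b-a)p_{max}=p$ and $a+(b-a)p_{min}=p'$; a direct check shows $b-a=(p-p')/(p_{max}-p_{min})\in[0,1]$ and $a,b\in[0,1]$, so $K$ is a valid channel mapping $\mathrm{Bern}(p_{max})\mapsto\mathrm{Bern}(p)$ and $\mathrm{Bern}(p_{min})\mapsto\mathrm{Bern}(p')$ simultaneously. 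Applying $K^{\otimes M}$ coordinatewise and invoking post-processing monotonicity gives $T(Binom(M,p),Binom(M,p'))\ge T(Binom(M,p_{max}),Binom(M,p_{min}))$, so the extreme pair is the worst case; the analogous argument with the roles of $H_0,H_1$ swapped handles $\beta^{-}_{\phi,\inf}$. This monotonicity is the main obstacle, because the breakpoints of the competing piecewise-linear curves are not aligned and a termwise vertex comparison is awkward, whereas the channel construction sidesteps it entirely.

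Finally, for the symmetric case $p_{max}=1-p_{min}$ I would use the reflection $k\mapsto M-k$, which is a deterministic bijection and hence preserves the tradeoff function. It maps $Binom(M,p_{max})$ to $Binom(M,p_{min})$ and a lower-threshold (reject-small-$k$) test to an upper-threshold test, thereby identifying the experiment underlying $\beta^{+}_{\phi,\inf}$ with that underlying $\beta^{-}_{\phi,\inf}$; consequently $\beta^{+}_{\phi,\inf}(\alpha)=\beta^{-}_{\phi,\inf}(\alpha)$ and the outer $\min$ is redundant.
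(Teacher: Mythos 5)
Your proposal is correct, and the first half (reducing to an infimum over pairs $p,p'\in[p_{min},p_{max}]$, then reading off the piecewise-linear tradeoff curve from the Neyman--Pearson threshold test on the monotone likelihood ratio) is exactly what the paper does via Lemma \ref{generaltheorem}. Where you genuinely diverge is the worst-case step. The paper proves that the infimum sits at $(p_{max},p_{min})$ by brute force: it first differentiates $\beta^{+}_{\phi}$ with respect to $q$ (using monotonicity of the binomial CDF in the parameter) to pin $q=p_{min}$, and then handles the $p$-direction with a four-case comparison of the two piecewise-linear curves at their (misaligned) breakpoints $P(X<k),P(X\le k),P(\hat X<\hat k),P(\hat X\le\hat k)$, exploiting linearity in $\alpha$ to check only the endpoints. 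You replace both steps with a single Blackwell/post-processing argument: the binary channel with $a=\frac{p'p_{max}-p\,p_{min}}{p_{max}-p_{min}}$, $b-a=\frac{p-p'}{p_{max}-p_{min}}$ is indeed valid (one can verify $0\le a\le b\le 1$ precisely because $p_{min}\le p'<p\le p_{max}$), it simultaneously maps $\mathrm{Bern}(p_{max})\mapsto\mathrm{Bern}(p)$ and $\mathrm{Bern}(p_{min})\mapsto\mathrm{Bern}(p')$, and sufficiency of the count lets you pass between the product and binomial experiments without loss; monotonicity of $T$ under post-processing then gives the pointwise inequality in one line. (Equivalently, the channel can be applied directly at the binomial level by rethinning successes with probability $b$ and failures with probability $a$, avoiding the lift entirely.) Your route is shorter, sidesteps the delicate vertex bookkeeping, and generalizes immediately to other one-parameter families simulable by a common channel (e.g., the Poisson binomial mechanism in the appendix); what the paper's calculation buys in exchange is self-containedness and the explicit per-segment expressions that it reuses downstream. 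Your reflection argument $k\mapsto M-k$ for the symmetric case $p_{max}=1-p_{min}$ is also correct and actually supplies a justification the paper only asserts.
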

\begin{wrapfigure}{r}{0.3\textwidth}
\vspace{-0.15in}
  \begin{center}
    \includegraphics[width=0.3\textwidth]{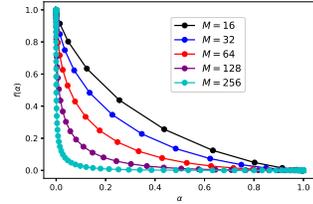}
  \end{center}
  \vspace{-0.2in}
  \caption{\footnotesize Impact of $M$ on Algorithm \ref{BinomialMechanism}.}
  \label{suppimpactBinomial_N}
\end{wrapfigure}
\begin{Remark}[\textbf{Comparison to \cite{chen2022poisson}}]
The binomial mechanism is part of the Poisson binomial mechanism proposed in \cite{chen2022poisson}. More specifically, in \cite{chen2022poisson}, each user $i$ shares the output of the binomial mechanism $Z_{i}$ with the server, in which $p_{i}(x_{i}) = \frac{1}{2}+\frac{\theta}{c}x_{i}$ and $\theta$ is some design parameter. It can be readily verified that $p_{max} = 1-p_{min}$ in this case. The server then aggregates the result through $\Bar{x} = \frac{c}{MN\theta}(\sum_{i\in \mathcal{N}}Z_{i} - \frac{MN}{2})$. \cite{chen2022poisson} requires secure aggregation and considers the privacy leakage of releasing $\Bar{x}$, while we complement it by showing the LDP, i.e., the privacy leakage of releasing $Z_{i}$ for each user. In addition, we eliminate the constraint $\theta \in [0,\frac{1}{4}]$, and the results hold for any selection of $p_{i}(x_{i})$. Moreover, the privacy guarantees in Theorem \ref{privacyofbinomial} are tight since no relaxation is involved. {\color{black}Fig. \ref{suppimpactBinomial_N} shows the impact of $M$ on the privacy guarantee. In contrast to binomial noise, the privacy of the binomial mechanisms improves as $M$ (and equivalently communication overhead) decreases, which implies that it is more suitable for communication-constrained scenarios.} We also derive the $f$-DP of the Poisson binomial mechanism, which are presented in Section \ref{supppbm} in the supplementary material.
\end{Remark}

In the following, we present two existing compressors that are special cases of the binomial mechanism. 
\begin{Example}
We first consider the following stochastic sign compressor proposed in \cite{jin2020stochastic}.
\begin{Definition}[\textbf{Two-Level Stochastic Compressor} \cite{jin2020stochastic}]\label{TwoLevelSG}
For any given $x\in[-c,c]$, the compressor $sto\text{-}sign$ outputs 
\begin{equation}
sto\text{-}sign(x,A) =
\begin{cases}
\hfill 1, \hfill \text{with probability $\frac{A+x}{2A}$},\\
\hfill -1, \hfill \text{with probability $\frac{A-x}{2A}$},\\
\end{cases}
\end{equation}
where $A > c$ is the design parameter that controls the level of stochasticity.
\end{Definition}

With a slight modification (i.e., mapping the output space from $\{0,1\}$ to $\{-1,1\}$), $sto\text{-}sign(x,A)$ can be understood as a special case of the binomial mechanism with $M = 1$ and $p_{i}(x_{i}) = \frac{A+x_{i}}{2A}$. In this case, we have $p_{max} = \frac{A+c}{2A}$ and $p_{min} = \frac{A-c}{2A}$. Applying the results in Theorem \ref{privacyofbinomial} yields
\begin{equation}\label{fsotchasiticsign2}
\begin{split}
f^{sto\text{-}sign}(\alpha) &= \beta_{\phi,\inf}^{+}(\alpha) = \beta_{\phi,\inf}^{-}(\alpha) =
\begin{cases}
\hfill 1 - \frac{A+c}{A-c}\alpha, \hfill ~~\text{for $\alpha \in [0,\frac{A+c}{2A}]$},\\
\hfill \frac{A-c}{A+c} - \frac{A-c}{A+c}\alpha, \hfill ~~\text{for $\alpha \in [\frac{A+c}{2A},1]$}.\\
\end{cases}    
\end{split}
\end{equation}

Combining (\ref{fsotchasiticsign2}) with (\ref{ftodp}) suggests that the $sto\text{-}sign$ compressor ensures $(\ln(\frac{A+c}{A-c}),0)$-DP.
\end{Example}

\begin{Example}
The second sign-based compressor that we examine is $CLDP_{\infty}(\cdot)$ \cite{girgis2021shuffled}.
\begin{Definition}[\textbf{$CLDP_{\infty}(\cdot)$} \cite{girgis2021shuffled}]\label{CLDPcompressor}
For any given $x\in[-c,c]$, the compressor $CLDP_{\infty}(\cdot)$ outputs $CLDP_{\infty}(\epsilon)$, which is given by
\begin{equation}
CLDP_{\infty}(\epsilon) =
\begin{cases}
\hfill +1, \hfill \text{with probability $\frac{1}{2}+\frac{x}{2c}\frac{e^{\epsilon}-1}{e^{\epsilon}+1}$},\\
\hfill -1, \hfill \text{with probability $\frac{1}{2}-\frac{x}{2c}\frac{e^{\epsilon}-1}{e^{\epsilon}+1}$}.\\
\end{cases}
\end{equation}
\end{Definition}

$CLDP_{\infty}(\epsilon)$ can be understood as a special case of $sto\text{-}sign(x,A)$ with $A = \frac{c(e^{\epsilon}+1)}{e^{\epsilon}-1}$. In this case, according to (\ref{fsotchasiticsign2}), we have
\begin{equation}
f^{CLDP_{\infty}}(\alpha) = 
\begin{cases}
\hfill 1 - e^{\epsilon}\alpha, \hfill ~~\text{for $\alpha \in [0,\frac{A+c}{2A}]$},\\
\hfill e^{-\epsilon}(1-\alpha), \hfill ~~\text{for $\alpha \in [\frac{A+c}{2A},1]$}.\\
\end{cases}
\end{equation}
Combining the above result with (\ref{ftodp}) suggests that $CLDP_{\infty}(\epsilon)$ ensures $(\epsilon,0)$-DP, which recovers the result in \cite{girgis2021shuffled}. It is worth mentioning that $CLDP_{\infty}(\epsilon)$ can be understood as the composition of $sto\text{-}sign$ with $A=c$ followed by the randomized response mechanism \cite{kairouz2016extremal}, and is equivalent to the one-dimensional case of the compressor in \cite{chen2020breaking}. {\color{black}Moreover, the one-dimensional case of the schemes in \cite{nguyen2016collecting,wang2019locally} can also be understood as special cases of $sto\text{-}sign$.}
\end{Example}

\section{The Proposed Ternary Compressor}\label{ternarycompressor}
The output of the binomial mechanism with $M=1$ lies in the set $\{0,1\}$, which coincides with the sign-based compressor. {\color{black}In this section, we extend the analysis to the ternary case, which can be understood as a combination of sign-based quantization and sparsification (when the output takes value 0, no transmission is needed since it does not contain any information) and leads to improved communication efficiency. More specifically, we propose the following ternary compressor.}
\begin{Definition}[\textbf{Ternary Stochastic Compressor}]\label{TernaryCompressor}
For any given $x\in[-c,c]$, the compressor $ternary$ outputs $ternary(x,A,B)$, which is given by
\begin{equation}\label{ternprob}
ternary(x,A,B) =
\begin{cases}
\hfill 1, \hfill \text{with probability $\frac{A+x}{2B}$},\\
\hfill 0, \hfill \text{with probability $1-\frac{A}{B}$},\\
\hfill -1, \hfill \text{with probability $\frac{A-x}{2B}$},\\
\end{cases}
\end{equation}
where $B > A > c$ are the design parameters that control the level of sparsity. 
\end{Definition}
For the ternary stochastic compressor in Definition \ref{TernaryCompressor}, we establish its privacy guarantee as follows.
\begin{wrapfigure}{r}{0.3\textwidth}
  \begin{center}
    \includegraphics[width=0.3\textwidth]{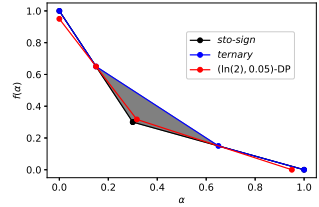}
  \end{center}
  \vspace{-0.2in}
  \caption{\footnotesize Sparsification improves privacy.}
  \label{sparsificationimproveprivacy}
  \vspace{-0.3in}
\end{wrapfigure}
\begin{theorem}\label{privacyofternarytheorem}
The ternary stochastic compressor is $f^{ternary}(\alpha)$-differentially private with 
\begin{equation}\label{mainprivacyofternary}
\begin{split}
&f^{ternary}(\alpha) = 
\begin{cases}
\hfill 1 - \frac{A+c}{A-c}\alpha, \hfill ~~\text{for $\alpha \in [0,\frac{A-c}{2B}]$},\\
\hfill 1 - \frac{c}{B} - \alpha, \hfill ~~\text{for $\alpha \in [\frac{A-c}{2B}, 1-\frac{A+c}{2B}]$},\\
\hfill \frac{A-c}{A+c} - \frac{A-c}{A+c}\alpha, \hfill ~~\text{for $\alpha \in [1-\frac{A+c}{2B},1]$}.\\ 
\end{cases}
\end{split}
\end{equation}
\end{theorem}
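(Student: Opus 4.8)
The plan is to follow the two-step framework of Section~\ref{resultsscalar}: first obtain the tradeoff function for a fixed pair of neighboring inputs, then take the infimum over all such pairs. In the scalar setting a pair of neighboring datasets corresponds to two inputs $x,x' \in [-c,c]$ inducing distributions $P = ternary(x,A,B)$ and $Q = ternary(x',A,B)$ on the three-point output space $\{-1,0,1\}$. The decisive structural observation is that the outcome $0$ occurs with probability $1-\frac{A}{B}$ under both $P$ and $Q$, independent of the input, so its likelihood ratio $q/p$ equals $1$ exactly, whereas the outcomes $+1$ and $-1$ carry ratios $\frac{A+x'}{A+x}$ and $\frac{A-x'}{A-x}$. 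Assuming without loss of generality that $x > x'$, these two ratios straddle $1$, and the Neyman--Pearson optimal rejection rule (equivalently, the general closed-form tradeoff function for discrete-valued mechanisms derived in Section~\ref{tradeofffunctiongeneric}) simply thresholds on the ordered likelihood ratios.

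First I would compute the tradeoff function explicitly for a fixed pair. Ordering the three outcomes by decreasing $q/p$, the optimal test rejects $H_{0}$ greedily, which produces a piecewise-linear curve with three segments. The first segment, spending its type~I budget on the outcome of largest ratio, has slope $-q_{\mathrm{top}}/p_{\mathrm{top}}$; the middle segment, where the test rejects on the input-independent outcome $0$, necessarily has slope $-1$ because $p_{0}=q_{0}$ forces $d\beta = -d\alpha$; and the last segment recovers the reciprocal slope. The breakpoints sit at the cumulative type~I masses of the extreme outcomes.

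The key remaining step, which I expect to be the main obstacle, is to identify the infimum of this family of tradeoff functions over all neighboring pairs. I would argue that both the slopes and the breakpoints are monotone in the separation of the inputs: spreading $x$ and $x'$ apart drives the extreme likelihood ratios $\frac{A-x'}{A-x}$ and $\frac{A+x'}{A+x}$ further from $1$ (steeper outer slopes) while shrinking the $\alpha$-mass assigned to the extreme outcomes, so the curve is lowered pointwise. Hence the infimum is attained at maximal separation $x=c$, $x'=-c$, for which $p_{+1}=q_{-1}=\frac{A+c}{2B}$ and $p_{-1}=q_{+1}=\frac{A-c}{2B}$. At this extreme configuration the map $H_{0}\leftrightarrow H_{1}$ is a symmetry, so the two candidate orderings give the same curve and no outer minimum over a $\beta^{+}$ and a $\beta^{-}$ branch is required; the answer is a single function.

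Finally, substituting $p_{+1}=\frac{A+c}{2B}$, $p_{0}=1-\frac{A}{B}$, $p_{-1}=\frac{A-c}{2B}$ into the three-segment formula yields outer slopes $-\frac{A+c}{A-c}$ and $-\frac{A-c}{A+c}$, the flat middle slope $-1$ with intercept $1-\frac{c}{B}$, and the breakpoints $\frac{A-c}{2B}$ and $1-\frac{A+c}{2B}$, which is exactly $(\ref{mainprivacyofternary})$. The one subtlety to check in this last computation is that the middle segment is non-degenerate and joins continuously to the outer pieces; this follows from $B>A>c$, since $\frac{A-c}{2B} < 1-\frac{A+c}{2B}$ is equivalent to $\frac{A}{B}<1$.
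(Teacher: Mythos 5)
Your proposal is correct and follows essentially the same route as the paper's proof: both compute the three likelihood ratios (with the input-independent outcome $0$ having ratio exactly $1$, which produces the slope $-1$ middle segment), apply the Neyman--Pearson/generic discrete tradeoff lemma to get the three-piece piecewise-linear curve, and locate the infimum at the maximally separated pair $x=c$, $x'=-c$, where symmetry collapses the $\beta^{+}$ and $\beta^{-}$ branches into a single function. The only cosmetic difference is that the paper phrases the extremization in terms of $p_{\min}$ and $p_{\max}$ for a generic ternary compressor rather than directly in terms of the input separation.
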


\begin{Remark}[Privacy amplification by sparsification]
It can be observed from (\ref{fsotchasiticsign2}) and (\ref{mainprivacyofternary}) that $f^{ternary}(\alpha) > f^{sto\text{-}sign}$ when $\alpha \in [\frac{A-c}{2B}, 1-\frac{A+c}{2B}]$, and $f^{ternary}(\alpha) = f^{sto\text{-}sign}$, otherwise. Fig. \ref{sparsificationimproveprivacy} shows $f^{ternary}(\alpha)$ and $f^{sto\text{-}sign}$ for $c=0.1, A=0.25, B=0.5$, and the shaded gray area corresponds to the improvement in privacy. {\color{black}It can be observed that communication efficiency and privacy are improved simultaneously.} It is worth mentioning that, if we convert the privacy guarantees to $(\epsilon,0)$-DP, we have $\epsilon = \ln(\frac{7}{3})$ for both compressors. However, the ternary compressor ensures $(\ln(2),0.05)$-DP (i.e., $f^{ternary}(\alpha) \geq \max\{0,0.95-2\alpha,0.5(0.95-\alpha)\}$) while the $sto\text{-}sign$ compressor does not. We note that for the same $A$, as $B$ increases (i.e., communication cost decreases), $f^{ternary}(\alpha)$ approaches $f(\alpha)=1-\alpha$ (which corresponds to perfect privacy).
\end{Remark}

In the following, we present a special case of the proposed ternary stochastic compressor.
\begin{Example}
The ternary-based compressor proposed in \cite{wen2017terngrad} is formally defined as follows.
\begin{Definition}[\textbf{$ternarize(\cdot)$} \cite{wen2017terngrad}]\label{ternarizecompressor}
For any given $x\in[-c,c]$, the compressor $ternarize(\cdot)$ outputs $ternarize(x,B) = sign(x)$ with probability $|x|/B$ and $ternarize(x,B) = 0$ otherwise,
in which $B > c$ is the design parameter.
\end{Definition}
$ternarize(x,B)$ can be understood as a special case of $ternary(x,A,B)$ with $A = |x|$. According to Theorem \ref{privacyofternarytheorem}, $f^{ternary}(\alpha) = 1 - \frac{c}{B} - \alpha$ for $\alpha \in [0, 1-\frac{c}{B}]$ and $f^{ternary}(\alpha) = 0$ for $\alpha \in [1-\frac{c}{B},1]$.
Combining the above result with (\ref{ftodp}), we have $\delta = \frac{c}{B}$ and $\epsilon = 0$, i.e., $ternarize(\cdot)$ provides perfect privacy protection ($\epsilon = 0$) with a violation probability of $\delta = \frac{c}{B}$. Specifically, the attacker cannot distinguish $x_{i}$ from $x'_{i}$ if the output of $ternarize(\cdot) = 0$ (perfect privacy protection), while no differential privacy is provided if the output of $ternarize(\cdot) \neq 0$ (violation of the privacy guarantee).
\end{Example}

\begin{Remark}
It is worth mentioning that, in \cite{wen2017terngrad}, the users transmit a scaled version of $ternarize(\cdot)$ and the scaling factor reveals the magnitude information of $x_{i}$. Therefore, the compressor in \cite{wen2017terngrad} is not differentially private. 
\end{Remark}

{\color{black}
\section{Breaking the Communication-Privacy-Accuracy Tradeoff}\label{commprivacyaccuracytradeoff}
In this section, we extend the results in Section \ref{ternarycompressor} to the vector case in two different approaches, followed by discussions on the three-way tradeoff between communication, privacy, and accuracy. The results in Section \ref{resultsscalar} can be extended similarly. Specifically, in the first approach, we derive the $\mu$-GDP in closed form, while introducing some loss in privacy guarantees. In the second approach, a tight approximation is presented. Given the results in Section \ref{ternarycompressor}, we can readily convert $f$-DP in the scalar case to Gaussian differential privacy in the vector case as follows.

\begin{theorem}\label{privacyvector}
Given a vector $x_{i} = [x_{i,1},x_{i,2},\cdots,x_{i,d}]$ with $|x_{i,j}| \leq c, \forall j$. Applying the ternary compressor to the $j$-th coordinate of $x_{i}$ independently yields $\mu$-GDP with $\mu = -2\Phi^{-1}(\frac{1}{1+(\frac{A+c}{A-c})^{d}})$.
\end{theorem}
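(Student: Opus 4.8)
The plan is to reduce the vector statement to a worst-case pair of inputs, certify a closed-form pure $(\epsilon,0)$-DP guarantee for the product mechanism, and then dominate that guarantee from below by a Gaussian tradeoff function. Writing $\rho = \frac{A+c}{A-c}$, I would first invoke Definition \ref{dpdefinition}: the $f$-DP of the coordinatewise mechanism is the infimum of $T(P,Q)$ over all neighboring inputs $x,x' \in [-c,c]^d$, where $P,Q$ are the product distributions induced by applying the ternary compressor independently to each coordinate. Because privatization is independent across coordinates, the likelihood ratio of any output $o \in \{-1,0,1\}^d$ factorizes, and each factor equals $\frac{A+x_j}{A+x'_j}$, $1$, or $\frac{A-x_j}{A-x'_j}$ according to whether $o_j$ is $1$, $0$, or $-1$. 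Every factor lies in $[\rho^{-1},\rho]$, with the extreme value $\rho$ attained only when $\{x_j,x'_j\}=\{c,-c\}$. Hence, for \emph{every} neighboring pair the absolute privacy loss is bounded by $d\ln\rho$, and this bound is attained by $x=c\mathbf{1}$, $x'=-c\mathbf{1}$.

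This shows the product mechanism is $(\ln\rho^d,0)$-DP, equivalently $T(P,Q) \geq f_{\ln\rho^d,0}$ uniformly over neighbors, where $f_{\epsilon,0}(\alpha)=\max\{0,1-e^{\epsilon}\alpha,e^{-\epsilon}(1-\alpha)\}$ is the pure-DP tradeoff function of Lemma \ref{Lemmaftoappro} with $\delta=0$. It then remains to exhibit the tightest Gaussian tradeoff function lying below $f_{\ln\rho^d,0}$. I would choose $\mu$ so that the fixed point of $f_\mu(\alpha)=\Phi(\Phi^{-1}(1-\alpha)-\mu)$ coincides with that of $f_{\ln\rho^d,0}$. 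Since $f_{\ln\rho^d,0}$ is a symmetric tradeoff function with fixed point $\alpha^{\ast}=\frac{1}{1+\rho^{d}}$, while the fixed point of $\mu$-GDP is $\Phi(-\mu/2)$, imposing $\Phi(-\mu/2)=\frac{1}{1+\rho^{d}}$ yields exactly $\mu=-2\Phi^{-1}\!\big(\tfrac{1}{1+\rho^{d}}\big)$, the claimed value.

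To finish I would verify $f_\mu \le f_{\ln\rho^d,0}$ pointwise on $[0,1]$. On $[0,\alpha^{\ast}]$ the pure-DP function equals the straight chord joining $(0,1)$ and $(\alpha^{\ast},\alpha^{\ast})$, which by construction are also two points on the graph of $f_\mu$; since $f_\mu$ is convex it lies below this chord, so $f_\mu \le f_{\ln\rho^d,0}$ there. Both $f_\mu$ and $f_{\ln\rho^d,0}$ are decreasing involutions, i.e. their graphs are symmetric about the line $y=x$, so the inequality on $[0,\alpha^{\ast}]$ reflects to $[\alpha^{\ast},1]$, giving domination on all of $[0,1]$. Chaining $T(P,Q)\ge f_{\ln\rho^d,0}\ge f_\mu$ and taking the infimum over neighbors establishes $\mu$-GDP. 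The main obstacle I anticipate is precisely this conversion step: one must argue that matching fixed points, together with convexity of $f_\mu$ and the involution symmetry of both functions, is enough to force \emph{global} domination rather than mere tangency at $\alpha^{\ast}$. This is also where the advertised \emph{loss} enters, since replacing the exact $d$-fold composition of the ternary $f$-DP from Theorem \ref{privacyofternarytheorem} by the coarser pure-DP bound $f_{\ln\rho^d,0}$ discards the additional privacy contributed by the zero outputs.
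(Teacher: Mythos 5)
Your proof is correct and reaches the theorem by the same overall route as the paper: first certify $(\epsilon,0)$-DP with $\epsilon = d\ln\frac{A+c}{A-c}$, then pass to $\mu$-GDP with $\mu = -2\Phi^{-1}\bigl(\frac{1}{1+e^{\epsilon}}\bigr)$. There are two local differences worth noting. First, you obtain the pure-DP bound by factorizing the likelihood ratio of the product mechanism and bounding each factor in $[\rho^{-1},\rho]$, whereas the paper goes through the scalar tradeoff function of Theorem \ref{privacyofternarytheorem}, relaxes it to $\max\{0,1-\rho\alpha,\rho^{-1}(1-\alpha)\}$, converts to $(\ln\rho,0)$-DP via Lemma \ref{Lemmaftoappro}, and then applies basic composition over the $d$ coordinates; the two arguments give the identical bound, and yours is arguably the more direct one. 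Second, and more substantively, the paper imports the pure-DP-to-GDP conversion as a black-box lemma (Lemma \ref{DPtoGDP}, cited from \cite{lee2022differentially,liu2022identification}), whereas you prove it: you match the fixed point $\Phi(-\mu/2)$ of $G_\mu$ to the fixed point $\frac{1}{1+e^{\epsilon}}$ of $f_{\epsilon,0}$, use convexity of $G_\mu$ to get domination by the chord on $[0,\alpha^{\ast}]$, and reflect through the involution symmetry of both tradeoff functions to extend the inequality to $[\alpha^{\ast},1]$. I checked this reflection step and it is sound: for $\alpha\ge\alpha^{\ast}$, setting $\beta=f(\alpha)\le\alpha^{\ast}$ and applying the decreasing involution $g=G_\mu$ to $g(\beta)\le f(\beta)=\alpha$ yields $f(\alpha)=\beta\ge g(\alpha)$. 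So your version is self-contained where the paper's is not, at the cost of a slightly longer argument; both correctly concede the same loss relative to the exact $d$-fold composition, which the paper recovers separately in Theorem \ref{cltternary}.
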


\begin{Remark}\label{remark1vector}
Note that $||x_{i}||_{2} \leq c$ is a sufficient condition for $|x_{i,j}| \leq c, \forall j$. In the proof of Theorem \ref{privacyvector}, we first convert $f^{ternary}(\alpha)$-DP to $(\epsilon,0)$-DP for the scalar case, and then obtain $(d\epsilon,0)$-DP for the $d$-dimensional case, followed by the conversion to GDP. One may notice that some loss in privacy guarantee is introduced since the extreme case $|x_{i,j}| = c, \forall j$ actually violates the condition $||x_{i}||_{2} \leq c$. To address this issue, following a similar method in \cite{chen2020breaking,safaryan2020uncertainty,chen2022poisson}, one may introduce Kashin's representation to transform the $l_{2}$ geometry of the data into the $l_{\infty}$ geometry. More specifically, \cite{lyubarskii2010uncertainty} shows that for $D > d$, there exists a tight frame $U$ such that for any $x \in \mathbb{R}^{d}$, one can always represent each $x_{i}$ with $y_{i} \in [-\gamma_{0}/\sqrt{d},-\gamma_{0}/\sqrt{d}]^D$ for some $\gamma_{0}$ and $x_{i} = Uy_{i}$.
\end{Remark}

In Theorem \ref{privacyvector}, some loss in privacy guarantees is introduced when we convert $f$-DP to $\mu$-GDP. In fact, since each coordinate of the vector is processed independently, the extension from the scalar case to the $d$-dimensional case may be understood as the $d$-fold composition of the mechanism in the scalar case. The composed result can be well approximated or numerically obtained via the central limit theorem for $f$-DP in \cite{dong2021gaussian} or the Edgeworth expansion in \cite{zheng2020sharp}. In the following, we present the result for the ternary compressor by utilizing the central limit theorem for $f$-DP.

\begin{theorem}\label{cltternary}
For a vector $x_{i} = [x_{i,1},x_{i,2},\cdots,x_{i,d}]$ with $|x_{i,j}| \leq c, \forall j$, the ternary compressor with $B\geq A>c$ is $f^{ternary}(\alpha)$-DP with
\begin{equation}
\begin{split}
G_{\mu}(\alpha+\gamma)-\gamma \leq f^{ternary}(\alpha) \leq G_{\mu}(\alpha-\gamma)+\gamma,~~
\end{split}
\end{equation}
in which 
\begin{equation}
\mu = \frac{2\sqrt{d}c}{\sqrt{AB-c^2}},~~\gamma = \frac{0.56\left[\frac{A-c}{2B}\left|1+\frac{c}{B}\right|^3+\frac{A+c}{2B}\left|1-\frac{c}{B}\right|^3+\left(1-\frac{A}{B}\right)\left|\frac{c}{B}\right|^{3}\right]}{(\frac{A}{B}-\frac{c^2}{B^2})^{3/2}d^{1/2}}.
\end{equation}
\end{theorem}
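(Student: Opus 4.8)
The plan is to view the $d$-coordinate mechanism as the $d$-fold composition of independent scalar ternary compressors and to invoke the Berry--Esseen form of the central limit theorem for $f$-DP from \cite{dong2021gaussian}, verifying that its Gaussian parameter and approximation error reproduce the claimed $\mu$ and $\gamma$. The first step is to reduce to the worst-case pair of neighboring inputs. Because the coordinates are perturbed independently, the tradeoff function of the joint mechanism is the tensor product of the per-coordinate tradeoff functions, and this operation is monotone; combined with the scalar extremality underlying Theorem \ref{privacyofternarytheorem}, where the endpoints $\pm c$ induce $p_{max}=\frac{A+c}{2B}$ and $p_{min}=\frac{A-c}{2B}$ and give the smallest tradeoff function, the infimum over neighbors is attained at the antipodal inputs $x=(c,\dots,c)$ and $x'=(-c,\dots,-c)$. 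Writing $P,Q$ for the induced product laws on $\{-1,0,1\}^{d}$, the reflection symmetry $+1\leftrightarrow -1$ makes the per-coordinate tradeoff function symmetric and the summands i.i.d., so the CLT hypotheses apply cleanly.

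The crucial observation for computing the CLT functionals is that the scalar privacy loss is an exact scalar multiple of the output. With $\ell=\log\frac{A+c}{A-c}$ one has $\log\frac{dP}{dQ}(y)=\ell\,y$ for $y\in\{-1,0,1\}$: the outcome $0$ is uninformative since $P(0)=Q(0)=1-\frac{A}{B}$, while the $\pm 1$ outcomes contribute $\pm\ell$. Consequently every moment functional entering the CLT equals $\ell$ (or $\ell^{3}$) times the corresponding moment of the output $Y$ under $Q$, and these powers of $\ell$ cancel against the variance normalization, so the final bounds are independent of $\ell$. Using $\mathbb{E}_{Q}[Y]=-\frac{c}{B}$ and $\mathrm{Var}_{Q}(Y)=\frac{AB-c^{2}}{B^{2}}$, the Gaussian parameter equals the signal-to-noise ratio of the aggregated statistic $S=\sum_{j}Y_{j}$, namely $\mu=\frac{2\sqrt{d}\,c}{\sqrt{AB-c^{2}}}$, where the factor $2$ reflects the symmetric mean gap $\pm c/B$ between $P$ and $Q$ and the $\sqrt{d}$ arises from aggregating $d$ independent coordinates.

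For the error term I would compute the third absolute central moment of the output, $\mathbb{E}_{Q}|Y+\frac{c}{B}|^{3}=\frac{A-c}{2B}|1+\frac{c}{B}|^{3}+\frac{A+c}{2B}|1-\frac{c}{B}|^{3}+(1-\frac{A}{B})|\frac{c}{B}|^{3}$; dividing by $\sqrt{d}\,\mathrm{Var}_{Q}(Y)^{3/2}=\sqrt{d}\,(\frac{A}{B}-\frac{c^{2}}{B^{2}})^{3/2}$ and multiplying by the Berry--Esseen constant $0.56$ gives exactly $\gamma$. The CLT then sandwiches the composed tradeoff function as $G_{\mu}(\alpha+\gamma)-\gamma\leq f^{ternary}(\alpha)\leq G_{\mu}(\alpha-\gamma)+\gamma$, which is the claim. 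I expect the main obstacle to be the careful matching of the abstract functionals of \cite{dong2021gaussian} to this concrete three-point distribution: establishing that the privacy-loss variable is exactly proportional to $Y$, confirming that the scale $\ell$ cancels in both $\mu$ and $\gamma$ so that the bounds do not depend on it, and rigorously justifying the reduction to antipodal worst-case inputs via monotonicity of tensorization together with the scalar extremality. The remaining moment evaluations are routine.
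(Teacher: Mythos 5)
Your proposal is correct and follows essentially the same route as the paper: treat the $d$ coordinates as a $d$-fold composition of the scalar worst-case tradeoff function from Theorem \ref{privacyofternarytheorem} and apply the $f$-DP central limit theorem of \cite{dong2021gaussian} with Berry--Esseen constant $0.56$. The only cosmetic difference is that you evaluate the functionals $\mathrm{kl}$, $\kappa_{2}$, $\bar{\kappa}_{3}$ as moments of the privacy-loss variable $\ell Y$ with $\ell=\log\frac{A+c}{A-c}$ (correctly observing that $\ell$ cancels), whereas the paper integrates $\log|f'|$ over $[0,1]$ directly; these are equivalent and yield the same $\mu$ and $\gamma$.
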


Given the above results, we investigate the communication-privacy-accuracy tradeoff and compare the proposed ternary stochastic compressor with the state-of-the-art method SQKR in \cite{chen2020breaking} and the classic Gaussian mechanism. According to the discussion in Remark \ref{remark1vector}, given the $l_{2}$ norm constraint, Kashin's representation can be applied to transform it into the $l_{\infty}$ geometry. Therefore, for ease of discussion, we consider the setting in which each user $i$ stores a vector $x_{i} = [x_{i,1},x_{i,2},\cdots,x_{i,d}]$ with $|x_{i,j}| \leq c = \frac{C}{\sqrt{d}}, \forall j$, and $||x_{i}||_{2} \leq C$. 

\textbf{Ternary Stochastic Compressor}: Let $Z_{i,j} = ternary(x_{i,j},A,B)$, then $\mathbb{E}[BZ_{i,j}] = x_{i,j}$ and $Var(BZ_{i,j}) = AB-x_{i,j}^{2}$. In this sense, applying the ternary stochastic compressor to each coordinate of $x_{i}$ independently yields an unbiased estimator with a variance of $ABd-||x_{i}||_{2}^{2}$. The privacy guarantee is given by Theorem \ref{cltternary}, and the communication overhead is $(\log_{2}(d)+1)\frac{A}{B}d$ bits in expectation.

\textbf{SQKR}: In SQKR, each user first quantizes each coordinate of $x_{i}$ to $\{-c,c\}$ with 1-bit stochastic quantization. Then, it samples $k$ coordinates (with replacement) and privatizes the $k$ bit message via the $2^{k}$ Random response mechanism with $\epsilon$-LDP \cite{suresh2017distributed}. The SQKR mechanism yields an unbiased estimator with a variance of $\frac{d}{k}(\frac{e^{\epsilon}+2^{k}-1}{e^{\epsilon}-1})^{2}C^{2} - ||x_{i}||_{2}^{2}$. The privacy guarantee is $\epsilon$-LDP, and the corresponding communication overhead is $(\log_{2}(d)+1)k$ bits. 

\textbf{Gaussian Mechanism}: We apply the Gaussian mechanism (i.e., adding independent zero-mean Gaussian noise $n_{i,j}\thicksim\mathcal{N}(0,\sigma^{2})$ to $x_{i,j}$), followed by a sparsification probability of $1-A/B$ as in $ternary(x_{i,j},A,B)$, which gives $Z_{i,j}^{Gauss} =\frac{B}{A}(x_{i,j}+n_{i,j})$ with probability $A/B$ and $Z_{i,j}^{Gauss}=0$, otherwise.
It can be observed that $\mathbb{E}[Z^{Gauss}_{i,j}] = x_{i,j}$ and $Var(Z^{Gauss}_{i,j}) = \frac{B}{A}\sigma^{2}+(\frac{B}{A}-1)x_{i,j}^{2}$. Therefore, the Gaussian mechanism yields an unbiased estimator with a variance of $\frac{B}{A}\sigma^{2}d+(\frac{B}{A}-1)||x_{i}||_{2}^{2}$. By utilizing the post-processing property, it can be shown that the above Gaussian mechanism is $\frac{2\sqrt{d}c}{\sigma}$-GDP \cite{dong2021gaussian}, and the communication overhead is $(\log_{2}(d)+32)\frac{A}{B}d$ bits in expectation.

\textbf{Discussion}: It can be observed that for SQKR, with a given privacy guarantee $\epsilon$-LDP, the variance (i.e., MSE) depends on $k$ (i.e., the communication overhead). When $e^{\epsilon} \ll 2^{k}$ (which corresponds to the high privacy regime), the variance grows rapidly as $k$ increases. For the proposed ternary stochastic compressor, it can be observed that both the privacy guarantee (in terms of $\mu$-GDP) and the variance depend on $AB$. Particularly, with a given privacy guarantee $\mu < \sqrt{4dr/(1-r)}$ for $r = A/B$, the variance is given by $(4d/\mu^{2}+1)C^{2}-||x_{i}||_{2}^{2}$, which remains the same regardless of the communication overhead. \textbf{In this sense, we essentially remove the dependency of accuracy on the communication overhead and therefore break the three-way tradeoff between communication overhead, privacy, and accuracy.}\footnote{In practice, utilizing the closed-form expressions of the MSE and the privacy guarantee $\mu$, one may readily obtain the corresponding $A$ and $B$ for any given privacy/MSE and communication cost specifications.} This is mainly realized by accounting for privacy amplification by sparsification. At a high level, when fewer coordinates are shared (which corresponds to a larger privacy amplification and a larger MSE), the ternary stochastic compressor introduces less ambiguity to each coordinate (which corresponds to worse privacy protection and a smaller MSE) such that both the privacy guarantee and the MSE remain the same. Since we use different differential privacy measures from \cite{chen2020breaking} {\color{black}(i.e., $\mu$-GDP in this work and $\epsilon$-DP in \cite{chen2020breaking})}, we focus on the comparison between the proposed ternary stochastic compressor and the Gaussian mechanism {\color{black}(which is order-optimal in most parameter regimes, see \cite{chen2023privacy})} in the following discussion and present the detailed comparison with SQKR in the experiments in Section \ref{experiments}. 

Let $AB = c^{2} + \sigma^{2}$, it can be observed that the $f$-DP guarantee of the ternary compressor approaches that of the Gaussian mechanism as $d$ increases, and the corresponding variance is given by $Var(BZ_{i,j}) = \sigma^{2} + c^{2} -x_{i,j}^{2}$. When $A=B$, i.e., no sparsification is applied, we have $Var(BZ_{i,j})-Var(Z^{Gauss}_{i,j}) = c^{2}-x_{i,j}^{2}$. Specifically, when $x_{i,j} \in \{-c,c\}, \forall 1\leq j \leq d$, the ternary compressor demonstrates the same $f$-DP privacy guarantee and variance as that for the Gaussian mechanism, i.e., \textbf{the improvement in communication efficiency is obtained for free {\color{black}(in the large $d$ regime)}}. When $B>A$, we have $Var(BZ_{i,j})-Var(Z^{Gauss}_{i,j}) = (1-\frac{B}{A})\sigma^{2} + c^{2}-\frac{B}{A}x_{i,j}^{2}$, and there exists some $B$ such that the ternary compressor outperforms the Gaussian mechanism in terms of both variance and communication efficiency. It is worth mentioning that the privacy guarantee of the Gaussian mechanism is derived by utilizing the post-processing property. We believe that sparsification brings improvement in privacy for the Gaussian mechanism as well, which is, however, beyond the scope of this paper. 

\textbf{Optimality:} It has been shown that, for $k$-bit unbiased compression mechanisms, there is a lower bound of $\Omega(C^2d/k)$ in MSE \cite{chen2022fundamental}. For the proposed ternary compressor, the MSE and the communication cost are given by $O(ABd)$ and $A(\log(d)+1)d/B$ bits, respectively. Let $k = A(\log(d)+1)d/B$, it achieves an MSE of $O(A^2d^2(\log(d)+1)/k)$. Since $A > c = C/\sqrt{d}$, the MSE of the ternary compressor is given by $O(C^{2}d(\log(d)+1)/k)$, which implies that it is order-optimal up to a factor of $\log(d)$. Note that the factor of $\log(d)$ is used to represent the indices of coordinates that are non-zero, which can be eliminated by allowing for shared randomness between the users and the server. 

\section{Experiments}\label{experiments}
In this section, we examine the performance of the proposed ternary compressor in the case of distributed mean estimation. We follow the set-up of \cite{chen2022poisson} and generate $N=1000$ user vectors with dimension $d=250$, i.e., $x_{1},...,x_{N}\in\mathbb{R}^{250}$. Each local vector has bounded $l_{2}$ and $l_{\infty}$ norms, i.e., $||x_{i}||_{2} \leq C = 1$ and $||x_{i}||_{\infty} \leq c = \frac{1}{\sqrt{d}}$.

\begin{figure}[t]
\centering
\begin{minipage}{0.32\linewidth}
  \centering
  \includegraphics[width=1\textwidth]{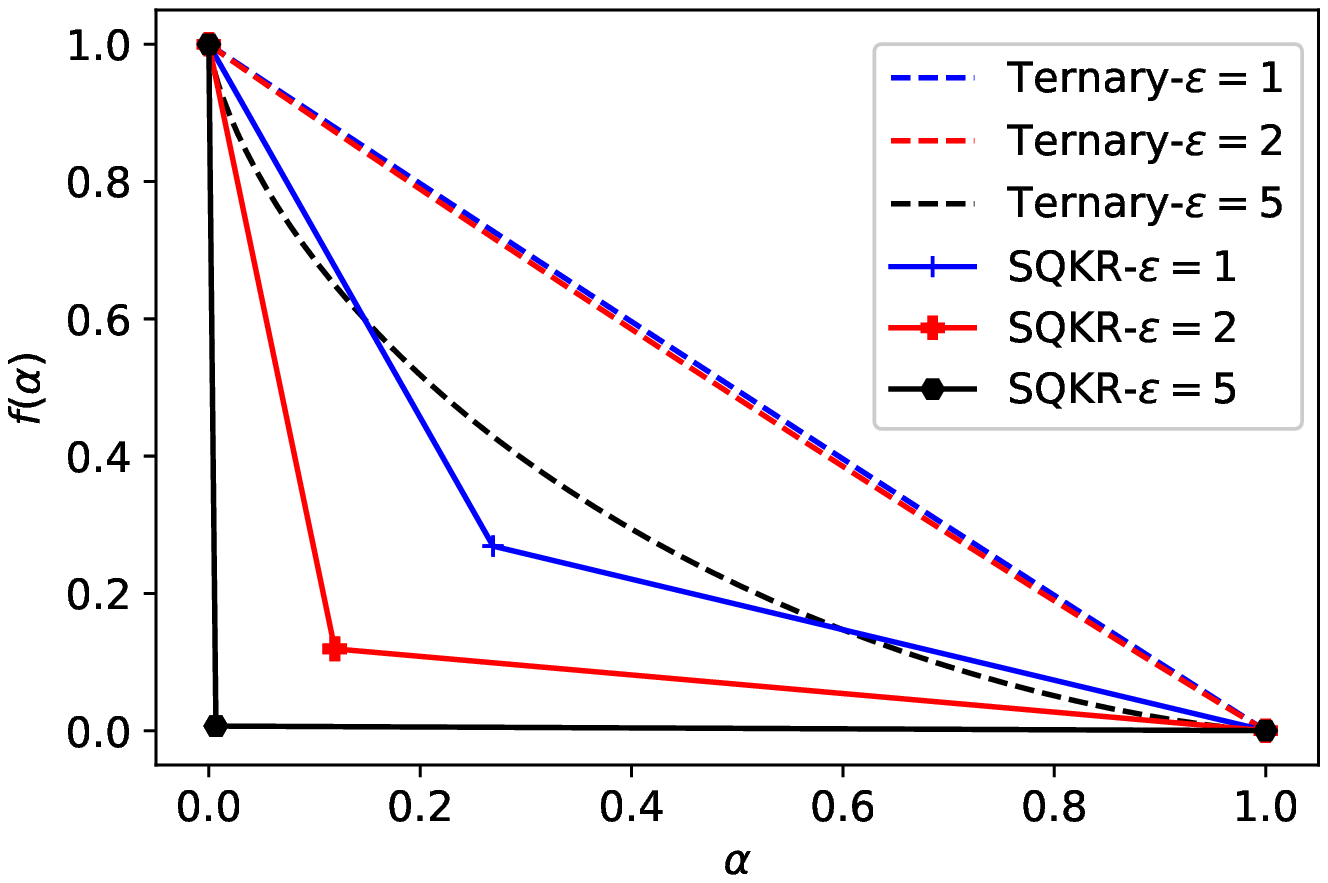}
\end{minipage}
\begin{minipage}{0.32\linewidth}
  \centering
  \includegraphics[width=1\textwidth]{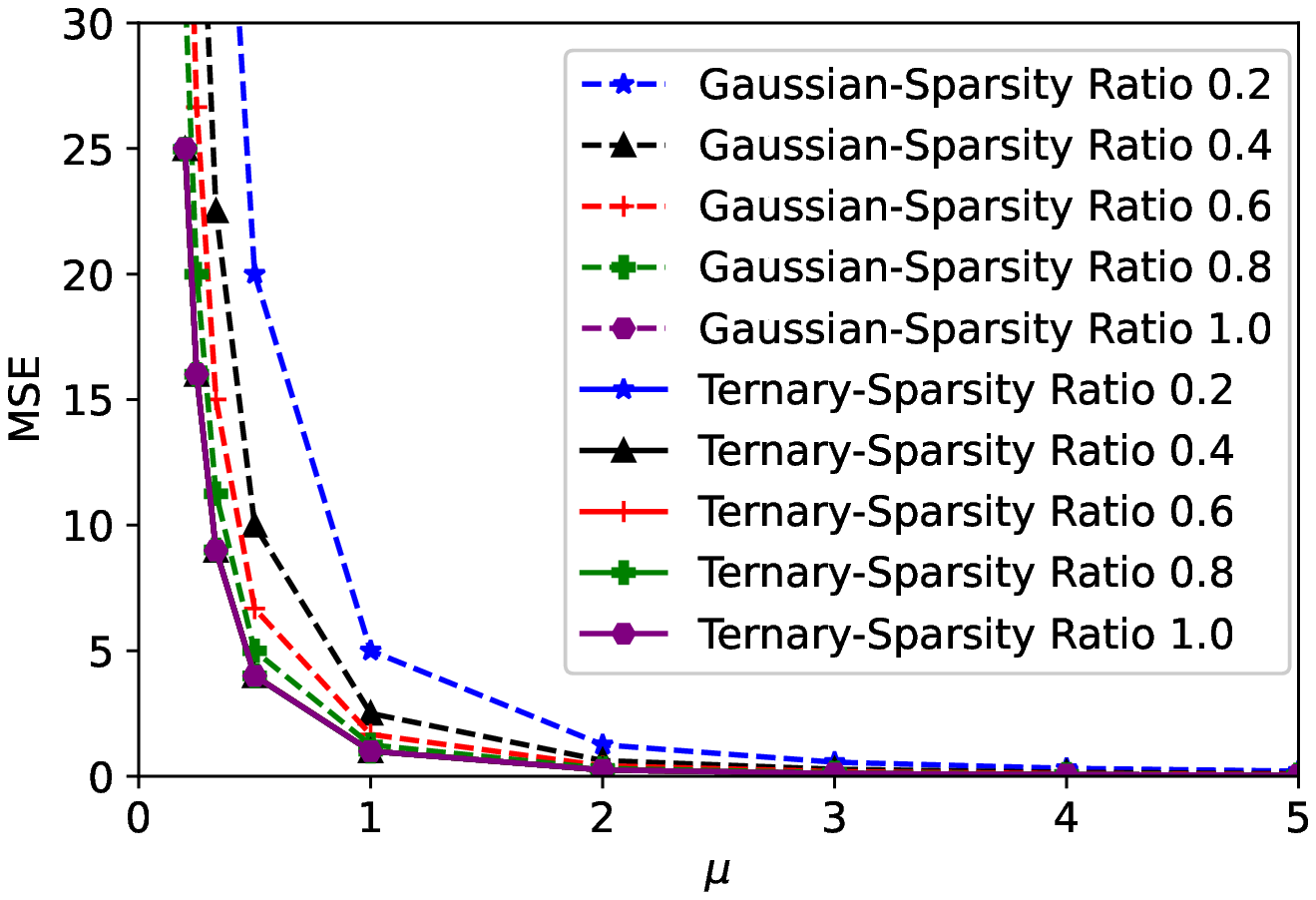}
\end{minipage}
\begin{minipage}{0.32\linewidth}
  \centering
  \includegraphics[width=1\textwidth]{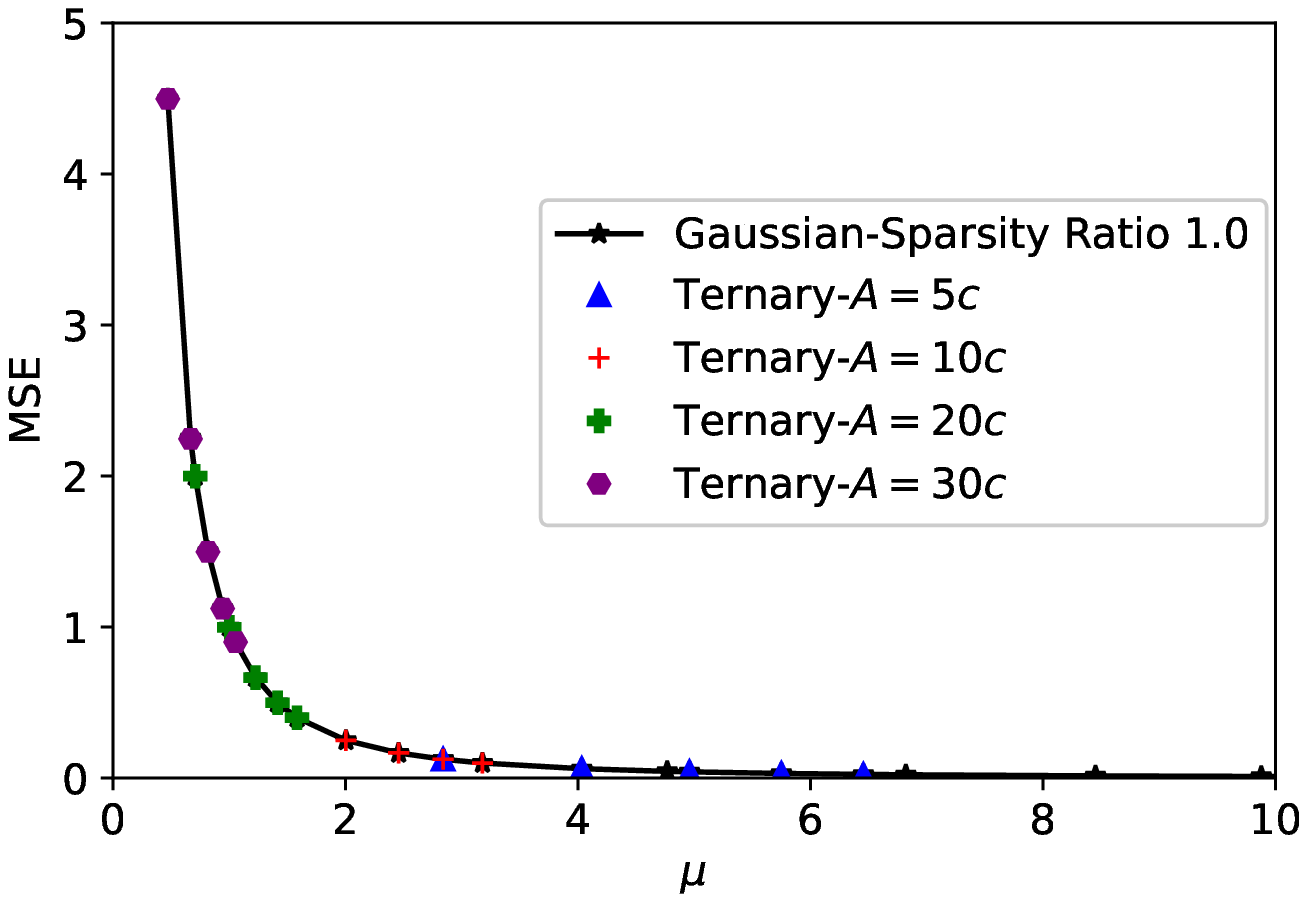}
\end{minipage}
\vspace{-0.1in}
\caption{For the left figure, we set $k=10$ and derive the corresponding variance for SQKR, based on which $A$ and $B$ for the ternary stochastic compressor are computed such that they have the same communication overhead and MSE in expectation. The middle and right figures show the tradeoff between $\mu$-GDP and MSE. For the middle figure, we set $\sigma \in \{\frac{2}{5},\frac{1}{2},\frac{2}{3},1,2,4,6,8,10\}$ for the Gaussian mechanism, given which $A$ and $B$ are computed such that $AB = c^{2}+\sigma^{2}$ and the sparsity ratio is $A/B$. For the right figure, we set $A \in \{5c,10c,20c,30c\}$ and $A/B \in \{0.2,0.4,0.6,0.8,1.0\}$, given which the corresponding $\sigma$'s are computed such that $AB = c^{2}+\sigma^{2}$.} 
\label{MSEDPTradeoff}
\vspace{-0.2in}
\end{figure}

Fig. \ref{MSEDPTradeoff} compares the proposed ternary stochastic compressor with SQKR and the Gaussian mechanism. More specifically, the left figure in Fig. \ref{MSEDPTradeoff} compares the privacy guarantees (in terms of the tradeoff between type I and type II error rates) of the ternary stochastic compressor and SQKR given the same communication overhead and MSE. It can be observed that the proposed ternary stochastic compressor outperforms SQKR in terms of privacy preservation, i.e., given the same type I error rate {\color{black}$\alpha$}, the type II error rate $\beta$ of the ternary stochastic compressor is significantly larger than that of SQKR, which implies better privacy protection. For example, for SQKR with $\epsilon=2$, given type I error rate $\alpha = 0.5$, the type II error rate of the attacker is around $f^{SQKR}(\alpha) = 0.068$, while the ternary compressor attains $f^{ternary}(\alpha)$ = 0.484. Given the same MSE and communication cost as that of SQKR with $\epsilon_{SQKR} = \{1,2,5\}$, if we translate the privacy guarantees of the ternary compressor from $f$-DP to $\epsilon$-DP via Lemma \ref{Lemmaftoappro} (we numerically test different $\epsilon$'s such that $f^{ternary}(\alpha) \geq \max\{0,1-\delta-e^{\epsilon}\alpha,e^{-\epsilon}(1-\delta-\alpha)\}$ holds for $\delta = 0$), we have $\epsilon_{ternary} = \{0.05,0.2,3.9\}$ for the ternary compressor, which demonstrates its effectiveness. The middle and right figures in Fig. \ref{MSEDPTradeoff} show the tradeoff between MSE and DP guarantees for the Gaussian mechanism and the proposed ternary compressor. Particularly, in the middle figure, the tradeoff curves for the ternary compressor with all the examined sparsity ratios overlap with that of the Gaussian mechanism with $A/B=1$ since they essentially have the same privacy guarantees, and the difference in MSE is negligible. For the Gaussian mechanism with $\frac{A}{B} < 1$, the MSE is larger due to sparsification, which validates our discussion in Section \ref{commprivacyaccuracytradeoff}. In the right figure, we examine the MSEs of the proposed ternary compressor with various $A$'s and $B$'s. It can be observed that the corresponding tradeoff between MSE and privacy guarantee matches that of the Gaussian mechanism well, which validates that the improvement in communication efficiency for the proposed ternary compressor is obtained for free. 
}



\section{Limitation}
The main results derived in this paper are for the scalar case, which are extended to the vector case by invoking the central limit theorem. In this case, the privacy guarantees derived in Theorem \ref{cltternary} are tight only in the large $d$ regime. Fortunately, in applications like distributed learning, $d$ corresponds to the model size (usually in the orders of millions for modern neural networks). Moreover, despite that the privacy-accuracy tradeoff of the proposed ternary compressor matches that of the Gaussian mechanism which is order-optimal in $(\epsilon,\delta)$-DP, the optimality of the proposed ternary compressor in the $f$-DP regime needs to be further established.

\section{Conclusion}
In this paper, we derived the privacy guarantees of discrete-valued mechanisms with finite output space in the lens of $f$-differential privacy, which covered various differentially private mechanisms and compression mechanisms as special cases. Through leveraging the privacy amplification by sparsification, a ternary compressor that achieves better accuracy-privacy-communication tradeoff than existing methods is proposed. It is expected that the proposed methods can find broader applications in the design of communication efficient and differentially private federated data analysis techniques. 

\begin{ack}
Richeng Jin was supported in part by the National Natural Science Foundation of China under Grant No. 62301487, in part by the Zhejiang Provincial Natural Science Foundation of China under Grant No. LQ23F010021, and in part by the Ng Teng Fong Charitable Foundation in the form of ZJU-SUTD IDEA Grant No. 188170-11102. Zhonggen Su was supported by the Fundamental Research Funds for the Central Universities Grants. Zhaoyang Zhang was supported in part by the National Natural Science Foundation of China under Grant No. U20A20158, in part by the National Key R\&D Program of China under Grant No. 2020YFB1807101, and in part by the Zhejiang Provincial Key R\&D Program under Grant No. 2023C01021. Huaiyu Dai was supported by the US National Science Foundation under Grant No. ECCS-2203214. The views expressed in this publication are those of the authors and do not necessarily reflect the views of the National Science Foundation.
\end{ack}

\bibliography{Ref-Richeng}
\bibliographystyle{IEEETran}

\newpage
\appendix
\onecolumn
\setcounter{Corollary}{0}
\setcounter{theorem}{0}

{\color{black}

\begin{center}
{\Large \textbf{Breaking the Communication-Privacy-Accuracy Tradeoff with $f$-Differential Privacy: Supplementary Material}}    
\end{center}

\section{Tradeoff Functions for a Generic Discrete-Valued Mechanism}\label{tradeofffunctiongeneric}
We consider a general randomization protocol $\mathcal{M}(\cdot)$ with discrete and finite output space. In this case, we can always find a one-to-one mapping between the range of $\mathcal{M}(\cdot)$ and a subset of $\mathbb{Z}$. With such consideration, we assume that the output of the randomization protocol is an integer, i.e., $\mathcal{M}(S) \in \mathbb{Z}_{\mathcal{M}} \subset \mathbb{Z}, \forall S$, without loss of generality. Given the randomization protocol and the hypothesis testing problem in (\ref{hypothesistestingproblem}), we derive its {\color{black}tradeoff function} as a function of the type I error rate in the following lemma.

\begin{Lemma}\label{generaltheorem}
For two neighboring datasets $S$ and $S'$, suppose that the range of the randomized mechanism $\mathcal{R}(\mathcal{M}(S)) \cup \mathcal{R}(\mathcal{M}(S')) = \mathbb{Z}_{\mathcal{M}}^{U} = [\mathcal{Z}^{U}_{L},\dots,\mathcal{Z}^{U}_{R}] \subset \mathbb{Z}$ and $\mathcal{R}(\mathcal{M}(S)) \cap \mathcal{R}(\mathcal{M}(S')) = \mathbb{Z}_{\mathcal{M}}^{I} = [\mathcal{Z}_{L}^{I},\dots,\mathcal{Z}_{R}^{I}] \subset \mathbb{Z}$. Let $X = \mathcal{M}(S)$ and $Y = \mathcal{M}(S')$. Then,

Case \textbf{(1)} If $\mathcal{M}(S) \in [\mathcal{Z}_{L}^{I},\mathcal{Z}_{L}^{I}+1,\dots,\mathcal{Z}^{U}_{R}]$, $\mathcal{M}(S') \in [\mathcal{Z}^{U}_{L},\mathcal{Z}^{U}_{L}+1,\dots,\mathcal{Z}_{R}^{I}]$, and $\frac{P(Y = k)}{P(X = k)}$ is a decreasing function of $k$ for $k \in \mathbb{Z}_{\mathcal{M}}^{I}$, the tradeoff function in Definition \ref{tradeofffunction} is given by
\begin{equation}\label{general1equation}
\begin{split}
&\beta_{\phi}^{+}(\alpha) = \begin{cases}
P(Y \geq k) + \frac{P(Y=k)P(X < k)}{P(X=k)} - \frac{P(Y=k)}{P(X=k)}\alpha,  \\
~~~~~~~~~~~~~~~\text{if $\alpha \in (P(X < k), P(X \leq k)]$, $k \in [\mathcal{Z}_{L}^{I},\mathcal{Z}_{R}^{I}]$}. \\
0, \hfill \text{if $\alpha \in (P(X < \mathcal{Z}_{R}^{I}+1),1]$.}
\end{cases}
\end{split}
\end{equation}

Case \textbf{(2)} If $\mathcal{M}(S) \in [\mathcal{Z}^{U}_{L},\mathcal{Z}^{U}_{L}+1,\cdots,\mathcal{Z}_{R}^{I}]$, $\mathcal{M}(S') \in [\mathcal{Z}_{L}^{I},\mathcal{Z}_{L}^{I}+1,\cdots,\mathcal{Z}^{U}_{R}]$, and $\frac{P(Y = k)}{P(X = k)}$ is an increasing function of $k$ for $k \in \mathbb{Z}_{\mathcal{M}}^{I}$, the tradeoff function in Definition \ref{tradeofffunction} is given by
\begin{equation}\label{general1equation2}
\begin{split}
&\beta_{\phi}^{-}(\alpha) = \begin{cases}
P(Y \leq k) + \frac{P(Y=k)P(X > k)}{P(X=k)} - \frac{P(Y=k)}{P(X=k)}\alpha,  \\
~~~~~~~~~~~~~~~\text{if $\alpha \in (P(X > k), P(X \geq k)]$, $k \in [\mathcal{Z}_{L}^{I},\mathcal{Z}_{R}^{I}]$}. \\
0, \hfill \text{if $\alpha \in (P(X > \mathcal{Z}_{L}^{I}-1),1]$.}
\end{cases}
\end{split}
\end{equation}
\end{Lemma}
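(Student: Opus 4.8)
The plan is to recognize that the tradeoff function in Definition \ref{tradeofffunction} is precisely the optimal type~II error rate of the Neyman--Pearson test, so the proof reduces to identifying the most powerful level-$\alpha$ rejection rule for distinguishing $X=\mathcal{M}(S)$ from $Y=\mathcal{M}(S')$ and then evaluating its power. I would carry out Case~(1) in full; Case~(2), leading to (\ref{general1equation2}), is identical after reversing the index ordering, since the increasing-likelihood-ratio hypothesis there plays the role the decreasing one plays in Case~(1), so it suffices to swap every ``$\le k$'' for ``$\ge k$'' throughout. Writing $\alpha_\phi = \sum_k \phi(k)P(X=k)$ and $\beta_\phi = \sum_k (1-\phi(k))P(Y=k)$, minimizing $\beta_\phi$ subject to $\alpha_\phi \le \alpha$ is equivalent to maximizing the power $\sum_k \phi(k)P(Y=k)$ under the same constraint, which is exactly the Neyman--Pearson problem. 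By the fundamental lemma, the optimum is attained by a randomized likelihood-ratio test that sets $\phi(k)=1$ where $P(Y=k)/P(X=k)$ is large, $\phi(k)=0$ where it is small, and randomizes on at most one threshold atom.

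Second, I would convert this likelihood-ratio threshold into an index threshold using the monotonicity hypothesis. On the non-overlap region $\{k < \mathcal{Z}_L^I\}$ we have $P(X=k)=0$ while $P(Y=k)>0$, so the ratio is $+\infty$; on $\{k > \mathcal{Z}_R^I\}$ we have $P(Y=k)=0$, so the ratio is $0$; and on the intersection $\mathbb{Z}_\mathcal{M}^I$ the ratio is decreasing in $k$ by assumption. Hence $P(Y=k)/P(X=k)$ is weakly decreasing across the entire combined support, so ``large ratio'' means ``small index.'' The most powerful level-$\alpha$ test therefore rejects on a lower set $\{j \le k-1\}$ and randomizes at a single index $k$; in particular it always rejects on the free region $\{k<\mathcal{Z}_L^I\}$ (zero type~I cost, positive power) and never rejects on $\{k>\mathcal{Z}_R^I\}$ (zero power).

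Third, for $\alpha \in (P(X<k), P(X\le k)]$ with $k \in [\mathcal{Z}_L^I,\mathcal{Z}_R^I]$, I would solve the type~I constraint $P(X<k)+\gamma P(X=k)=\alpha$ for the randomization weight $\gamma = \frac{\alpha - P(X<k)}{P(X=k)}$, compute the resulting type~II error $\beta_\phi = P(Y\ge k) - \gamma P(Y=k)$, and substitute $\gamma$ to recover the claimed expression in (\ref{general1equation}). The terminal piece follows because once $\alpha > P(X\le \mathcal{Z}_R^I)$ the budget already covers all of $X$'s mass lying on $Y$'s support, so the test can reject every atom of $Y$, driving the power to $1$ and hence $\beta_\phi$ to $0$. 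Checking that the intervals $(P(X<k),P(X\le k)]$ tile $(0,P(X\le\mathcal{Z}_R^I)]$ as $k$ ranges over the intersection (empty intervals occurring exactly when $P(X=k)=0$) confirms the function is well defined and, being a glued family of line segments meeting at the breakpoints $\alpha = P(X\le k)$, is continuous, decreasing, and convex, as any tradeoff function must be.

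The main obstacle I anticipate is not the algebra but the rigorous optimality argument in the presence of mismatched supports: one must invoke the full Neyman--Pearson fundamental lemma for randomized tests over discrete distributions, allowing $+\infty$ and $0$ likelihood ratios, to certify that the lower-set index test attains the infimum over all measurable $\phi$, rather than merely exhibiting it as one feasible rule. Care is also needed at the threshold atoms: where $P(X=k)=0$ the weight $\gamma$ would be ill defined---such indices must be absorbed into the free region $\{k<\mathcal{Z}_L^I\}$ or the zero-power region $\{k>\mathcal{Z}_R^I\}$, so that randomization only ever occurs at an atom with $P(X=k)>0$.
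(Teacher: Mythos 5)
Your proposal follows essentially the same route as the paper's proof: both reduce the tradeoff function to the Neyman--Pearson problem, use monotonicity of the likelihood ratio (with the ratio being $+\infty$ on $\{k<\mathcal{Z}_L^I\}$ and $0$ on $\{k>\mathcal{Z}_R^I\}$) to identify the optimal randomized threshold test, solve $P(X<k)+\gamma P(X=k)=\alpha$ for $\gamma$, and substitute into $\beta_\phi=P(Y\ge k)-\gamma P(Y=k)$ to obtain the piecewise-linear expression, treating Case~(2) symmetrically. Your additional attention to atoms with $P(X=k)=0$ and the tiling/convexity check are sound refinements but do not change the argument.
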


\begin{Remark}
It is assumed in Lemma \ref{generaltheorem} that $\frac{P(Y = k)}{P(X = k)}$ is a decreasing function (for part (1)) or an increasing function (for part (2)) of $k \in \mathbb{Z}_{\mathcal{M}}^{I}$, without loss of generality. In practice, thanks to the post-processing property of DP \cite{dong2021gaussian}, one can relabel the output of the mechanism to ensure that this condition holds and Lemma \ref{generaltheorem} can be adapted accordingly.
\end{Remark}

\begin{Remark}
\color{black}
We note that in Lemma \ref{generaltheorem}, both $X$ and $Y$ depend on both the randomized mechanism $\mathcal{M}(\cdot)$ and the neighboring datasets $S$ and $S'$. Therefore, the infimums of the tradeoff functions in (\ref{general1equation}) and (\ref{general1equation2}) are mechanism-specific, which should be analyzed individually. After identifying the neighboring datasets $S$ and $S'$ that minimize $\beta_{\phi}^{+}(\alpha)$ and $\beta_{\phi}^{-}(\alpha)$ for a mechanism $\mathcal{M}(\cdot)$ (which is highly non-trivial), we can obtain the distributions of $X$ and $Y$ in (\ref{general1equation}) and (\ref{general1equation2}) and derive the corresponding $f$-DP guarantees.
\end{Remark}

\begin{Remark}
Since $\beta^{+}_{\phi}(\alpha)$ is a piecewise function with decreasing slopes w.r.t $k$ {\color{black}(see, e.g., Fig. \ref{suppimpact_N})}, it can be readily shown that $\beta^{+}_{\phi}(\alpha) \geq \max\{P(Y\geq k)+\frac{P(Y = k)}{P(X = k)}P(X<k)-\frac{P(Y = k)}{P(X = k)}\alpha,0\}, \forall k \in \mathbb{Z}_{\mathcal{M}}^{I}$. As a result, utilizing Lemma \ref{Lemmaftoappro}, we may obtain different pairs of $(\epsilon,\delta)$ given different $k$'s.
\end{Remark}

\begin{Remark}
Although we assume a finite output space, a similar method can be applied to the mechanisms with an infinite range. Taking the discrete Gaussian noise \cite{canonne2020discrete} as an example, $\mathcal{M}(x) = x+V$ with $P(V = v) = \frac{e^{-v^2/2\sigma^2}}{\sum_{v \in \mathbb{Z}}e^{-v^{2}/2\sigma^2}}$. One may easily verify that $\frac{P(\mathcal{M}(x_{i})=k)}{P(\mathcal{M}(x'_{i})=k)}$ is a decreasing function of $k$ if $x'_{i} > x_{i}$ (and increasing otherwise). Then we can find some threshold $v$ for the rejection rule $\phi$ such that $\alpha_{\phi} = P(\mathcal{M}(x_{i}) \leq v) = \alpha$, and the corresponding $\beta_{\phi}(\alpha) = 1-P(\mathcal{M}(x'_{i}) \leq v)$. 
\end{Remark}

The key to proving Lemma \ref{generaltheorem} is finding the rejection rule $\phi$ such that $\beta_{\phi}(\alpha)$ is minimized for a pre-determined $\alpha \in [0,1]$. To this end, we utilize the Neyman-Pearson Lemma \cite{lehmann2005testing}, which states that for a given $\alpha$, the most powerful rejection rule is threshold-based, i.e., if the likelihood ratio $\frac{P(Y=k)}{P(X=k)}$ is larger than/equal to/smaller than a threshold $h$, $H_{0}$ is rejected with probability 1/$\gamma$/0. More specifically, since $X$ and $Y$ may have different ranges, we divide the discussion into two cases (i.e., Case (1) and Case (2) in Lemma \ref{generaltheorem}). The Neyman-Pearson Lemma \cite{lehmann2005testing} is given as follows.

\begin{Lemma}(Neyman-Pearson Lemma \cite{lehmann2005testing})\label{Neyman-Pearson}
Let $P$ and $Q$ be probability distributions on $\Omega$ with densities $p$ and $q$, respectively. For the hypothesis testing problem $H_{0}: P$ vs $H_{1}: Q$, a test $\phi:\Omega \rightarrow [0,1]$ is the most powerful test at level $\alpha$ if and only if there are two constants $h \in [0,+\infty]$ and $\gamma \in [0,1]$ such that $\phi$ has the form 
\begin{equation}\label{rejectionrule}
\phi(x) =
\begin{cases}
\hfill 1, \hfill \text{if $\frac{q(x)}{p(x)} > h$},\\
\hfill \gamma, \hfill \text{if $\frac{q(x)}{p(x)} = h$},\\
\hfill 0, \hfill \text{if $\frac{q(x)}{p(x)} < h$},\\
\end{cases}
\end{equation}
and $\mathbb{E}_{P}[\phi] = \alpha$. The rejection rule suggests that $H_{0}$ is rejected with a probability of $\phi(x)$ given the observation $x$.
\end{Lemma}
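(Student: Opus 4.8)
The plan is to prove both directions of this classical equivalence by first reducing ``most powerful'' to the maximization of the power $\mathbb{E}_{Q}[\phi]=\int\phi\,q\,d\mu$ subject to the size constraint $\mathbb{E}_{P}[\phi]=\int\phi\,p\,d\mu\le\alpha$, where $\mu$ is any common dominating measure (e.g.\ $\mu=P+Q$) and I adopt the convention $q(x)/p(x)=+\infty$ on $\{p=0,\,q>0\}$ so that the likelihood ratio is well defined everywhere. The engine of the whole argument is a single pointwise inequality, which I would establish first: for the threshold test $\phi$ of the form (\ref{rejectionrule}) and \emph{any} competing test $\psi$, one has $(\phi(x)-\psi(x))(q(x)-h\,p(x))\ge 0$ for all $x$. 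This is verified by the three-way case split dictated by (\ref{rejectionrule}): on $\{q/p>h\}$ we have $\phi=1\ge\psi$ and $q-hp>0$; on $\{q/p<h\}$ we have $\phi=0\le\psi$ and $q-hp<0$; and on $\{q/p=h\}$ the second factor vanishes. In every case the product is nonnegative.

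For the sufficiency direction, I would fix $\phi$ of the stated form with $\mathbb{E}_{P}[\phi]=\alpha$ and an arbitrary $\psi$ with $\mathbb{E}_{P}[\psi]\le\alpha$, integrate the pointwise inequality against $\mu$, and rearrange to get $\mathbb{E}_{Q}[\phi]-\mathbb{E}_{Q}[\psi]\ge h(\mathbb{E}_{P}[\phi]-\mathbb{E}_{P}[\psi])=h(\alpha-\mathbb{E}_{P}[\psi])\ge 0$, where the last step uses $h\ge 0$ together with the size constraint. Hence $\phi$ attains the maximal power and is most powerful. Before invoking this I would also settle existence of the constants $(h,\gamma)$ realizing size exactly $\alpha$, since the biconditional presumes such a test exists: letting $G(t)=P(q(X)/p(X)>t)$ under $H_{0}$, which is non-increasing and right-continuous with $G(t)\to 0$ as $t\to\infty$, I would take $h=\inf\{t:G(t)\le\alpha\}$ and choose $\gamma\in[0,1]$ so that $G(h)+\gamma\,P(q/p=h)=\alpha$, which is feasible precisely because the only discontinuity of the size as a function of the threshold is the atom at the level set $\{q/p=h\}$.

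For the necessity direction, I would assume $\psi$ is most powerful at level $\alpha$ and show it must take the form (\ref{rejectionrule}). By the sufficiency direction the constructed $\phi$ is itself most powerful, so $\mathbb{E}_{Q}[\psi]=\mathbb{E}_{Q}[\phi]$; feeding this equality back into the integrated inequality collapses the chain to equalities and forces $\int(\phi-\psi)(q-hp)\,d\mu=0$. Since the integrand is everywhere nonnegative, it vanishes $\mu$-almost everywhere, which means $\phi=\psi$ a.e.\ on $\{q/p\ne h\}$; on the level set $\{q/p=h\}$ the test value is unconstrained. Therefore $\psi$ agrees with a threshold test and has the asserted form.

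The main obstacle I anticipate is the necessity direction together with its boundary bookkeeping. One must argue that a most powerful test has size \emph{exactly} $\alpha$ (so that the $h(\alpha-\mathbb{E}_{P}[\psi])$ term genuinely vanishes when $h>0$), and one must treat the atom $\{q/p=h\}$ and the region $\{p=0\}$ carefully so that the likelihood-ratio convention and the randomization parameter $\gamma$ remain mutually consistent throughout. Once these measure-theoretic points are pinned down, the remainder is the elementary sign analysis of the pointwise inequality followed by a single integration.
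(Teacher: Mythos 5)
The paper offers no proof of this lemma at all: it is quoted as a classical result from \cite{lehmann2005testing} and used purely as a tool in the proof of Lemma \ref{generaltheorem}, so there is no in-paper argument to compare against. What you have reconstructed is, in substance, the standard textbook proof from precisely that reference: the pointwise inequality $(\phi(x)-\psi(x))(q(x)-h\,p(x))\ge 0$, integration to obtain sufficiency, the right-continuity argument producing $(h,\gamma)$ with size exactly $\alpha$, and the collapse-to-equality argument for necessity. Your outline is correct, and your existence construction ($h=\inf\{t:G(t)\le\alpha\}$, with $\gamma$ filling the atom at $\{q/p=h\}$) is exactly right, since $G(h)\le\alpha\le G(h^{-})=G(h)+P(q/p=h)$.

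Two of the boundary points you flag should actually be carried out rather than deferred, because one of them affects the truth of the statement as written. First, the case $h=+\infty$ (which arises when $\alpha=0$) cannot be pushed through the product inequality, since $q-hp$ is a $0\cdot\infty$ expression on $\{p=0\}$; it needs a separate one-line treatment. Second, and more importantly, the necessity direction as literally stated is false in the degenerate case where some test of size strictly less than $\alpha$ already has power $1$ (e.g., $P$ and $Q$ mutually singular: the test $\mathbf{1}\{p=0\}$ is most powerful at every level but has size $0$, so it cannot satisfy $\mathbb{E}_P[\phi]=\alpha$ for $\alpha>0$). Lehmann's own statement carries the caveat ``unless there exists a test of size $<\alpha$ with power $1$''; your proof needs either that caveat or the explicit mixing argument you allude to: if $\mathbb{E}_Q[\psi]<1$ and $\mathbb{E}_P[\psi]<\alpha$, then $\psi_{\varepsilon}=\psi+\varepsilon(1-\psi)$ has strictly larger power and size at most $\alpha$ for small $\varepsilon>0$, a contradiction, so a most powerful test with power below $1$ has size exactly $\alpha$. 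For the purposes of this paper none of this matters: the proof of Lemma \ref{generaltheorem} invokes only the sufficiency direction together with existence of $(h,\gamma)$ for each $\alpha$, applied to discrete distributions with finite support, where $h<\infty$ and your construction applies without complication.
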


Given Lemma \ref{Neyman-Pearson}, the problem is then reduced to finding the corresponding $h$ and $\gamma$ such that the type I error rate $\alpha_{\phi} = \alpha$. For part (1) (the results for part (2) can be shown similarly), we divide the range of $\alpha$ (i.e., $[0,1]$) into multiple segments, as shown in Fig. \ref{suppfig1}. To achieve $\alpha = 0$, we set $h=\infty$ and $\gamma = 1$, which suggests that the hypothesis $H_{0}$ is always rejected when $k < \mathcal{Z}_{L}^{I}$ and accepted otherwise. To achieve $\alpha \in (P(X < k), P(X \leq k)]$, for $k \in [\mathcal{Z}_{L}^{I},\mathcal{Z}_{R}^{I}]$, we set $h = \frac{P(Y=k)}{P(X=k)}$ and $\gamma = \frac{\alpha-P(X<k)}{P(X=k)}$. In this case, it can be shown that $\alpha_{\phi} = \alpha\in (P(X < k), P(X \leq k)]$. To achieve $\alpha \in (P(X < \mathcal{Z}_{R}^{I}+1),1]$, we set $h=0$, and $\gamma = \frac{\alpha-P(X <\mathcal{Z}_{R}^{I}+1)}{P(X > \mathcal{Z}_{R}^{I})}$. In this case, it can be shown that $\alpha_{\phi} = \alpha \in (P(X < \mathcal{Z}_{R}^{I}+1),1]$. The corresponding $\beta_{\phi}$ can be derived accordingly, which is given by (\ref{general1equation}). The complete proof is given below.

\begin{proof}
Given Lemma \ref{Neyman-Pearson}, the problem is reduced to finding the parameters $h$ and $\gamma$ in (\ref{rejectionrule}) such that $\mathbb{E}_{P}[\phi] = \alpha$, which can be proved as follows.

\textbf{Case (1)} We divide $\alpha \in [0,1]$ into $\mathcal{Z}^{U}_{R} - \mathcal{Z}_{L}^{I} + 1$ segments: $[P(X < \mathcal{Z}^{U}_{L}),P(X < \mathcal{Z}_{L}^{I})] \cup (P(X < \mathcal{Z}_{L}^{I}),P(X \leq \mathcal{Z}_{L}^{I})] \cup \cdots \cup (P(X < k),P(X \leq k)] \cup \cdots \cup (P(X < \mathcal{Z}^{U}_{R}),P(X \leq \mathcal{Z}^{U}_{R})]$, as shown in Fig. \ref{suppfig1}.

\begin{figure}[h]
\centering
\includegraphics[width=0.48\textwidth]{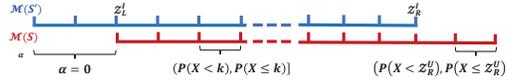}
\caption{\footnotesize{Dividing $\alpha$ into multiple segments for part (1).}}
\label{suppfig1}
\end{figure}

When $\alpha = P(X < \mathcal{Z}^{U}_{L}) = P(X < \mathcal{Z}_{L}^{I}) = 0$, we set $h = +\infty$. In this case, noticing that $\frac{P(Y=k)}{P(X=k)} = h$ for $k < \mathcal{Z}_{L}^{I}$, and $\frac{P(Y=k)}{P(X=k)} < h$ otherwise, we have
\begin{equation}
\mathbb{E}_{P}[\phi] = \gamma P(X < \mathcal{Z}_{L}^{I}) = 0 = \alpha,
\end{equation}
and 
\begin{equation}
\begin{split}
\beta_{\phi}^{+}(0) = 1-\mathbb{E}_{Q}[\phi] = 1 - \gamma P(Y < \mathcal{Z}_{L}^{I}).    
\end{split}
\end{equation}
The infimum is attained when $\gamma = 1$, which yields $\beta_{\phi}^{+}(0) = P(Y \geq \mathcal{Z}_{L}^{I})$.

When $\alpha \in (P(X < k), P(X \leq k)]$ for $k \in [\mathcal{Z}_{L}^{I},\mathcal{Z}_{R}^{I}]$, we set $h = \frac{P(Y=k)}{P(X=k)}$. In this case, $\frac{P(Y=k')}{P(X=k')} = h$ for $k' = k$, and $\frac{P(Y=k')}{P(X=k')} > h$ for $k' < k$, and therefore
\begin{equation}
\mathbb{E}_{P}[\phi] = P(X < k) + \gamma P(X=k).
\end{equation}

We adjust $\gamma$ such that $\mathbb{E}_{P}[\phi] = \alpha$, which yields
\begin{equation}
\gamma = \frac{\alpha - P(X < k)}{P(X=k)},
\end{equation}
and
\begin{equation}
\begin{split}
\beta_{\phi}^{+}(\alpha) &= 1 - [P(Y < k) + \gamma P(Y=k)] \\
&= P(Y \geq k) - P(Y=k)\frac{\alpha - P(X < k)}{P(X=k)}   \\
& = P(Y \geq k) + \frac{P(Y=k)P(X < k)}{P(X=k)} - \frac{P(Y=k)}{P(X=k)}\alpha
\end{split}
\end{equation}

When $\alpha \in (P(X < k), P(X \leq k)]$ for $k \in (\mathcal{Z}_{R}^{I},\mathcal{Z}^{U}_{R}]$, we set $h = 0$. In this case, $\frac{P(Y=k')}{P(X=k')} = h$ for $k' > \mathcal{Z}_{R}^{I}$, and $\frac{P(Y=k')}{P(X=k')} > h$ for $k' \leq \mathcal{Z}_{R}^{I}$. As a result,
\begin{equation}
\mathbb{E}_{P}[\phi] = P(X \leq \mathcal{Z}_{R}^{I}) + \gamma P(X > \mathcal{Z}_{R}^{I}),
\end{equation}
and
\begin{equation}
\begin{split}
&\beta_{\phi}^{+}(\alpha) = 1 - [P(Y \leq \mathcal{Z}_{R}^{I}) + \gamma P(Y> \mathcal{Z}_{R}^{I})] = 0\\
\end{split}
\end{equation}

Similarly, we can prove the second part of Lemma \ref{generaltheorem} as follows. 

\textbf{Case (2)} We also divide $\alpha \in [0,1]$ into $\mathcal{Z}^{U}_{R} - \mathcal{Z}_{L}^{I} + 1$ segments: $[P(X > \mathcal{Z}^{U}_{L}),P(X \geq \mathcal{Z}^{U}_{L})] \cup \cdots \cup (P(X > k),P(X \geq k)] \cup \cdots \cup (P(X > \mathcal{Z}_{R}^{I}),P(X \geq \mathcal{Z}_{R}^{I})]$, as shown in Fig. \ref{fig2}.

\begin{figure}[h]
\centering
\includegraphics[width=0.48\textwidth]{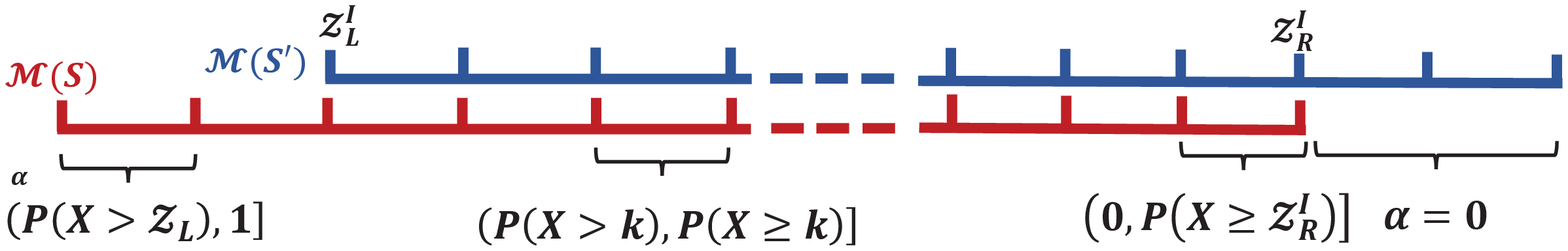}
\caption{\footnotesize{Dividing $\alpha$ in to multiple segments for part (2).}}
\label{fig2}
\end{figure}

When $\alpha \in (P(X > k), P(X \geq k)]$ for $k \in [\mathcal{Z}^{U}_{L},\mathcal{Z}_{L}^{I})$, we set $h = 0$. In this case, 
\begin{equation}
\mathbb{E}_{P}[\phi] = P(X \geq \mathcal{Z}_{L}^{I}) + \gamma P(X < \mathcal{Z}_{L}^{I}),
\end{equation}
and
\begin{equation}
\begin{split}
&\beta_{\phi}^{-}(\alpha) = 1 - [P(Y \geq \mathcal{Z}_{L}^{I}) + \gamma P(Y< \mathcal{Z}_{L}^{I})] = 0\\
\end{split}
\end{equation}

When $\alpha \in (P(X > k), P(X \geq k)]$ for $k \in [\mathcal{Z}_{L}^{I},\mathcal{Z}_{R}^{I}]$, we set $h = \frac{P(Y=k)}{P(X=k)}$. In this case,
\begin{equation}
\mathbb{E}_{P}[\phi] = P(X > k) + \gamma P(X=k).
\end{equation}

Setting $\mathbb{E}_{P}[\phi] = \alpha$ yields
\begin{equation}
\gamma = \frac{\alpha - P(X > k)}{P(X=k)},
\end{equation}
and
\begin{equation}
\begin{split}
&\beta_{\phi}^{-}(\alpha) \\
&= 1 - [P(Y > k) + \gamma P(Y=k)] \\
&= P(Y \leq k) - P(Y=k)\frac{\alpha - P(X > k)}{P(X=k)}   \\
&= P(Y \leq k) + \frac{P(Y=k)P(X > k)}{P(X=k)} - \frac{P(Y=k)}{P(X=k)}\alpha
\end{split}
\end{equation}

When $\alpha = P(X > \mathcal{Z}_{R}^{I}) = 0$, we set $h = +\infty$. In this case, 
\begin{equation}
\mathbb{E}_{P}[\phi] = \gamma P(X > \mathcal{Z}_{R}^{I}) = 0 = \alpha,
\end{equation}
and 
\begin{equation}
\begin{split}
\beta_{\phi}^{+}(0) = 1-\mathbb{E}_{Q}[\phi] = 1 - \gamma P(Y > \mathcal{Z}_{R}^{I}).    
\end{split}
\end{equation}
The infimum is attained when $\gamma = 1$, which yields $\beta_{\phi}^{-}(0) = P(Y \leq \mathcal{Z}_{R}^{I})$.
\end{proof}

}
\section{Proofs of Theoretical Results}

\subsection{Proof of Theorem \ref{privacyofaddbinomial}}
\begin{theorem}
Let $\tilde{Z}= Binom(M,p)$, the {\color{black}binomial noise mechanism} in Algorithm \ref{AddBinomialMechanism} is $f^{bn}(\alpha)$-differentially private with 
\begin{equation}
    f^{bn}(\alpha) = \min\{\beta_{\phi,\inf}^{+}(\alpha),\beta_{\phi,\inf}^{-}(\alpha)\},
\end{equation}
in which 
\begin{equation}
\begin{split}
&\beta_{\phi,\inf}^{+}(\alpha) = 
\begin{cases}
P(\tilde{Z} \geq \tilde{k}+l) + \frac{P(Z = \tilde{k}+l)P(\tilde{Z} < \tilde{k})}{P(\tilde{Z} = \tilde{k})} - \frac{P(\tilde{Z} = \tilde{k}+l)}{P(\tilde{Z} = \tilde{k})}\alpha, \\
\hfill ~~~~~~~~~~~~~~~~~~~~~~~~~~~~~~~~~~~~~~~~~~~~~~~\text{for $\alpha \in [P(\tilde{Z} < \tilde{k}), P(\tilde{Z} \leq \tilde{k})]$}, \text{$\tilde{k} \in [0,M-l]$},\\
  0, \hfill ~~~~~~~~~~~~~~~~~~~~~~~~~~~~~~~~~~~~~~~~~~~~~~~\text{for $\alpha \in [P(\tilde{Z} \leq M-l), 1]$}.\\
\end{cases}    
\end{split}
\end{equation}
\begin{equation}
\begin{split}
&\beta_{\phi,\inf}^{-}(\alpha) = 
\begin{cases}
P(\tilde{Z} \leq \tilde{k}-l) + \frac{P(\tilde{Z} = \tilde{k}-l)P(\tilde{Z} > \tilde{k})}{P(\tilde{Z} = \tilde{k})} - \frac{P(\tilde{Z} = \tilde{k}-l)}{P(\tilde{Z} = \tilde{k} )}\alpha, \\
\hfill  ~~~~~~~~~~~~~~~~~~~~~~~~~~~~~~~~~~~~~~~~~~~~~~~\text{for $\alpha \in [P(\tilde{Z} > \tilde{k} ), P(\tilde{Z} \geq \tilde{k} )]$}, \text{$\tilde{k} \in [l,M]$},\\
  0, \hfill  ~~~~~~~~~~~~~~~~~~~~~~~~~~~~~~~~~~~~~~~~~~~~~~~\text{for $\alpha \in [P(\tilde{Z} \geq l), 1]$}.\\
\end{cases}    
\end{split}
\end{equation}
Given that $P(\Tilde{Z}=k)={M \choose k}p^{k}(1-p)^{M-k}$, it can be readily shown that when $p = 0.5$, both $\beta^{+}_{\phi,\inf}(\alpha)$ and $\beta^{-}_{\phi,\inf}(\alpha)$ are maximized, and $f(\alpha) = \beta^{+}_{\phi,\inf}(\alpha)=\beta^{-}_{\phi,\inf}(\alpha)$. 
\end{theorem}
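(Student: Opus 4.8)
The plan is to apply Lemma \ref{generaltheorem} (the generic tradeoff function for discrete-valued mechanisms) to the specific structure of the binomial noise mechanism, which maps input $x_i \in \{0,1,\dots,l\}$ to $Z_i = x_i + \tilde{Z}$ with $\tilde{Z} = Binom(M,p)$. First I would identify the neighboring datasets. Since the local data differs in a single record and each coordinate lives in $\{0,1,\dots,l\}$, the two hypotheses correspond to two inputs $x$ and $x'$ with $|x - x'| \leq l$. Because the additive noise is a location shift, the distribution of $\mathcal{M}(x)$ is just that of $\tilde{Z}$ shifted by $x$. I expect that the worst-case (infimum) neighboring pair occurs at the maximal separation, i.e. $|x - x'| = l$, because this maximizes the distinguishability of the two shifted binomials; I would argue this monotonicity explicitly by checking that the likelihood ratio $P(Y=k)/P(X=k)$ becomes more extreme as the shift grows.

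Next I would set up the two cases of Lemma \ref{generaltheorem} according to the sign of the shift. Taking $X = \tilde{Z}$ and $Y = \tilde{Z}$ shifted by $+l$ gives the ``$+$'' branch: here $\mathcal{Z}^U_L = 0$, $\mathcal{Z}^U_R = M+l$, and the intersection region is $\mathbb{Z}^I_{\mathcal{M}} = [l, M]$. I would verify that for the binomial pmf the ratio $P(\tilde{Z}=k-l)/P(\tilde{Z}=k)$ is monotone in $k$ so that the hypothesis of Case (1) applies, then substitute directly into (\ref{general1equation}). Re-indexing via $\tilde{k} = k - l$ turns the piecewise expression into the stated $\beta^+_{\phi,\inf}(\alpha)$ with break points $P(\tilde{Z} < \tilde{k})$ and $P(\tilde{Z} \leq \tilde{k})$, and the terminal piece where $\beta = 0$ appears once $\alpha \geq P(\tilde{Z} \leq M-l)$. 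The symmetric shift $-l$ gives Case (2) and yields $\beta^-_{\phi,\inf}(\alpha)$ analogously. Combining the two branches via the minimum (one branch per direction of the shift) produces $f^{bn}(\alpha) = \min\{\beta^+_{\phi,\inf}, \beta^-_{\phi,\inf}\}$, since the infimum over neighbors must account for both orderings of $x,x'$.

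For the final sentence about $p = 0.5$, the plan is to treat $\beta^+_{\phi,\inf}(\alpha)$ and $\beta^-_{\phi,\inf}(\alpha)$ as functions of $p$ and show each is maximized at $p = 1/2$. I would substitute the explicit binomial pmf $P(\tilde{Z}=k) = \binom{M}{k} p^k (1-p)^{M-k}$ into the coefficients appearing in the tradeoff function, most importantly the slope $P(\tilde{Z}=\tilde{k}+l)/P(\tilde{Z}=\tilde{k})$, which simplifies to a product of binomial coefficient ratios times $(p/(1-p))^l$. Maximizing $\beta$ (better privacy) corresponds to driving this likelihood ratio toward $1$, i.e. flattening the two shifted distributions; since $(p/(1-p))^l$ together with the tail probabilities is extremized by the symmetry $p \leftrightarrow 1-p$, the optimum is at $p = 1/2$. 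The symmetry $p = 1/2$ also makes $\tilde{Z}$ distributionally symmetric about $M/2$, which forces $\beta^+_{\phi,\inf}(\alpha) = \beta^-_{\phi,\inf}(\alpha)$ by a reflection argument, giving the claimed single $f(\alpha)$.

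The main obstacle I anticipate is rigorously establishing that the worst-case neighboring pair is indeed the maximal-shift pair $|x - x'| = l$, rather than some intermediate shift; this requires showing that the whole tradeoff curve is pointwise minimized at the extreme shift, which is subtler than comparing single likelihood ratios because the break points of the piecewise function themselves move with the shift. A careful monotone-likelihood-ratio / stochastic-ordering argument should settle it, but the bookkeeping across the shifting domain partition is where the analysis is genuinely non-trivial, as the excerpt itself flags. A secondary technical point is verifying the extremality in $p$ uniformly across all pieces $\tilde{k}$ simultaneously, rather than piece by piece, so that the same $p = 1/2$ maximizes the entire curve.
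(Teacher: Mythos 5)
Your overall route is the same as the paper's: reduce to Lemma \ref{generaltheorem} by checking that the likelihood ratio of two shifted binomials is monotone in $k$ (this is the paper's Lemma \ref{CorollaryAddBinom}), instantiate the two cases according to the sign of the shift, argue that the worst case is the maximal shift $l$, and take the minimum over the two directions. Your labeling of which direction gives the ``$+$'' branch is swapped relative to the paper (the paper puts the shifted-up distribution in the role of the null $X$), but since the final guarantee is the minimum over both branches this is immaterial.

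The one place where your plan stops short is exactly the step you flag as the main obstacle: showing that the infimum over neighboring inputs is attained at $|x-x'|=l$ \emph{pointwise in $\alpha$}. Observing that the likelihood ratio ``becomes more extreme as the shift grows'' is not by itself sufficient, and you leave open how the moving breakpoints are handled. The paper's resolution is a specific reparametrization: writing $X = x_i+\tilde Z$ and substituting $\tilde k = k - x_i$, the segment endpoints become $P(\tilde Z < \tilde k)$ and $P(\tilde Z \le \tilde k)$, which no longer depend on the shift $\Delta = x_i - x'_i$; hence for every fixed $\tilde k$ the same interval of $\alpha$ is covered for all $\Delta$, and the tradeoff values can be compared directly on that common interval. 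Setting $J(\Delta,\tilde k) = P(\tilde Z \ge \tilde k+\Delta) + \frac{P(\tilde Z=\tilde k+\Delta)}{P(\tilde Z=\tilde k)}\left(P(\tilde Z<\tilde k)-\alpha\right)$, the paper shows $J(\Delta+1,\tilde k)-J(\Delta,\tilde k)<0$ by writing it as $-P(\tilde Z=\tilde k+\Delta)$ plus a correction term, using $P(\tilde Z<\tilde k)-\alpha \in [-P(\tilde Z=\tilde k),0]$ and a case split on the sign of $P(\tilde Z=\tilde k+\Delta+1)-P(\tilde Z=\tilde k+\Delta)$. This elementary computation is what replaces the stochastic-ordering argument you gesture at, and it is the piece you would need to supply to make the proof complete. (The $p=1/2$ claim is asserted without detailed proof in the paper as well, so your symmetry sketch is on par with the original there.)
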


Before proving Theorem \ref{privacyofaddbinomial}, we first show the following lemma.

\begin{Lemma}\label{CorollaryAddBinom}
Let $X= x_{i} + Binom(M,p)$ and $Y = x'_{i} + Binom(M,p)$. Then, if $x_{i} > x'_{i}$, 
\begin{equation}
\begin{split}
&\beta_{\phi}^{+}(\alpha) =
\begin{cases}
P(Y \geq k) + \frac{P(Y=k)P(X < k)}{P(X=k)} - \frac{P(Y=k)}{P(X=k)}\alpha,  \\
 \hfill ~~~~~~~~~~~~~~~~~~~~~~~~~~~\text{if $\alpha \in [P(X < k), P(X \leq k)]$, $k \in [x_{i},x'_{i}+M]$}. \\
0, \hfill ~~~~~~~~~~~~~~~~~~~~~~~~~~~\text{if $\alpha \in (P(X < x'_{i}+M+1),1]$.}
\end{cases}
\end{split}
\end{equation}
If $x_{i} < x'_{i}$,
\begin{equation}
\begin{split}
&\beta_{\phi}^{-}(\alpha) =
\begin{cases}
P(Y \leq k) + \frac{P(Y=k)P(X > k)}{P(X=k)} - \frac{P(Y=k)}{P(X=k)}\alpha, \\ 
\hfill~~~~~~~~~~~~~~~~~~~~~~~~~~~\text{if $\alpha \in [P(X > k), P(X \geq k)]$, $k \in [x'_{i},x_{i}+M]$}. \\
0, \hfill~~~~~~~~~~~~~~~~~~~~~~~~~~~\text{if $\alpha \in (P(X > x'_{i}-1),1]$}
\end{cases}
\end{split}
\end{equation}
\end{Lemma}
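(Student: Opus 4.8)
The plan is to specialize the generic tradeoff-function result of Lemma \ref{generaltheorem} to the binomial noise mechanism, where $X = x_i + Binom(M,p)$ and $Y = x_i' + Binom(M,p)$ are shifted copies of the same binomial. First I would record the ranges: $\mathcal{R}(X) = [x_i, x_i+M]$ and $\mathcal{R}(Y) = [x_i', x_i'+M]$, so their union and intersection are explicit integer intervals determined by $x_i$ and $x_i'$. The key observation is the shift structure: for any $k$ in the common support, $P(X=k) = P(\tilde Z = k - x_i)$ and $P(Y=k) = P(\tilde Z = k - x_i')$ where $\tilde Z = Binom(M,p)$, so the likelihood ratio becomes
\begin{equation}\nonumber
\frac{P(Y=k)}{P(X=k)} = \frac{\binom{M}{k-x_i'}}{\binom{M}{k-x_i}}\left(\frac{p}{1-p}\right)^{x_i'-x_i}.
\end{equation}

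The main technical step is verifying the monotonicity hypothesis of Lemma \ref{generaltheorem}. When $x_i > x_i'$, I would show $\frac{P(Y=k)}{P(X=k)}$ is decreasing in $k$ (so Case (1) applies), and when $x_i < x_i'$, that it is increasing (Case (2)). This reduces to checking that $\binom{M}{k-x_i'}/\binom{M}{k-x_i}$ is monotone, which follows from the log-concavity / unimodality of binomial coefficients — concretely, the ratio of two binomial pmfs whose arguments differ by a fixed shift is monotone in $k$. Since the excerpt's remark after Lemma \ref{generaltheorem} permits relabeling so the monotonicity holds, I would invoke that to handle the generic direction cleanly, but here it holds automatically from the shift structure.

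With the monotonicity established, the plan is to substitute directly into equations (\ref{general1equation}) and (\ref{general1equation2}). For $x_i > x_i'$ (Case (1)), the intersection is $\mathbb{Z}^I_{\mathcal M} = [x_i, x_i'+M]$; plugging $P(X=k) = P(\tilde Z = k-x_i)$, $P(Y=k) = P(\tilde Z = k-x_i')$ and reindexing via $k \mapsto k - x_i'$ (or equivalently expressing everything through $\tilde Z$) recovers the claimed form with the "$+l$" offset, since $x_i - x_i'$ is the gap $l$. The symmetric substitution for $x_i < x_i'$ (Case (2)) yields the $\beta^-_\phi$ expression. This is essentially bookkeeping: matching the union/intersection endpoints of the generic lemma to the shifted binomial ranges and rewriting the probabilities in terms of $\tilde Z$.

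The hard part will be the reindexing and confirming that the stated domain endpoints (e.g.\ $\alpha \in [P(\tilde Z < \tilde k), P(\tilde Z \le \tilde k)]$ with $\tilde k \in [0, M-l]$) correctly correspond to the generic intervals $[P(X<k), P(X\le k)]$ with $k \in [\mathcal{Z}^I_L, \mathcal{Z}^I_R]$ after the shift — in particular tracking how the offset $l = |x_i - x_i'|$ enters both the probability arguments and the index range, and handling the tail segment where $\beta = 0$. I expect the monotonicity verification to be the only genuinely substantive step; everything else is a careful but routine translation of Lemma \ref{generaltheorem} into the binomial-shift notation. I would leave the claim that $p=0.5$ maximizes both branches (and makes them coincide) for the subsequent Theorem \ref{privacyofaddbinomial} proof, as this Lemma \ref{CorollaryAddBinom} only asserts the tradeoff functions for fixed neighboring pairs, not yet the infimum over $S, S'$.
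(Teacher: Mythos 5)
Your proposal matches the paper's proof: it verifies the supports of $X$ and $Y$, shows the likelihood ratio $P(Y=k)/P(X=k)$ is monotone in $k$ on the common support (the paper writes the binomial-coefficient ratio out as an explicit product of factors monotone in $k$, which is the same log-concavity fact you invoke), and then reads off the two cases from Lemma \ref{generaltheorem}. The only slips are immaterial: the constant factor should be $\left(\frac{1-p}{p}\right)^{x_i'-x_i}$ rather than its reciprocal (it does not affect monotonicity in $k$), and the reindexing through $\tilde Z$ with the offset $l$ belongs to the proof of Theorem \ref{privacyofaddbinomial}, not to this lemma, which is stated directly in terms of $X$ and $Y$ with $k\in[x_i,x_i'+M]$.
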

\begin{proof}[Proof of Lemma \ref{CorollaryAddBinom}]
When $x_{i} > x'_{i}$, it can be easily verified that $P(X = k) > 0$ only for $k \in [x_{i},x_{i}+1,\cdots,x_{i}+M]$, $P(Y = k)>0$ only for $k \in [x'_{i},x'_{i}+1,\cdots,x'_{i}+M]$. For $k \in [x_{i},\cdots,x'_{i}+M]$, we have
\begin{equation}
\begin{split}
\frac{P(Y = k)}{P(X = k)} &= \frac{{M \choose k-x'_{i}}p^{k-x'_{i}}(1-p)^{M-k+x'_{i}}}{{M \choose k-x_{i}}p^{k-x_{i}}(1-p)^{M-k+x_{i}}}\\
&=\frac{(N-k+x'_{i}+1)(N-k+x'_{i}+2)\cdots(N-k+x_{i})}{(k-x_{i}+1)(k-x_{i}+2)\cdots(k-x'_{i})}(\frac{1-p}{p})^{x'_{i}-x_{i}}.
\end{split}
\end{equation}
It can be observed that $\frac{P(Y = k)}{P(X = k)}$ is a decreasing function of $k$. 

When $x_{i} < x'_{i}$, it can be easily verified that $P(X = k) > 0$ only for $k \in [x_{i},x_{i}+1,\cdots,x_{i}+M]$, $P(Y = k)>0$ only for $k \in [x'_{i},x'_{i}+1,\cdots,x'_{i}+M]$. For $k \in [x'_{i},\cdots,x_{i}+M]$, we have
\begin{equation}
\begin{split}
\frac{P(Y = k)}{P(X = k)} &= \frac{{M \choose k-x'_{i}}p^{k-x'_{i}}(1-p)^{M-k+x'_{i}}}{{M \choose k-x_{i}}p^{k-x_{i}}(1-p)^{M-k+x_{i}}}\\
&=\frac{(k-x'_{i}+1)(k-x'_{i}+2)\cdots(k-x_{i})}{(N-k+x_{i}+1)(N-k+x_{i}+2)\cdots(N-k+x'_{i})}(\frac{1-p}{p})^{x'_{i}-x_{i}}.
\end{split}
\end{equation}
It can be observed that $\frac{P(Y = k)}{P(X = k)}$ is an increasing function of $k$, and invoking Lemma \ref{generaltheorem} completes the proof.
\end{proof}
Given Lemma \ref{CorollaryAddBinom}, we are ready to prove Theorem \ref{privacyofaddbinomial}.
\begin{proof}[Proof of Theorem \ref{privacyofaddbinomial}]
Let $\tilde{Z} = Binom(M,p)$, $X = x_{i} + \tilde{Z}$ and $Y = x'_{i} + \tilde{Z}$. Two cases are considered: 

{\textbf{Case 1:} $x_{i} > x'_{i}$.}

In this case, according to Lemma \ref{CorollaryAddBinom}, we have
\begin{equation}\label{addbiobetacase1}
\beta_{\phi}^{+}(\alpha) = 
\begin{cases}
P(Y \geq k) + \frac{P(Y=k)P(X < k)}{P(X=k)} - \frac{P(Y=k)}{P(X=k)}\alpha, \\
\hfill ~~~~~~~~~~~~~~~~~~~~~~~~~~~\text{for $\alpha \in [P(X < k), P(X \leq k)]$}, \text{$k \in [x_{i},x'_{i}+M]$},\\
  0, \hfill ~~~~~~~~~~~~~~~~~~~~~~~~~~~\text{for $\alpha \in [P(X \leq x'_{i}+M), 1]$},\\
\end{cases}
\end{equation}
In the following, we show the infimum of $\beta^{+}_{\phi}(\alpha)$. For the ease of presentation, let $\tilde{k} = k - x_{i}$ and $x_{i}-x'_{i}= \Delta$. Then, we have
\begin{equation}
\begin{split}
&P(Y \geq k) = P(x'_{i}+\tilde{Z} \geq k) = P(\tilde{Z} \geq \tilde{k}+\Delta),\\
&P(Y = k) = P(\tilde{Z} = \tilde{k}+\Delta),\\
&P(X < k) = P(x_{i}+\tilde{Z} < k) = P(\tilde{Z} < \tilde{k}),\\
&P(X = k) = P(x_{i}+\tilde{Z} = k) = P(\tilde{Z} = \tilde{k}).
\end{split}
\end{equation}

(\ref{addbiobetacase1}) can be rewritten as
\begin{equation}
\begin{split}
&\beta_{\phi}^{+}(\alpha) = 
\begin{cases}
P(\tilde{Z} \geq \tilde{k}+\Delta) + \frac{P(\tilde{Z} = \tilde{k}+\Delta)P(\tilde{Z} < \tilde{k})}{P(\tilde{Z} = \tilde{k})} - \frac{P(\tilde{Z} = \tilde{k}+\Delta)}{P(\tilde{Z} = \tilde{k})}\alpha, 
\hfill ~~\text{for $\alpha \in [P(\tilde{Z} < \tilde{k}), P(\tilde{Z} \leq \tilde{k})]$}, \\
\hfill ~~~~~~~~~~~~~~~~~~~~~~~~~~~\text{$\tilde{k} \in [0,M-\Delta]$},\\
  0, \hfill ~~~~~~~~~~~~~~~~~~~~~~~~~~~\text{for $\alpha \in [P(Z \leq M-\Delta), 1]$}.\\
\end{cases}    
\end{split}
\end{equation}

Let $J(\Delta,\tilde{k}) = P(\tilde{Z} \geq \tilde{k}+\Delta) + \frac{P(\tilde{Z} = \tilde{k}+\Delta)P(\tilde{Z} < \tilde{k})}{P(\tilde{Z} = \tilde{k})} - \frac{P(\tilde{Z} = \tilde{k}+\Delta)}{P(\tilde{Z} = \tilde{k})}\alpha$, we have
\begin{equation}
\begin{split}
J(\Delta+1,\tilde{k})-J(\Delta,\tilde{k}) &= -P(\tilde{Z} = \tilde{k}+\Delta) \\
&+ \frac{P(\tilde{Z} = \tilde{k}+\Delta+1)-P(\tilde{Z} = \tilde{k}+\Delta)}{P(\tilde{Z} = \tilde{k})}[P(\tilde{Z} < \tilde{k}) - \alpha].
\end{split}
\end{equation}

Since $\alpha \in [P(\tilde{Z} < \tilde{k}), P(\tilde{Z} \leq \tilde{k})]$, we have $P(\tilde{Z} < \tilde{k}) - \alpha \in [-P(\tilde{Z} = \tilde{k}),0]$. If $P(\tilde{Z} = \tilde{k}+\Delta+1)-P(\tilde{Z} = \tilde{k}+\Delta) > 0$, $J(\Delta+1,\tilde{k})-J(\Delta,\tilde{k}) < -P(\tilde{Z} = \tilde{k}+\Delta) < 0$. If $P(\tilde{Z} = \tilde{k}+\Delta+1)-P(\tilde{Z} = \tilde{k}+\Delta) < 0$, $J(\Delta+1,\tilde{k})-J(\Delta,\tilde{k}) < -P(\tilde{Z} = \tilde{k}+\Delta+1) < 0$. As a result, the infimum of $\beta_{\phi}^{+}(\alpha)$ is attained when $\Delta = l$, i.e., $x_{i} = l$ and $x'_{i} = 0$, which yields
\begin{equation}\label{addbiobetacase1rewritten}
\begin{split}
&\beta_{\phi,\inf}^{+}(\alpha) = 
\begin{cases}
P(\tilde{Z} \geq \tilde{k}+l) + \frac{P(\tilde{Z} = \tilde{k}+l)P(\tilde{Z} < \tilde{k})}{P(Z = \tilde{k})} - \frac{P(\tilde{Z} = \tilde{k}+l)}{P(\tilde{Z} = \tilde{k})}\alpha, \\
\hfill ~~~~~~~~~~~~~~~~~~~~~~~~~~~\text{for $\alpha \in [P(\tilde{Z} < \tilde{k}), P(\tilde{Z} \leq \tilde{k})]$},\text{$\tilde{k} \in [0,M-l]$},\\
  0, \hfill ~~~~~~~~~~~~~~~~~~~~~~~~~~~\text{for $\alpha \in [P(\tilde{Z} \leq M-l), 1]$}.\\
\end{cases}    
\end{split}
\end{equation}

{\textbf{Case 2:} $x_{i} < x'_{i}$.}
In this case, according to Lemma \ref{CorollaryAddBinom}, we have
\begin{equation}\label{addbiobetacase2}
\beta_{\phi}^{-}(\alpha) = 
\begin{cases}
P(Y \leq k) + \frac{P(Y=k)P(X > k)}{P(X=k)} - \frac{P(Y=k)}{P(X=k)}\alpha,\\ 
\hfill ~~~~~~~~~~~~~~~~~~~~~~~~~~~\text{for $\alpha \in [P(X > k), P(X \geq k)]$}, \text{$k \in [x'_{i},x_{i}+M]$},\\
  0, \hfill ~~~~~~~~~~~~~~~~~~~~~~~~~~~\text{for $\alpha \in [P(X \geq x'_{i}), 1]$},\\
\end{cases}
\end{equation}
In the following, we show the infimum of $\beta(\alpha)$. For the ease of presentation, let $\tilde{k} = k - x_{i}$ and $x'_{i}-x_{i}= \Delta$. Then, we have
\begin{equation}
\begin{split}
&P(Y \leq k) = P(x'_{i}+\tilde{Z} \leq k) = P(\tilde{Z} \leq \tilde{k}-\Delta),\\
&P(Y = k) = P(\tilde{Z} = \tilde{k}-\Delta),\\
&P(X > k) = P(x_{i}+\tilde{Z} > k) = P(\tilde{Z} > \tilde{k}),\\
&P(X = k) = P(x_{i}+\tilde{Z} = k) = P(\tilde{Z} = \tilde{k}).
\end{split}
\end{equation}

(\ref{addbiobetacase2}) can be rewritten as
\begin{equation}
\begin{split}
&\beta_{\phi}^{-}(\alpha) = 
\begin{cases}
P(\tilde{Z} \leq \tilde{k}-\Delta) + \frac{P(\tilde{Z} = \tilde{k}-\Delta)P(\tilde{Z} > \tilde{k} )}{P(\tilde{Z} = \tilde{k} )} - \frac{P(\tilde{Z} = \tilde{k}-\Delta)}{P(\tilde{Z} = \tilde{k} )}\alpha,\\ 
\hfill ~~~~~~~~~~~~~~~~~~~~~~~~~~~\text{for $\alpha \in [P(\tilde{Z} > \tilde{k} ), P(\tilde{Z} \geq \tilde{k} )]$},\text{$\tilde{k} \in [\Delta,M]$},\\
  0, \hfill ~~\text{for $\alpha \in [P(\tilde{Z} \geq \Delta), 1]$}.\\
\end{cases}    
\end{split}
\end{equation}

Let $J(\Delta, \tilde{k}) = P(\tilde{Z} \leq \tilde{k}-\Delta) + \frac{P(\tilde{Z} = \tilde{k}-\Delta)P(\tilde{Z} > \tilde{k} )}{P(\tilde{Z} = \tilde{k} )} - \frac{P(\tilde{Z} = \tilde{k}-\Delta)}{P(\tilde{Z} = \tilde{k} )}\alpha$, we have
\begin{equation}
\begin{split}
J(\Delta+1, \tilde{k})-J(\Delta, \tilde{k}) &= -P(\tilde{Z} = \tilde{k}-\Delta) \\
&+ \frac{P(\tilde{Z} = \tilde{k} -\Delta-1)-P(\tilde{Z} = \tilde{k} -\Delta)}{P(\tilde{Z} =  \tilde{k})}[P(\tilde{Z} >  \tilde{k}) - \alpha]
\end{split}
\end{equation}

Since $\alpha \in [P(\tilde{Z} >  \tilde{k}), P(\tilde{Z} \geq  \tilde{k})]$, we have $P(\tilde{Z} >  \tilde{k}) - \alpha \in [-P(\tilde{Z} =  \tilde{k}),0]$. If $P(\tilde{Z} = \tilde{k} -\Delta - 1)-P(\tilde{Z} = \tilde{k}-\Delta ) > 0$, then $J(\Delta+1, \tilde{k})-J(\Delta, \tilde{k}) < -P(\tilde{Z} = \tilde{k}-\Delta) < 0$. If $P(\tilde{Z} = \tilde{k} -\Delta - 1)-P(\tilde{Z} = \tilde{k}-\Delta ) < 0$, then $J(\Delta+1, \tilde{k})-J(\Delta, \tilde{k}) < -P(\tilde{Z} = \tilde{k}-\Delta -1) < 0$. As a result, the infimum of $\beta_{\phi}^{-}(\alpha)$ is attained when $\Delta = l$, i.e., $x_{i} = 0$ and $x'_{i} = l$, which yields
\begin{equation}\label{addbiobetacase2rewritten}
\begin{split}
&\beta_{\phi,\inf}^{-}(\alpha) = 
\begin{cases}
P(\tilde{Z} \leq \tilde{k}-l) + \frac{P(\tilde{Z} = \tilde{k}-l)P(\tilde{Z} > \tilde{k})}{P(\tilde{Z} = \tilde{k})} - \frac{P(\tilde{Z} = \tilde{k}-l)}{P(\tilde{Z} = \tilde{k} )}\alpha, \\
\hfill ~~~~~~~~~~~~~~~~~~~~~~~~~~~\text{for $\alpha \in [P(\tilde{Z} > \tilde{k} ), P(\tilde{Z} \geq \tilde{k} )]$},\text{$\tilde{k} \in [l,M]$},\\
  0, \hfill ~~~~~~~~~~~~~~~~~~~~~~~~~~~\text{for $\alpha \in [P(\tilde{Z} \geq l), 1]$}.\\
\end{cases}    
\end{split}
\end{equation}

Combining (\ref{addbiobetacase1rewritten}) and (\ref{addbiobetacase2rewritten}) completes the first part of the proof. When $p = 0.5$, it can be found that both $\beta^{+}_{\phi,\inf}(\alpha)$ and $\beta^{-}_{\phi,\inf}(\alpha)$ are maximized, and $f(\alpha) = \beta^{+}_{\phi,\inf}(\alpha)=\beta^{-}_{\phi,\inf}(\alpha)$.
\end{proof}

\subsection{Proof of Theorem \ref{privacyofbinomial}}
\begin{theorem}
The binomial mechanism in Algorithm \ref{BinomialMechanism} is $f^{bm}(\alpha)$-differentially private with 
\begin{equation}
f^{bm}(\alpha) = \min\{\beta_{\phi,\inf}^{+}(\alpha),\beta_{\phi,\inf}^{-}(\alpha)\},
\end{equation}
in which 
\begin{equation}\nonumber
\begin{split}
&\beta_{\phi,\inf}^{+}(\alpha) = 1 - [P(Y < k) + \gamma P(Y=k)]  = P(Y \geq k) + \frac{P(Y=k)P(X < k)}{P(X=k)} - \frac{P(Y=k)}{P(X=k)}\alpha,
\end{split}
\end{equation}
for $\alpha \in [P(X < k), P(X \leq k)]$ and $k \in \{0,1,2,\cdots,M\}$, where $X = Binom(M,p_{max})$ and $Y = Binom(M,p_{min})$, and
\begin{equation}\nonumber
\begin{split}
&\beta_{\phi,\inf}^{-}(\alpha) = 1 - [P(Y > k) + \gamma P(Y=k)] = P(Y \leq k) + \frac{P(Y=k)P(X > k)}{P(X=k)} - \frac{P(Y=k)}{P(X=k)}\alpha,
\end{split}
\end{equation}
for $\alpha \in [P(X > k), P(X \geq k)]$ and $k \in \{0,1,2,\cdots,M\}$, where $X = Binom(M,p_{min})$ and $Y = Binom(M,p_{max})$. When $p_{max} = 1-p_{min}$, we have $\beta_{\phi,\inf}^{+}(\alpha) = \beta_{\phi,\inf}^{-}(\alpha)$.
\end{theorem}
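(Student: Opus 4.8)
The plan is to instantiate the generic tradeoff-function machinery of Lemma \ref{generaltheorem} for the pair of binomials produced by Algorithm \ref{BinomialMechanism}, and then to locate the worst-case neighboring pair. Fix neighbors with success probabilities $p_X = p_i(x_i)$ and $p_Y = p_i(x'_i)$, both lying in $[p_{min},p_{max}]$, and write $X = Binom(M,p_X)$, $Y = Binom(M,p_Y)$. First I would compute the likelihood ratio
\begin{equation}
\frac{P(Y=k)}{P(X=k)} = \left(\frac{p_Y(1-p_X)}{p_X(1-p_Y)}\right)^{k}\left(\frac{1-p_Y}{1-p_X}\right)^{M},
\end{equation}
which is strictly monotone in $k$: decreasing when $p_Y < p_X$ and increasing when $p_Y > p_X$. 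Since every binomial with parameter in $(0,1)$ is supported on all of $\{0,1,\dots,M\}$, the intersection and union of the two ranges both equal $\{0,\dots,M\}$, so the range bookkeeping in Lemma \ref{generaltheorem} collapses (the trivial ``$0$'' piece corresponds to the empty interval $(1,1]$) and the lemma yields the two candidate branches $\beta_\phi^{+}$ (Case (1), $p_Y<p_X$) and $\beta_\phi^{-}$ (Case (2), $p_Y>p_X$) on exactly the stated pieces.

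The heart of the argument is to minimize each branch over all admissible neighbors, i.e.\ over $p_X,p_Y \in [p_{min},p_{max}]$, and to show the infimum is attained at the extremes $(p_X,p_Y)=(p_{max},p_{min})$ for $\beta^{+}$ and $(p_{min},p_{max})$ for $\beta^{-}$. The key simplification I would exploit is that, by the monotone likelihood ratio just established, the Neyman--Pearson test of Lemma \ref{Neyman-Pearson} against any alternative with $p_Y<p_X$ is the \emph{same} one-sided test that rejects $H_0$ on small outputs; hence the threshold $k$ and the randomization $\gamma$ required to meet the level $\alpha_\phi=\alpha$ depend only on $(\alpha,p_X)$, not on $p_Y$. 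Writing $\pi_{k,\gamma}(p)$ for the rejection probability of this fixed one-sided test under $Binom(M,p)$, the type II error is simply $\beta = 1-\pi_{k,\gamma}(p_Y)$.

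With this representation the two needed monotonicities follow from the stochastic ordering of the binomial family. Holding $p_X$ (hence the test) fixed, $\pi_{k,\gamma}(p)$ is decreasing in $p$, because the test rejects on small values and $Binom(M,p)$ is stochastically increasing in $p$; therefore $\beta=1-\pi_{k,\gamma}(p_Y)$ is increasing in $p_Y$ and is minimized at $p_Y=p_{min}$. The harder half is the dependence on $p_X$: as $p_X$ increases, the test must enlarge its small-value rejection region (a larger $k$) to keep $\pi_{k,\gamma}(p_X)=\alpha$, since mass drains toward large outputs, and enlarging that region can only increase $\pi_{k,\gamma}(p_{min})$, so $\beta$ decreases and is minimized at $p_X=p_{max}$. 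Making the statement ``the rejection region grows monotonically as $p_X$ increases''—equivalently, that the Neyman--Pearson threshold at fixed level $\alpha$ is nondecreasing in $p_X$ and that power against the fixed alternative $p_{min}$ is correspondingly nondecreasing—precise is the main obstacle; I expect to discharge it either by a direct monotone difference computation in the spirit of the $J(\Delta,\tilde k)$ analysis used for Theorem \ref{privacyofaddbinomial}, or by invoking the uniformly-most-powerful property of one-sided tests in the MLR family $\{Binom(M,p)\}$. Combining the two monotonicities yields $\beta_{\phi,\inf}^{+}$ with $X=Binom(M,p_{max})$, $Y=Binom(M,p_{min})$, and the mirror-image argument (rejecting on large outputs, $p_Y>p_X$) yields $\beta_{\phi,\inf}^{-}$.

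Finally, for the case $p_{max}=1-p_{min}$ I would observe that the reflection $k\mapsto M-k$ sends $Binom(M,p_{max})$ to $Binom(M,p_{min})$ and vice versa; this bijection exchanges the roles of $X$ and $Y$ and swaps the ``small-value'' and ``large-value'' rejection regions, carrying the $\beta^{+}$ construction exactly onto the $\beta^{-}$ construction, so that $\beta_{\phi,\inf}^{+}(\alpha)=\beta_{\phi,\inf}^{-}(\alpha)$ for every $\alpha$, as claimed.
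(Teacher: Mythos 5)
Your proposal is correct and reaches the same conclusion, but it takes a genuinely different route through the key step of identifying the worst-case neighboring pair. The paper, after the same reduction via Lemma \ref{generaltheorem} and the same monotone-likelihood-ratio computation (your range bookkeeping, including the empty ``$0$'' piece, matches the paper's observation that $\mathcal{Z}_{L}^{I}=\mathcal{Z}_{L}^{U}=0$ and $\mathcal{Z}_{R}^{I}=\mathcal{Z}_{R}^{U}=M$), minimizes $\beta_{\phi}^{+}$ by direct computation: first a sign calculation of $\partial\beta_{\phi}^{+}/\partial q$ (which is exactly your stochastic-ordering argument for $p_{Y}$ in derivative form), and then, for the dependence on $p=p_{X}$, a four-case analysis exploiting linearity of $\beta_{\phi,p}^{+}(\alpha)-\beta_{\phi,\hat p}^{+}(\alpha)$ in $\alpha$ and checking signs at the endpoints of the overlap interval $[\max\{P(X<k),P(\hat X<\hat k)\},\min\{P(X\leq k),P(\hat X\leq\hat k)\}]$. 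You instead exploit the UMP structure of one-sided testing in an MLR family: the calibrated test $(k,\gamma)$ depends only on $(\alpha,p_{X})$, and both monotonicities follow from stochastic ordering of the binomial family. The step you flag as the ``main obstacle'' does close cleanly along the lines you indicate: any two randomized lower-threshold tests are pointwise comparable, and since $\mathbb{E}_{p_{X}'}[\phi]\leq \mathbb{E}_{p_{X}}[\phi]=\alpha=\mathbb{E}_{p_{X}'}[\phi']$ for $p_{X}'>p_{X}$ (stochastic dominance applied to the decreasing function $\phi$), the test calibrated at $p_{X}'$ dominates $\phi$ pointwise and hence has at least as much power against $p_{min}$; so $\beta$ is indeed minimized at $p_{X}=p_{max}$. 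This is shorter and more conceptual than the paper's endpoint casework, at the cost of invoking (easy) structural facts about threshold tests rather than elementary algebra. Your reflection $k\mapsto M-k$ for the symmetric case $p_{max}=1-p_{min}$ is also correct, and is more than the paper offers, since the paper asserts that equality without proof.
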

\begin{proof}
Observing that the output space of the binomial mechanism remains the same for different data $x_{i}$, i.e., $\mathcal{Z}_{L}^{I} = \mathcal{Z}^{U}_{L} = 0$ and $\mathcal{Z}_{R}^{I} = \mathcal{Z}^{U}_{R} = M$ in Lemma \ref{generaltheorem}. Moreover, let $X = Binom(M,p)$ and $Y = Binom(M,q)$, we have $\frac{P(Y=k)}{P(X=k)} = \frac{{M \choose k}q^{k}(1-q)^{M-k}}{{M \choose k}p^{k}(1-p)^{M-k}} = (\frac{1-q}{1-p})^{M}(\frac{q(1-p)}{p(1-q)})^{k}$. Similarly, we consider the following two cases. 

\textbf{Case 1:} $q < p$. 

In this case, we can find that $\frac{P(Y=k)}{P(X=k)}$ is a decreasing function of $k$. Therefore, according to Lemma \ref{generaltheorem}, we have
\begin{equation}
\begin{split}
\beta_{\phi}^{+}(\alpha) &= 1 - [P(Y < k) + \gamma P(Y=k)] \\
&= P(Y \geq k) - P(Y=k)\frac{\alpha - P(X < k)}{P(X=k)}   \\
& = P(Y \geq k) + \frac{P(Y=k)P(X < k)}{P(X=k)} - \frac{P(Y=k)}{P(X=k)}\alpha
\end{split}
\end{equation}

In the following, we show that the infimum is attained when $p = p_{max}$ and $q = p_{min}$. For Binomial distribution $Y$, we have $\frac{\partial P(Y < k)}{\partial q} \leq 0$ and $\frac{\partial P(Y \leq k)}{\partial q} \leq 0$, $\forall k$. 
\begin{equation}
\begin{split}
\frac{\partial \beta_{\phi}^{+}(\alpha)}{\partial q} &= -\frac{\partial P(Y < k)}{\partial q} - \gamma \frac{\partial P(Y = k)}{\partial q} \\
&=-(1-\gamma)\frac{\partial P(Y < k)}{\partial q} - \gamma \frac{\partial P(Y \leq k)}{\partial q}\\
&\geq 0.
\end{split}
\end{equation}
Therefore, the infimum is attained when $q = p_{min}$.

Suppose $X=Binom(M,p)$ and $\hat{X}=Binom(M,\hat{p})$. Without loss of generality, assume $p > \hat{p}$. Suppose that $\alpha \in [P(X < k), P(X \leq k)]$ and $\alpha \in [P(\hat{X} < \hat{k}), P(\hat{X} \leq \hat{k})]$ for some $k$ and $\hat{k}$ are satisfied simultaneously, it can be readily shown that $k \geq \hat{k}$. In addition, $\alpha \in [\max\{P(X<k),P(\hat{X}<\hat{k})\},\min\{P(X\leq k),P(\hat{X}\leq\hat{k})\}]$. Let 
\begin{equation}
\begin{split}
\beta_{\phi,p}^{+}(\alpha) = P(Y \geq k) + \frac{P(Y=k)[P(X < k)-\alpha]}{P(X=k)},
\end{split}
\end{equation}
and 
\begin{equation}
\begin{split}
\beta_{\phi,\hat{p}}^{+}(\alpha) = P(Y \geq \hat{k}) + \frac{P(Y=\hat{k})[P(\hat{X} < \hat{k})-\alpha]}{P(\hat{X}=\hat{k})},
\end{split}
\end{equation}

\begin{equation}
\begin{split}
&\beta_{\phi,p}^{+}(\alpha) - \beta_{\phi,\hat{p}}^{+}(\alpha) \\
&= P(Y \geq k) - P(Y \geq \hat{k}) + \frac{P(Y=k)[P(X < k)-\alpha]}{P(X=k)} - \frac{P(Y=\hat{k})[P(\hat{X} < \hat{k})-\alpha]}{P(\hat{X}=\hat{k})} \\
&=P(Y > k) - P(Y > \hat{k}) + \frac{P(Y=k)[P(X \leq k)-\alpha]}{P(X=k)} - \frac{P(Y=\hat{k})[P(\hat{X} \leq \hat{k})-\alpha]}{P(\hat{X}=\hat{k})}.
\end{split}
\end{equation}

Obviously, $P(Y \geq k) - P(Y \geq \hat{k}) \leq 0$ and $P(Y > k) - P(Y > \hat{k}) \leq 0$ for $k \geq \hat{k}$. Observing that $\beta_{\phi,p}^{+}(\alpha) - \beta_{\phi,\hat{p}}^{+}(\alpha)$ is a linear function of $\alpha \in [\max\{P(X<k),P(\hat{X}<\hat{k})\},\min\{P(X\leq k),P(\hat{X}\leq\hat{k})\}]$ given $Y$, $X$, $\hat{X}$, $k$ and $\hat{k}$, we consider the following four possible cases:

\textbf{1) $P(X<k) \leq P(\hat{X}<\hat{k})$ and $\alpha = P(\hat{X}<\hat{k})$:} In this case, $\frac{P(Y=k)[P(X < k)-\alpha]}{P(X=k)} = \frac{P(Y=k)[P(X < k)-P(\hat{X}<\hat{k})]}{P(X=k)} \leq 0$. As a result, $\beta_{\phi,p}^{+}(\alpha) - \beta_{\phi,\hat{p}}^{+}(\alpha) \leq 0$.

\textbf{2) $P(X<k) > P(\hat{X}<\hat{k})$ and $\alpha = P(X<k)$:} In this case,
\begin{equation}
\begin{split}
&\beta_{\phi,p}^{+}(\alpha) - \beta_{\phi,\hat{p}}^{+}(\alpha) \\
&= P(Y \geq k) - P(Y \geq \hat{k}) +\frac{P(Y=k)[P(X < k)-\alpha]}{P(X=k)} - \frac{P(Y=\hat{k})[P(\hat{X} < \hat{k})-\alpha]}{P(\hat{X}=\hat{k})} \\
& = P(Y \geq k) - P(Y \geq \hat{k}) - \frac{P(Y=\hat{k})[P(\hat{X} < \hat{k})-P(X<k)]}{P(\hat{X}=\hat{k})}.
\end{split}
\end{equation}

When $k=\hat{k}$, since $p > \hat{p}$, we have $P(\hat{X} < \hat{k})-P(X<\hat{k}) > 0$, which violates the condition that $P(X<k) > P(\hat{X}<\hat{k})$.

When $k > \hat{k}$, we have $P(Y \geq k) - P(Y \geq \hat{k}) \leq -P(Y = \hat{k})$. Therefore,
\begin{equation}
\begin{split}
\beta_{\phi,p}^{+}(\alpha) - \beta_{\phi,\hat{p}}^{+}(\alpha) & \leq -P(Y = \hat{k}) - \frac{P(Y=\hat{k})[P(\hat{X} < \hat{k})-P(X<k)]}{P(\hat{X}=\hat{k})} \\
&= - \frac{P(Y=\hat{k})[P(\hat{X} \leq \hat{k})-P(X<k)]}{P(\hat{X}=\hat{k})}\\
& \leq 0.
\end{split}
\end{equation}

\textbf{3) $P(X \leq k) \leq P(\hat{X} \leq \hat{k})$ and $\alpha = P(X \leq k)$:} In this case, 
\begin{equation}
\begin{split}
&\frac{P(Y=k)[P(X \leq k)-\alpha]}{P(X=k)} - \frac{P(Y=\hat{k})[P(\hat{X} \leq \hat{k})-\alpha]}{P(\hat{X}=\hat{k})}= \\
&- \frac{P(Y=\hat{k})[P(\hat{X} \leq \hat{k})-P(X \leq k)]}{P(\hat{X}=\hat{k})} \leq 0\\
\end{split}
\end{equation}

As a result, $\beta_{\phi,p}^{+}(\alpha) - \beta_{\phi,\hat{p}}^{+}(\alpha) \leq P(Y > k) - P(Y > \hat{k}) \leq 0$.

\textbf{4) $P(X \leq k) > P(\hat{X} \leq \hat{k})$ and $\alpha = P(\hat{X} \leq \hat{k})$:} In this case, when $k = \hat{k}$, $P(X \leq \hat{k})-P(\hat{X} \leq \hat{k}) > 0$, which violates the condition that $P(X \leq k) > P(\hat{X} \leq \hat{k})$. 

When $k > \hat{k}$, 
\begin{equation}
\begin{split}
&\beta_{\phi,p}^{+}(\alpha) - \beta_{\phi,\hat{p}}^{+}(\alpha) \\
&= P(Y \geq k) - P(Y \geq \hat{k}) +\frac{P(Y=k)[P(X < k)-P(\hat{X} \leq \hat{k})]}{P(X=k)} \\
&- \frac{P(Y=\hat{k})[P(\hat{X} < \hat{k})-P(\hat{X} \leq \hat{k})]}{P(\hat{X}=\hat{k})} \\
& = P(Y \geq k) - P(Y > \hat{k}) +\frac{P(Y=k)[P(X < k)-P(\hat{X} \leq \hat{k})]}{P(X=k)}. \\
\end{split}
\end{equation}

Since $k > \hat{k}$, $P(Y \geq k) - P(Y > \hat{k}) \leq 0$. In addition, $P(X < k)-P(\hat{X} \leq \hat{k}) \leq 0$ since $\alpha \in [\max\{P(X<k),P(\hat{X}<\hat{k})\},P(\hat{X}\leq\hat{k})]$. As a result, $\beta_{\phi,p}^{+}(\alpha) - \beta_{\phi,\hat{p}}^{+}(\alpha) \leq P(Y > k) - P(Y > \hat{k}) \leq 0$.

Now that $\beta_{\phi,p}^{+}(\alpha) - \beta_{\phi,\hat{p}}^{+}(\alpha)$ is a linear function of $\alpha \in [\max\{P(X<k),P(\hat{X}<\hat{k})\},\min\{P(X\leq k),P(\hat{X}\leq\hat{k})\}]$, which is non-positive in the extreme points (i.e., the boundaries), we can conclude that $\beta_{\phi,p}^{+}(\alpha) - \beta_{\phi,\hat{p}}^{+}(\alpha) \leq 0$ for any $\alpha \in [\max\{P(X<k),P(\hat{X}<\hat{k})\},\min\{P(X\leq k),P(\hat{X}\leq\hat{k})\}]$. Therefore, the infimum of $\beta_{\phi}^{+}(\alpha)$ is attained when $p = p_{max}$. 


\textbf{Case 2:} $q > p$. 

In this case, we can find that $\frac{P(Y=k)}{P(X=k)}$ is an increasing function of $k$. As a result, according to Lemma \ref{generaltheorem}, we have 
\begin{equation}
\begin{split}
\beta_{\phi}^{-}(\alpha) = P(Y \leq k) + \frac{P(Y=k)P(X > k)}{P(X=k)} - \frac{P(Y=k)}{P(X=k)}\alpha.
\end{split}
\end{equation}

Similarly, it can be shown that the infimum is attained when $q = p_{max}$ and $p = p_{min}$.

As a result, we have
\begin{equation}
T(P,Q)(\alpha) = \min\{\beta_{\phi,\inf}^{+}(\alpha),\beta_{\phi,\inf}^{-}(\alpha)\}
\end{equation}
\end{proof}

\subsection{Proof of Theorem \ref{privacyofternarytheorem}}
\begin{theorem}\label{suppprivacyofternarytheorem}
The ternary stochastic compressor is $f^{ternary}(\alpha)$-differentially private with 
\begin{equation}\label{privacyofternary}
\begin{split}
&f^{ternary}(\alpha) = 
\begin{cases}
\hfill 1 - \frac{A+c}{A-c}\alpha, \hfill ~~\text{for $\alpha \in [0,\frac{A-c}{2B}]$},\\
\hfill 1 - \frac{c}{B} - \alpha, \hfill ~~\text{for $\alpha \in [\frac{A-c}{2B}, 1-\frac{A+c}{2B}]$},\\
\hfill \frac{A-c}{A+c} - \frac{A-c}{A+c}\alpha, \hfill ~~\text{for $\alpha \in [1-\frac{A+c}{2B},1]$}.\\ 
\end{cases}
\end{split}
\end{equation}
\end{theorem}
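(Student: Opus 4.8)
The plan is to instantiate the generic machinery of Lemma \ref{generaltheorem} on the three-point output space $\{-1,0,1\}$ and then minimize the resulting tradeoff function over all neighboring inputs. By the symmetry of the compressor (the map $x\mapsto -x$ together with the relabeling $1\leftrightarrow -1$ leaves the mechanism invariant) I may assume without loss of generality that $x>x'$. Writing $P=ternary(x,A,B)$, $Q=ternary(x',A,B)$ and relabeling the outputs $-1,0,1$ as integers $0,1,2$, the three likelihood ratios are
\begin{equation}\nonumber
\frac{Q(-1)}{P(-1)}=\frac{A-x'}{A-x}>1,\qquad \frac{Q(0)}{P(0)}=1,\qquad \frac{Q(1)}{P(1)}=\frac{A+x'}{A+x}<1,
\end{equation}
so $Q(k)/P(k)$ is decreasing in the relabeled index and we are in Case (1) of Lemma \ref{generaltheorem}. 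The crucial structural fact is that $P(0)=Q(0)=1-A/B$ independently of the input; this is exactly the sparsification mass that will produce the flat middle slope.

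Next I would read off the tradeoff function for the fixed pair. By Neyman--Pearson the optimal test rejects in order of decreasing likelihood ratio, i.e.\ first on output $-1$, then on the uninformative output $0$, then on output $1$, giving a three-piece piecewise-linear function with breakpoints at $\alpha=P(-1)=\frac{A-x}{2B}$ and $\alpha=P(-1)+P(0)=1-\frac{A+x}{2B}$. A direct computation gives slope $-\frac{A-x'}{A-x}$ on the first piece; slope exactly $-1$ on the middle piece, because $Q(0)=P(0)$ forces $\gamma Q(0)=\gamma P(0)$ and cancels the $x'$-dependence, with value $1-\frac{x-x'}{2B}-\alpha$; and value $\frac{A+x'}{A+x}(1-\alpha)$, hence slope $-\frac{A+x'}{A+x}$, on the last piece.

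The main obstacle is the second step: taking the infimum of this family over all admissible $(x,x')$, since both the slopes and the breakpoints depend on the inputs (the feature the paper flags as ``highly non-trivial''). Inspecting the three pieces, each is made pointwise smallest by pushing $x\to c$ and $x'\to -c$: the first slope attains its most negative value $-\frac{A+c}{A-c}$, the middle value is minimized at $x-x'=2c$ giving $1-\frac{c}{B}$, and the last-piece coefficient $\frac{A+x'}{A+x}$ attains its minimum $\frac{A-c}{A+c}$; the breakpoints of this extremal pair are $\frac{A-c}{2B}$ and $1-\frac{A+c}{2B}$, matching the claimed domains. To make this rigorous I would establish the pointwise dominance $\beta_{(x,x')}(\alpha)\ge f^{ternary}(\alpha)$ by comparing the two convex piecewise-linear curves at the breakpoints of the extremal pair and invoking linearity on each subinterval, treating the regions where the breakpoints of $(x,x')$ and of $(c,-c)$ interleave with the same cross-multiplication estimates used in the proof of Theorem \ref{privacyofbinomial}.

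A cleaner route to the same infimum, which I would use to organize this obstacle, is the observation that $ternary(\cdot,A,B)$ equals $sto\text{-}sign(\cdot,A)$ applied with probability $A/B$ and a data-independent output $0$ with probability $1-A/B$. Since the event $\{\text{output}=0\}$ has the same probability under both hypotheses and carries likelihood ratio $1$, the optimal test decomposes: with probability $1-A/B$ the adversary sees an uninformative symbol, and with probability $A/B$ it faces the $sto\text{-}sign$ test whose optimal tradeoff is $f^{sto\text{-}sign}$ in (\ref{fsotchasiticsign2}). Because that tradeoff function is monotone in the underlying pair and is itself minimized at $(c,-c)$, the infimum over $(x,x')$ passes through the mixture and is attained at $(c,-c)$; a one-dimensional minimization over how much Type~I budget to allocate to the informative branch then reproduces the three pieces of $f^{ternary}$, with the switch point of that minimization (where the branch slope crosses $-1$) yielding precisely the breakpoints $\frac{A-c}{2B}$ and $1-\frac{A+c}{2B}$.
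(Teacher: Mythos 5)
Your proposal is correct and follows essentially the same route as the paper: instantiate Lemma \ref{generaltheorem} on the three-point output space, use the ordering of the likelihood ratios $\frac{A-x'}{A-x} > 1 > \frac{A+x'}{A+x}$ (with the uninformative mass $1-A/B$ common to both hypotheses) to read off the three-piece piecewise-linear tradeoff function, and take the infimum over neighboring inputs at the extremal pair $(c,-c)$. Your handling of the infimum step is in fact more careful than the paper's, which simply asserts that the extremes $p_{\min},p_{\max}$ attain it; the pointwise dominance you want follows cleanly from convexity, since each linear piece of $\beta_{(x,x')}$ is a supporting line of that convex curve and is bounded below on all of $[0,1]$ by the corresponding piece of $f^{ternary}$, so the breakpoint-interleaving bookkeeping (and the mixture decomposition you sketch as an alternative) is not strictly needed.
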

We provide the $f$-DP analysis for a generic ternary stochastic compressor defined as follows.
\begin{Definition}[\textbf{Generic Ternary Stochastic Compressor}]\label{GenericTernaryCompressor}
For any given $x\in[-c,c]$, the generic compressor $ternary$ outputs $ternary(x,p_{1},p_{0},p_{-1})$, which is given by
\begin{equation}
ternary(x,p_{1},p_{0},p_{-1}) =
\begin{cases}
\hfill 1, \hfill \text{with probability $p_{1}(x)$},\\
\hfill 0, \hfill \text{with probability $p_{0}$},\\
\hfill -1, \hfill \text{with probability $p_{-1}(x)$},\\
\end{cases}
\end{equation}
where $p_{0}$ is the design parameter that controls the level of sparsity and $p_{1}(x),p_{-1}(x)\in[p_{min},p_{max}]$. It can be readily verified that $p_{1}=\frac{A+x}{2B}$,$p_{0} = 1-\frac{A}{B}$, $p_{-1} = \frac{A-x}{2B}$ (and therefore $p_{min}= \frac{A-c}{2B}$ and $p_{max}=\frac{A+c}{2B}$) for the ternary stochastic compressor in Definition \ref{TernaryCompressor}.
\end{Definition}

In the following, we show the $f$-DP of the generic ternary stochastic compressor, and the corresponding $f$-DP guarantee for the compressor in Definition \ref{TernaryCompressor} can be obtained with $p_{min}= \frac{A-c}{2B}$, $p_{max}=\frac{A+c}{2B}$, and $p_{0} = 1-\frac{A}{B}$.
\begin{Lemma}
Suppose that $p_{0}$ is independent of $x$, $p_{max}+p_{min} = 1-p_{0}$, and $p_{1}(x) > p_{1}(y), \forall x > y$. The ternary compressor is $f^{ternary}(\alpha)$-differentially private with 
\begin{equation}
\begin{split}
&f^{ternary}(\alpha) = 
\begin{cases}
\hfill 1 - \frac{p_{max}}{p_{min}}\alpha, \hfill ~~\text{for $\alpha \in [0,p_{min}]$},\\
\hfill p_{0} +2p_{min} - \alpha, \hfill ~~\text{for $\alpha \in [p_{min}, 1-p_{max}]$},\\
\hfill \frac{p_{min}}{p_{max}} - \frac{p_{min}}{p_{max}}\alpha, \hfill ~~\text{for $\alpha \in [1-p_{max},1]$},\\ 
\end{cases}
\end{split}
\end{equation}
\end{Lemma}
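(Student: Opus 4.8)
The plan is to treat the ternary compressor as a discrete-valued mechanism on the three-point output space $\{-1,0,1\}$ and invoke Lemma~\ref{generaltheorem}. First I would fix two neighboring inputs $x,y\in[-c,c]$, write $a=p_{1}(x)$ and $b=p_{1}(y)$, and use $p_{-1}=1-p_{0}-p_{1}$ to describe each output distribution through a single parameter. Under the assumption $p_{1}(x)>p_{1}(y)$ whenever $x>y$, relabelling the outputs as $-1\mapsto 0,\,0\mapsto 1,\,1\mapsto 2$ makes the likelihood ratio $P(Y=k)/P(X=k)$ decreasing in $k$ when $x>y$ (Case~(1) of Lemma~\ref{generaltheorem}, giving $\beta_{\phi}^{+}$) and increasing when $x<y$ (Case~(2), giving $\beta_{\phi}^{-}$). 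Thus the two branches of the final $\min$ correspond exactly to the two orderings of the neighboring pair.

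For the branch $x>y$, substituting $P(X=k)$ and $P(Y=k)$ into (\ref{general1equation}) and simplifying (using that $p_{0}$ is constant) yields a three-piece tradeoff function: $\beta_{\phi}^{+}(\alpha)=1-\frac{1-p_{0}-b}{1-p_{0}-a}\alpha$ on $[0,1-p_{0}-a]$, then $\beta_{\phi}^{+}(\alpha)=1+b-a-\alpha$ on $[1-p_{0}-a,\,1-a]$, and finally $\beta_{\phi}^{+}(\alpha)=\frac{b}{a}(1-\alpha)$ on $[1-a,1]$. I would verify continuity at the two breakpoints, where the adjacent pieces agree with common values $p_{0}+b$ and $b$; this both checks the computation and supplies the ingredient needed for the infimum argument.

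The heart of the proof, and the step I expect to be the main obstacle, is taking the infimum of this family of piecewise-linear functions over all neighboring pairs, i.e. over $p_{min}\le b<a\le p_{max}$. I would establish monotonicity in each parameter separately. Differentiating at fixed $\alpha$ gives $\partial\beta_{\phi}^{+}/\partial b\ge 0$ on each piece (the slopes are $\frac{\alpha}{1-p_{0}-a}$, $1$, and $\frac{1-\alpha}{a}$), so $\beta_{\phi}^{+}$ is non-decreasing in $b$ and minimized at $b=p_{min}$. For the dependence on $a$ the piece boundaries themselves move with $a$, so a naive piece-by-piece minimization is invalid; instead I would use that $\beta_{\phi}^{+}$ is continuous in $a$ across the moving junctions (the continuity checks above are precisely what is required) and that $\partial\beta_{\phi}^{+}/\partial a\le 0$ on the interior of every piece, whence a continuous, piecewise-smooth function with non-positive derivative a.e. is non-increasing. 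Therefore the infimum is attained at $a=p_{max}$, and jointly at $(a,b)=(p_{max},p_{min})$. This continuity-plus-sign-of-derivative bookkeeping around the $a$-dependent domains is the delicate part.

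Finally I would substitute $a=p_{max}$, $b=p_{min}$ and use $p_{max}+p_{min}=1-p_{0}$, so that $1-p_{0}-a=p_{min}$ and $1-p_{0}-b=p_{max}$, recovering $\beta_{\phi,\inf}^{+}(\alpha)=1-\frac{p_{max}}{p_{min}}\alpha$ on $[0,p_{min}]$, $\beta_{\phi,\inf}^{+}(\alpha)=p_{0}+2p_{min}-\alpha$ on $[p_{min},1-p_{max}]$, and $\beta_{\phi,\inf}^{+}(\alpha)=\frac{p_{min}}{p_{max}}(1-\alpha)$ on $[1-p_{max},1]$. For the branch $x<y$ I would invoke the mirror symmetry $p_{-1}(x)=p_{1}(-x)$, which under $p_{max}+p_{min}=1-p_{0}$ identifies the testing problem $(x,y)$ with $(-x,-y)$ up to a relabelling of the outputs (a post-processing, hence tradeoff-preserving, operation); this forces $\beta_{\phi,\inf}^{-}=\beta_{\phi,\inf}^{+}$. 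Since $f^{ternary}=\min\{\beta_{\phi,\inf}^{+},\beta_{\phi,\inf}^{-}\}$, the two equal branches collapse to the stated $f^{ternary}(\alpha)$, completing the argument.
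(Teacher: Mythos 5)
Your proposal is correct and follows essentially the same route as the paper: apply Lemma \ref{generaltheorem} to the three-point output space, obtain the three-piece piecewise-linear tradeoff function, and minimize over the pair of success probabilities to land on $(p_{max},p_{min})$, with $p_{max}+p_{min}=1-p_{0}$ collapsing the two branches; indeed you supply more detail than the paper on the infimum step (the monotonicity in $a$ and $b$ across the moving piece boundaries), which the paper's proof only asserts. The one small caveat is that your mirror-symmetry argument for the $x<y$ branch invokes $p_{-1}(x)=p_{1}(-x)$, which is not among the generic lemma's hypotheses; the paper instead simply repeats the direct computation via Case (2) of Lemma \ref{generaltheorem}, whose extremal distributions are related to those of Case (1) by the relabelling $k\mapsto -k$ under $p_{max}+p_{min}=1-p_{0}$, so the conclusion $\beta^{-}_{\phi,\inf}=\beta^{+}_{\phi,\inf}$ holds without that extra assumption.
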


\begin{proof}
Similar to the binomial mechanism, the output space of the ternary mechanism remains the same for different inputs. Let $Y = ternary(x'_{i},p_{1},p_{0},p_{-1})$ and $X = ternary(x_{i},p_{1},p_{0},p_{-1})$, we have
\begin{equation}
\begin{split}
&\frac{P(Y=-1)}{P(X=-1)} = \frac{p_{-1}(x'_{i})}{p_{-1}(x_{i})},   \\
&\frac{P(Y=0)}{P(X=0)} = 1,  \\
&\frac{P(Y=1)}{P(X=1)} = \frac{p_{1}(x'_{i})}{p_{1}(x_{i})}.
\end{split}
\end{equation}
When $x_{i} > x'_{i}$, it can be observed that $\frac{P(Y=k)}{P(X=k)}$ is a decreasing function of $k$. According to Lemma \ref{generaltheorem}, we have
\begin{equation}
\begin{split}
&\beta_{\phi}^{+}(\alpha) = 
\begin{cases}
\hfill 1 - \frac{p_{-1}(x'_{i})}{p_{-1}(x_{i})}\alpha, \hfill ~~\text{for $\alpha \in [0,p_{-1}(x_{i})]$},\\
\hfill p_{0}+ p_{1}(x'_{i})+p_{-1}(x_{i})- \alpha, \hfill ~~\text{for $\alpha \in [p_{-1}(x_{i}), 1-p_{1}(x_{i})]$},\\
\hfill \frac{p_{1}(x'_{i})}{p_{1}(x_{i})} - \frac{p_{1}(x'_{i})}{p_{1}(x_{i})}\alpha, \hfill ~~\text{for $\alpha \in [1-p_{1}(x_{i}),1]$}.\\ 
\end{cases}
\end{split}
\end{equation}

When $x_{i} < x'_{i}$, it can be observed that $\frac{P(Y=k)}{P(X=k)}$ is an increasing function of $k$. According to Lemma \ref{generaltheorem}, we have
\begin{equation}
\begin{split}
&\beta_{\phi}^{-}(\alpha) = 
\begin{cases}
\hfill 1 - \frac{p_{1}(x'_{i})}{p_{1}(x_{i})}\alpha, \hfill ~~\text{for $\alpha \in [0,p_{1}(x_{i})]$},\\
\hfill p_{0}+ p_{-1}(x'_{i})+p_{1}(x_{i}) - \alpha, \hfill ~~\text{for $\alpha \in [p_{1}(x_{i}), 1-p_{-1}(x_{i})]$},\\
\hfill \frac{p_{-1}(x'_{i})}{p_{-1}(x_{i})} - \frac{p_{-1}(x'_{i})}{p_{-1}(x_{i})}\alpha, \hfill ~~\text{for $\alpha \in [1-p_{-1}(x_{i}),1]$}.\\ 
\end{cases}
\end{split}
\end{equation}
The infimum of $\beta_{\phi}^{+}(\alpha)$ is attained when $p_{-1}(x'_{i}) = p_{max}$ and $p_{-1}(x_{i}) = p_{min}$, while the infimum of $\beta_{\phi}^{-}(\alpha)$ is attained when $p_{1}(x'_{i}) = p_{max}$ and $p_{1}(x_{i}) = p_{min}$. As a result, we have 
\begin{equation}
\begin{split}
&f^{ternary}(\alpha) = 
\begin{cases}
\hfill 1 - \frac{p_{max}}{p_{min}}\alpha, \hfill ~~\text{for $\alpha \in [0,p_{min}]$},\\
\hfill p_{0} +2p_{min} - \alpha, \hfill ~~\text{for $\alpha \in [p_{min}, 1-p_{max}]$},\\
\hfill \frac{p_{min}}{p_{max}} - \frac{p_{min}}{p_{max}}\alpha, \hfill ~~\text{for $\alpha \in [1-p_{max},1]$},\\ 
\end{cases}
\end{split}
\end{equation}
which completes the proof.
\end{proof}

\subsection{Proof of Theorem \ref{privacyvector}}
\begin{theorem}
Given a vector $x_{i} = [x_{i,1},x_{i,2},\cdots,x_{i,d}]$ with $|x_{i,j}| \leq c, \forall j$. Applying the ternary compressor to the $j$-th coordinate of $x_{i}$ independently yields $\mu$-GDP with $\mu = -2\Phi^{-1}(\frac{1}{1+(\frac{A+c}{A-c})^{d}})$.
\end{theorem}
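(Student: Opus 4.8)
The plan is to follow the three-stage route hinted at in Remark~\ref{remark1vector}: first reduce the scalar ternary guarantee to pure $(\epsilon,0)$-DP, then compose across the $d$ coordinates, and finally convert the resulting pure DP guarantee into Gaussian DP. Throughout I would assume Theorem~\ref{privacyofternarytheorem}, Lemma~\ref{Lemmaftoappro}, and the definition of $\mu$-GDP from Section~\ref{prelifdp}.

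First I would convert the scalar $f^{ternary}$ of Theorem~\ref{privacyofternarytheorem} into $(\epsilon,0)$-DP with $\epsilon=\ln\frac{A+c}{A-c}$. By Lemma~\ref{Lemmaftoappro}, $(\epsilon,0)$-DP corresponds to the tradeoff function $g_{\epsilon}(\alpha)=\max\{0,\,1-e^{\epsilon}\alpha,\,e^{-\epsilon}(1-\alpha)\}$, so it suffices to check $f^{ternary}\ge g_{\epsilon}$ pointwise. With $e^{\epsilon}=\frac{A+c}{A-c}$ the first and third pieces of $f^{ternary}$ coincide exactly with the two linear branches of $g_{\epsilon}$, while the middle piece has slope $-1$, which lies strictly between $-e^{\epsilon}$ and $-e^{-\epsilon}$; hence by convexity it sits above both branches on the middle region. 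This yields $(\ln\frac{A+c}{A-c},0)$-DP, consistent with the $sto\text{-}sign$ computation in~\eqref{fsotchasiticsign2}. I would note explicitly that this step discards the amplification encoded in the middle piece, which is precisely the looseness later recovered by Theorem~\ref{cltternary}.

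Next I would compose. Since the coordinates are perturbed independently and $|x_{i,j}|\le c$ forces each coordinate mechanism to be $(\epsilon,0)$-DP, the product mechanism on $\{-1,0,1\}^{d}$ inherits $(d\epsilon,0)$-DP by the basic composition theorem for pure differential privacy. This step is routine bookkeeping.

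The substantive step is the final conversion of $(d\epsilon,0)$-DP into $\mu$-GDP. Writing $\epsilon'=d\epsilon$ and $g_{\epsilon'}(\alpha)=\max\{0,1-e^{\epsilon'}\alpha,e^{-\epsilon'}(1-\alpha)\}$, its two linear branches intersect on the diagonal at the symmetric point $\alpha^{\star}=\frac{1}{1+e^{\epsilon'}}$, while the Gaussian curve $G_{\mu}(\alpha)=\Phi(\Phi^{-1}(1-\alpha)-\mu)$ meets the diagonal at $\Phi(-\mu/2)$. Forcing these to agree, i.e. $\Phi(-\mu/2)=\frac{1}{1+e^{\epsilon'}}$, gives $\mu=-2\Phi^{-1}\!\big(\tfrac{1}{1+e^{\epsilon'}}\big)$, and substituting $e^{\epsilon'}=\big(\tfrac{A+c}{A-c}\big)^{d}$ produces the claimed value. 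It then remains to verify $G_{\mu}\le g_{\epsilon'}$ on all of $[0,1]$, which certifies $\mu$-GDP. For this I would use convexity: both $G_{\mu}$ and $g_{\epsilon'}$ pass through $(0,1)$, $(\alpha^{\star},\alpha^{\star})$, and $(1,0)$, and one checks directly that the two linear pieces of $g_{\epsilon'}$ are exactly the chords of the convex curve $G_{\mu}$ joining these three points; since a convex function lies below its chords, $G_{\mu}\le g_{\epsilon'}$ everywhere. I expect this last verification — locating $\alpha^{\star}$, matching the diagonal crossings, and invoking the chord/convexity comparison — to be the main obstacle, whereas the reduction and composition steps are essentially mechanical.
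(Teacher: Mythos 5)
Your proposal is correct and follows exactly the route the paper takes (and that Remark~\ref{remark1vector} describes): lower-bound $f^{ternary}$ by the pure-DP tradeoff function $\max\{0,1-\tfrac{A+c}{A-c}\alpha,\tfrac{A-c}{A+c}(1-\alpha)\}$ to get $(\ln\tfrac{A+c}{A-c},0)$-DP, compose to $(d\ln\tfrac{A+c}{A-c},0)$-DP, and convert to $\mu$-GDP. The only difference is that the paper imports the final conversion $(\epsilon,0)\text{-DP}\Rightarrow\mu\text{-GDP}$ with $\mu=-2\Phi^{-1}(\tfrac{1}{1+e^{\epsilon}})$ as a known lemma from the literature, whereas you prove it directly via the (correct) observation that the two linear branches of the pure-DP tradeoff function are the chords of the convex curve $G_{\mu}$ through $(0,1)$, $(\tfrac{1}{1+e^{\epsilon}},\tfrac{1}{1+e^{\epsilon}})$, and $(1,0)$.
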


Before proving Theorem \ref{privacyvector}, we first introduce the following lemma.
\begin{Lemma}\cite{lee2022differentially,liu2022identification}\label{DPtoGDP}
Any $(\epsilon,0)$-DP algorithm is also $\mu$-GDP for $\mu = -2\Phi^{-1}(\frac{1}{1+e^{\epsilon}})$, in which $\Phi(\cdot)$ is the cumulative density function of normal distribution.
\end{Lemma}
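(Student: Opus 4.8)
The plan is to reduce the claim to a pointwise comparison of two tradeoff functions and then settle that comparison by convexity. By Lemma \ref{Lemmaftoappro} with $\delta = 0$, a mechanism is $(\epsilon,0)$-DP if and only if $T(\mathcal{M}(S),\mathcal{M}(S')) \geq f_{\epsilon}$ for all neighboring $S,S'$, where $f_{\epsilon}(\alpha) = \max\{0,\, 1 - e^{\epsilon}\alpha,\, e^{-\epsilon}(1-\alpha)\}$; and by definition a mechanism is $\mu$-GDP if and only if $T(\mathcal{M}(S),\mathcal{M}(S')) \geq G_{\mu}$, where $G_{\mu}(\alpha) = \Phi(\Phi^{-1}(1-\alpha)-\mu)$. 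Hence, by transitivity of the pointwise order on tradeoff functions, it suffices to prove the single inequality $G_{\mu}(\alpha) \leq f_{\epsilon}(\alpha)$ for all $\alpha \in [0,1]$ with the prescribed $\mu = -2\Phi^{-1}(\frac{1}{1+e^{\epsilon}})$; then $T \geq f_{\epsilon} \geq G_{\mu}$ delivers the conclusion.

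First I would pin down the shape of $f_{\epsilon}$. Since $e^{-\epsilon}(1-\alpha) \geq 0$ on $[0,1]$, the floor at $0$ is never active, so $f_{\epsilon}(\alpha) = \max\{1-e^{\epsilon}\alpha,\, e^{-\epsilon}(1-\alpha)\}$, a curve made of two line segments. A direct computation shows the two lines cross exactly on the diagonal, at $\alpha^{*} = \frac{1}{1+e^{\epsilon}}$ with common value $\frac{1}{1+e^{\epsilon}}$; thus $f_{\epsilon}$ equals $1-e^{\epsilon}\alpha$ on $[0,\alpha^{*}]$ and $e^{-\epsilon}(1-\alpha)$ on $[\alpha^{*},1]$. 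The crux of the argument is that the prescribed $\mu$ is precisely the value that makes $G_{\mu}$ pass through the three points where the two curves must meet. Using $\Phi^{-1}(1)=+\infty$ and $\Phi^{-1}(0)=-\infty$, I would check the endpoints $G_{\mu}(0)=\Phi(+\infty)=1=f_{\epsilon}(0)$ and $G_{\mu}(1)=\Phi(-\infty)=0=f_{\epsilon}(1)$. For the diagonal point, $G_{\mu}(\alpha^{*}) = \alpha^{*}$ is equivalent to $\Phi(-\mu/2)=\alpha^{*}=\frac{1}{1+e^{\epsilon}}$ (using the symmetry $\Phi^{-1}(1-\alpha^{*}) = \mu/2$), which is exactly the defining relation $\mu = -2\Phi^{-1}(\frac{1}{1+e^{\epsilon}})$. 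So $G_{\mu}$ and $f_{\epsilon}$ agree at $\alpha\in\{0,\alpha^{*},1\}$.

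Finally I invoke convexity. Every tradeoff function is convex \cite{dong2021gaussian} (and $G_{\mu}$ can alternatively be checked convex by differentiating twice), so a convex graph lies below each of its chords. On $[0,\alpha^{*}]$ the chord of $G_{\mu}$ through $(0,1)$ and $(\alpha^{*},\alpha^{*})$ passes through the same two points as the line $1-e^{\epsilon}\alpha$, hence coincides with it; convexity then gives $G_{\mu}(\alpha)\leq 1-e^{\epsilon}\alpha = f_{\epsilon}(\alpha)$ there. The identical argument on $[\alpha^{*},1]$, using the chord through $(\alpha^{*},\alpha^{*})$ and $(1,0)$, gives $G_{\mu}(\alpha) \leq e^{-\epsilon}(1-\alpha) = f_{\epsilon}(\alpha)$. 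Combining the two ranges yields $G_{\mu}\leq f_{\epsilon}$ on all of $[0,1]$, which completes the proof.

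I expect the main obstacle to be not any single estimate but the bookkeeping that ties the defining formula for $\mu$ to the diagonal fixed point of $f_{\epsilon}$: one must verify that the endpoints $(0,1)$, $(1,0)$ and the diagonal crossing $(\alpha^{*},\alpha^{*})$ are simultaneously matched by $G_{\mu}$, and confirm that the chords of $G_{\mu}$ over $[0,\alpha^{*}]$ and $[\alpha^{*},1]$ are exactly the two affine pieces of $f_{\epsilon}$. Once those three coincidence points are established, convexity closes the gap with no further computation, and the limiting evaluations $\Phi^{-1}(1)=+\infty$, $\Phi^{-1}(0)=-\infty$ are the only mildly delicate steps to state carefully.
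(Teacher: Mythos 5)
Your proof is correct, but there is nothing in the paper to compare it against: the paper states Lemma \ref{DPtoGDP} purely as an imported result with the citations \cite{lee2022differentially,liu2022identification} and gives no proof of it. Your argument is a valid, self-contained substitute, and it is essentially the canonical derivation behind those references. The reduction is sound: by Lemma \ref{Lemmaftoappro} with $\delta=0$, $(\epsilon,0)$-DP is equivalent to $T(\mathcal{M}(S),\mathcal{M}(S'))\geq f_{\epsilon}$ with $f_{\epsilon}(\alpha)=\max\{1-e^{\epsilon}\alpha,\,e^{-\epsilon}(1-\alpha)\}$ (the $0$ branch is indeed inactive on $[0,1)$), so $G_{\mu}\leq f_{\epsilon}$ pointwise suffices. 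Your computations all check out: the two affine pieces cross on the diagonal at $\alpha^{*}=\frac{1}{1+e^{\epsilon}}$; the identity $G_{\mu}(\alpha^{*})=\alpha^{*}$ is equivalent, via $\Phi^{-1}(1-x)=-\Phi^{-1}(x)$, to the prescribed calibration $\mu=-2\Phi^{-1}(\alpha^{*})$; the chords of $G_{\mu}$ over $[0,\alpha^{*}]$ and $[\alpha^{*},1]$ have slopes $\frac{\alpha^{*}-1}{\alpha^{*}}=-e^{\epsilon}$ and $\frac{-\alpha^{*}}{1-\alpha^{*}}=-e^{-\epsilon}$, so they coincide exactly with the two pieces of $f_{\epsilon}$; and convexity of $G_{\mu}$ is legitimate either because it is itself a tradeoff function (of two unit-variance normals with means $0$ and $\mu$) or by the direct computation $G_{\mu}'(\alpha)=-\exp\left(\mu\,\Phi^{-1}(1-\alpha)-\mu^{2}/2\right)$, which is increasing in $\alpha$ for $\mu\geq 0$ (and $\mu\geq 0$ holds since $\alpha^{*}\leq 1/2$). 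Two small remarks: it is worth stating explicitly that a convex function agreeing with an affine function at the two endpoints of an interval lies below it on that interval, which is the precise form of the chord inequality you invoke; and your choice of $\mu$ is in fact the smallest possible in this conversion, since binary randomized response attains $T=f_{\epsilon}$ exactly, and any $\mu'<\mu$ would force $G_{\mu'}(\alpha^{*})>\alpha^{*}=f_{\epsilon}(\alpha^{*})$, so the lemma as stated is tight.
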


\begin{proof}
According to Theorem \ref{privacyofternarytheorem}, in the scalar case, the ternary stochastic compressor is $f^{ternary}(\alpha)$-differentially private with 
\begin{equation}
\begin{split}
&f^{ternary}(\alpha) = 
\begin{cases}
\hfill 1 - \frac{A+c}{A-c}\alpha, \hfill ~~\text{for $\alpha \in [0,\frac{A-c}{2B}]$},\\
\hfill 1 - \frac{c}{B} - \alpha, \hfill ~~\text{for $\alpha \in [\frac{A-c}{2B}, 1-\frac{A+c}{2B}]$},\\
\hfill \frac{A-c}{A+c} - \frac{A-c}{A+c}\alpha, \hfill ~~\text{for $\alpha \in [1-\frac{A+c}{2B},1]$}.\\ 
\end{cases}
\end{split}
\end{equation}

It can be easily verified that $f^{ternary}(\alpha) \geq \max\{0,1-(\frac{A+c}{A-c})\alpha,(\frac{A-c}{A+c})(1-\alpha)\}$. Invoking Lemma \ref{Lemmaftoappro} suggests that it is $(\log(\frac{A+c}{A-c}),0)$-DP. Extending it to the $d$-dimensional case yields $(d\log(\frac{A+c}{A-c})^{M},0)$-DP. As a result, according to Lemma \ref{DPtoGDP}, it is $-2\Phi^{-1}(\frac{1}{1+(\frac{A+c}{A-c})^{d}})$-GDP.
\end{proof}

\subsection{Proof of Theorem \ref{cltternary}}
\begin{theorem}
For a vector $x_{i} = [x_{i,1},x_{i,2},\cdots,x_{i,d}]$ with $|x_{i,j}| \leq c, \forall j$, the ternary compressor with $B\geq A>c$ is $f^{ternary}(\alpha)$-DP with
\begin{equation}
\begin{split}
G_{\mu}(\alpha+\gamma)-\gamma \leq f^{ternary}(\alpha) \leq G_{\mu}(\alpha-\gamma)+\gamma,~~
\end{split}
\end{equation}
in which 
\begin{equation}
\mu = \frac{2\sqrt{d}c}{\sqrt{AB-c^2}},~~\gamma = \frac{0.56\left[\frac{A-c}{2B}\left|1+\frac{c}{B}\right|^3+\frac{A+c}{2B}\left|1-\frac{c}{B}\right|^3+\left(1-\frac{A}{B}\right)\left|\frac{c}{B}\right|^{3}\right]}{(\frac{A}{B}-\frac{c^2}{B^2})^{3/2}d^{1/2}}.
\end{equation}
\end{theorem}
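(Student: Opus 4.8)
The plan is to exploit the product structure of the mechanism. Since the ternary compressor is applied to each coordinate independently, $\mathcal{M}(x_i) = \bigotimes_{j=1}^d \mathcal{M}_j(x_{i,j})$, and the tradeoff function of a product of distributions factorizes as the tensor product $T\big(\bigotimes_j P_j, \bigotimes_j Q_j\big) = f_1 \otimes \cdots \otimes f_d$ with $f_j = T(\mathcal{M}_j(x_{i,j}), \mathcal{M}_j(x'_{i,j}))$. First I would reduce the search over neighboring inputs to the coordinatewise extreme case. Using that the tensor product of tradeoff functions is monotone in each factor, the infimum over all $x_i,x'_i$ with $|x_{i,j}|,|x'_{i,j}| \le c$ is attained by minimizing each scalar factor separately; by Theorem \ref{privacyofternarytheorem} (and the generic-ternary lemma preceding its proof) the scalar infimum is attained at $x_{i,j}=c$, $x'_{i,j}=-c$. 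This fixes the relevant pair of per-coordinate distributions, $P$ with $(p_1,p_0,p_{-1}) = (\frac{A+c}{2B}, 1-\frac{A}{B}, \frac{A-c}{2B})$ and $Q$ with the roles of $p_1,p_{-1}$ swapped, so that the vector tradeoff function equals the $d$-fold tensor power $T(P,Q)^{\otimes d} = T(P^{\otimes d},Q^{\otimes d})$.

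Next I would identify the optimal test explicitly. By the Neyman-Pearson Lemma (Lemma \ref{Neyman-Pearson}), the most powerful test thresholds the log-likelihood ratio $\sum_j \ell_j$ with $\ell_j = \log\frac{q(Z_j)}{p(Z_j)}$. For the symmetric ternary mechanism one checks that $\ell_j$ takes the values $\log\frac{A-c}{A+c},\,0,\,\log\frac{A+c}{A-c}$ at $Z_j = 1,0,-1$, i.e. $\ell_j = -Z_j\log\frac{A+c}{A-c}$ is affine in the output $Z_j$. Hence thresholding the likelihood ratio is equivalent to thresholding the simple sum $S = \sum_{j=1}^d Z_j$, and the vector tradeoff function is exactly the tradeoff between the laws of $S$ under $H_0$ and $H_1$. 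The required moments of $Z_j$ are elementary: the per-coordinate mean gap is $\mathbb{E}_P[Z_j]-\mathbb{E}_Q[Z_j] = \frac{c}{B}-(-\frac{c}{B}) = \frac{2c}{B}$, the variance (equal under $P$ and $Q$ since it depends only on $x_{i,j}^2=c^2$) is $\sigma^2 = \frac{A}{B}-\frac{c^2}{B^2}$, and the third central absolute moment equals the bracketed quantity in the statement of $\gamma$. Matching the standardized mean separation gives $\mu = \frac{d\cdot 2c/B}{\sqrt{d\sigma^2}} = \frac{2\sqrt{d}\,c}{\sqrt{AB-c^2}}$, reproducing the claimed $\mu$.

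Finally I would quantify the Gaussian approximation with a Berry-Esseen estimate. Applying the Berry-Esseen theorem to the sum $S$ of the $d$ i.i.d. bounded variables $Z_j$ yields a uniform bound $\sup_t|F^{H_0}_S(t) - \Phi_0(t)| \le \gamma$, and likewise under $H_1$, with $\gamma = \frac{0.56\,d\,\rho}{(d\sigma^2)^{3/2}} = \frac{0.56\,\rho}{\sigma^3\sqrt{d}}$ for $\rho$ the third central absolute moment above; substituting $\sigma^2 = \frac{A}{B}-\frac{c^2}{B^2}$ gives exactly the stated $\gamma$. Converting these two CDF bounds into a bound on the threshold-test tradeoff function — if $\alpha = F^{H_0}_S(t)$ then $\Phi_0(t)\in[\alpha-\gamma,\alpha+\gamma]$, and the $\gamma$ slack on the $\beta$-side contributes a second $\mp\gamma$ — produces the sandwich $G_\mu(\alpha+\gamma)-\gamma \le f^{ternary}(\alpha)\le G_\mu(\alpha-\gamma)+\gamma$, which is precisely the central-limit-theorem-for-$f$-DP statement of \cite{dong2021gaussian}; I would invoke that theorem with the moments computed above rather than redo the conversion by hand.

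The main obstacle is the first step: justifying that the worst-case neighboring pair for the composed (vector) tradeoff reduces to the coordinatewise extreme, which requires monotonicity of the tensor product of tradeoff functions together with the scalar infimum analysis. A secondary subtlety is the clean reduction of the Neyman-Pearson optimal test to a threshold on $S=\sum_j Z_j$, which hinges on the affine relation $\ell_j \propto Z_j$ that holds only because of the symmetry $p_{\max}+p_{\min}=1-p_0$. Everything after that is a mechanical moment computation feeding the existing Berry-Esseen CLT, with the caveat (flagged in the Limitation section) that the bound is tight only as $d\to\infty$, since $\gamma = \Theta(d^{-1/2})$.
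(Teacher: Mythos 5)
Your proposal is correct and, despite the different packaging, lands on essentially the same proof as the paper: both reduce the vector guarantee to the $d$-fold composition of the worst-case scalar tradeoff function and then invoke the central limit theorem for $f$-DP (Lemma \ref{cltfdp}) with the same per-coordinate inputs. The only substantive difference is how those inputs are obtained. The paper computes the functionals $\mathrm{kl}(f)$, $\kappa_2(f)$, $\bar{\kappa}_3(f)$ directly from the piecewise-linear scalar tradeoff function of Theorem \ref{privacyofternarytheorem} via their integral definitions, whereas you compute the equivalent moments of the log-likelihood ratio $\ell_j=-Z_j\log\tfrac{A+c}{A-c}$ for the worst-case pair $(c,-c)$, using $\mathbb{E}_P[Z_j]=c/B$, $\mathrm{Var}(Z_j)=A/B-c^2/B^2$, and the stated third absolute central moment. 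These coincide because $|f'|$ equals the likelihood ratio along the Neyman--Pearson path, so $\mathrm{kl}$, $\kappa_2-\mathrm{kl}^2$, and $\bar{\kappa}_3$ are exactly $-\mathbb{E}_P[\ell]$, $\mathrm{Var}_P(\ell)$, and $\mathbb{E}_P\bigl[|\ell-\mathbb{E}_P\ell|^3\bigr]$; the factors of $\log\tfrac{A+c}{A-c}$ cancel in $\mu$ and $\gamma$ exactly as in the paper's algebra. Two things you make explicit that the paper leaves implicit are worth keeping: the justification that the worst-case neighboring pair for the tensor product is the coordinatewise extreme (monotonicity of $\otimes$ in each factor, plus feasibility of taking every coordinate extreme simultaneously under the $\ell_\infty$ constraint), and the observation that the optimal test reduces to thresholding $S=\sum_j Z_j$ because $\ell_j$ is affine in $Z_j$. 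Your Berry--Esseen sketch of converting the two CDF bounds into the two-sided $\gamma$ slack is precisely the content of Lemma \ref{cltfdp}, and you correctly defer to it rather than re-deriving it; the one hypothesis to check when doing so is that the scalar tradeoff function is symmetric, which holds here because $p_{\max}+p_{\min}=1-p_0$.
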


Before proving Theorem \ref{cltternary}, we first define the following functions as in \cite{dong2021gaussian},
\begin{equation}
    \text{kl}(f) = -\int_{0}^{1}\log|f'(x)|dx,
\end{equation}
\begin{equation}
    \kappa_{2}(f) = \int_{0}^{1}\log^{2}|f'(x)|dx,
\end{equation}
\begin{equation}
    \kappa_{3}(f) = \int_{0}^{1}|\log|f'(x)||^3dx,
\end{equation}
\begin{equation}
    \bar{\kappa}_{3}(f) = \int_{0}^{1}|\log|f'(x)|+\text{kl}(f)|^3dx.
\end{equation}

The central limit theorem for $f$-DP is formally introduced as follows.

\begin{Lemma}[\cite{dong2021gaussian}]\label{cltfdp}
Let $f_{1},...,f_{n}$ be symmetric trade-off functions such that $\kappa_{3}(f_{i}) < \infty$ for all $1 \leq i \leq d$. Denote
\begin{equation}\nonumber
\mu = \frac{2||\text{kl}||_{1}}{\sqrt{||\kappa_{2}||_{1}-||\text{kl}||_{2}^{2}}}, \text{and~~} \gamma = \frac{0.56||\bar{\kappa}_{3}||_{1}}{(||\kappa_{2}||_{1}-||\text{kl}||_{2}^{2})^{3/2}},
\end{equation}
and assume $\gamma < \frac{1}{2}$. Then, for all $\alpha \in [\gamma, 1-\gamma]$, we have
\begin{equation}
    G_{\mu}(\alpha+\gamma)-\gamma \leq f_{1}\otimes f_{2}\otimes\cdots \otimes f_{d}(\alpha) \leq G_{\mu}(\alpha-\gamma)+\gamma.
\end{equation}
\end{Lemma}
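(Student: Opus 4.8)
Since Lemma \ref{cltfdp} is quoted from \cite{dong2021gaussian}, the plan is to reproduce its Berry--Esseen argument, which reduces the behaviour of the composed trade-off function to a central limit theorem for a sum of log-likelihood ratios. The starting point is the canonical representation of a trade-off function: every symmetric trade-off function $f_i$ can be written as $f_i=T(P_i,Q_i)$ for a pair of probability measures $P_i,Q_i$, and by the Neyman--Pearson lemma (Lemma \ref{Neyman-Pearson}) the optimal level-$\alpha$ test thresholds the likelihood ratio $\frac{dQ_i}{dP_i}$. Consequently the slope of $f_i$ at type-I error $\alpha$ is minus the likelihood-ratio threshold, so $\log|f_i'(\alpha)|$ is exactly the log-likelihood ratio evaluated at the $\alpha$-quantile. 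Introducing the privacy-loss variable $\xi_i=\log|f_i'(U_i)|$ with $U_i\sim\mathrm{Unif}[0,1]$, the defined functionals become moments of $\xi_i$: $\mathbb{E}[\xi_i]=-\mathrm{kl}(f_i)$, $\mathbb{E}[\xi_i^2]=\kappa_2(f_i)$, hence $\mathrm{Var}(\xi_i)=\kappa_2(f_i)-\mathrm{kl}(f_i)^2$, and $\mathbb{E}|\xi_i-\mathbb{E}\xi_i|^3=\bar\kappa_3(f_i)$, with the hypothesis $\kappa_3(f_i)<\infty$ guaranteeing these are finite. Symmetry of $f_i$ forces the law of $\xi_i$ on the $Q_i$-side to be the mirror image of its law on the $P_i$-side, a fact I will use to pin down the factor of $2$ in $\mu$.

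Next I would tensorize. The tensor product satisfies $f_1\otimes\cdots\otimes f_d=T(\prod_i P_i,\prod_i Q_i)$, and the log-likelihood ratio of the product measures is the sum $S=\sum_{i=1}^d\xi_i$. By independence its mean is $-\sum_i\mathrm{kl}(f_i)=-\|\mathrm{kl}\|_1$, its variance is $\sum_i(\kappa_2(f_i)-\mathrm{kl}(f_i)^2)=\|\kappa_2\|_1-\|\mathrm{kl}\|_2^2$, and the relevant third absolute moment is $\sum_i\bar\kappa_3(f_i)=\|\bar\kappa_3\|_1$. Applying the Berry--Esseen theorem with the explicit constant $0.56$ to the normalized sum $\big(S-\mathbb{E}S\big)/\sqrt{\mathrm{Var}(S)}$ bounds the Kolmogorov distance between the CDF of $S$ and that of the matching Gaussian by exactly $\gamma=\frac{0.56\,\|\bar\kappa_3\|_1}{(\|\kappa_2\|_1-\|\mathrm{kl}\|_2^2)^{3/2}}$; the same bound holds on the $Q$-side by symmetry.

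It then remains to translate this distributional approximation into an approximation of the trade-off function. Again by Neyman--Pearson, $f_1\otimes\cdots\otimes f_d(\alpha)$ is obtained by thresholding $S$: writing $\alpha$ and $\beta$ through the CDFs of $S$ under $\prod_i P_i$ and $\prod_i Q_i$ respectively, I would substitute the Gaussian CDFs for the true ones, incurring an error of at most $\gamma$ in each coordinate. Since the Gaussian pair with mean-separation $2\|\mathrm{kl}\|_1$ and standard deviation $\sqrt{\|\kappa_2\|_1-\|\mathrm{kl}\|_2^2}$ has trade-off function $G_\mu$ with $\mu=\frac{2\|\mathrm{kl}\|_1}{\sqrt{\|\kappa_2\|_1-\|\mathrm{kl}\|_2^2}}$, absorbing the two $\gamma$-errors (one shifting the argument, one shifting the value) yields the two-sided envelope $G_\mu(\alpha+\gamma)-\gamma\le f_1\otimes\cdots\otimes f_d(\alpha)\le G_\mu(\alpha-\gamma)+\gamma$ for $\alpha\in[\gamma,1-\gamma]$, the restriction $\gamma<\tfrac12$ ensuring this range is non-empty.

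The main obstacle is this final transfer step: one must carefully convert a Kolmogorov-distance bound on the two CDFs of $S$ into a genuinely two-sided bound on the trade-off function, tracking that the $\pm\gamma$ perturbations act on both the type-I argument and the type-II value, and one must correctly recover the factor of $2$ in $\mu$ from the symmetry-induced mean-separation $2\|\mathrm{kl}\|_1$ rather than $\|\mathrm{kl}\|_1$. Establishing the canonical likelihood-ratio representation of a symmetric trade-off function, and verifying that $\log|f_i'|$ is precisely the log-likelihood ratio so that the three functionals really are the first three moments of $\xi_i$, is the other delicate ingredient; everything else is a direct application of the classical Berry--Esseen theorem.
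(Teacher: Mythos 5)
The paper does not prove Lemma \ref{cltfdp} at all; it is imported verbatim from \cite{dong2021gaussian}, so the only meaningful comparison is with that source's proof, which your sketch reproduces faithfully: the reduction of composition to a sum of log-likelihood ratios via the Neyman--Pearson representation, the identification of $\mathrm{kl}$, $\kappa_{2}$, $\bar{\kappa}_{3}$ as the first three moments of $\xi_{i}=\log|f_{i}'(U_{i})|$, the mirrored law under $Q$ forced by symmetry (which is exactly what produces the mean separation $2\|\mathrm{kl}\|_{1}$ and hence the factor $2$ in $\mu$), the non-i.i.d.\ Berry--Esseen bound with constant $0.56$, and the Kolmogorov-distance-to-trade-off-function transfer yielding the $\pm\gamma$ shifts in both the type-I argument and the type-II value. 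Your reconstruction is correct, and you rightly identify the two genuinely delicate ingredients (the canonical likelihood-ratio representation of a symmetric trade-off function and the final transfer step) that the full proof in \cite{dong2021gaussian} must, and does, handle carefully.
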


Given Lemma \ref{cltfdp}, we are ready to prove Theorem \ref{cltternary}.
\begin{proof}
Given $f_{i}(\alpha)$ in (\ref{privacyofternary}), we have
\begin{equation}
\begin{split}
\text{kl}(f) &= -\left[\frac{A-c}{2B}\log\left(\frac{A+c}{A-c}\right) + \frac{A+c}{2B}\log\left(\frac{A-c}{A+c}\right)\right]\\
&=\left[\frac{A+c}{2B} - \frac{A-c}{2B}\right]\log\left(\frac{A+c}{A-c}\right)\\
&=\frac{c}{B}\log\left(\frac{A+c}{A-c}\right),
\end{split}
\end{equation}
\begin{equation}
\begin{split}
\kappa_{2}(f) &= \left[\frac{A-c}{2B}\log^2\left(\frac{A+c}{A-c}\right) + \frac{A+c}{2B}\log^2\left(\frac{A-c}{A+c}\right)\right] \\
&=\frac{A}{B}\log^2\left(\frac{A+c}{A-c}\right),
\end{split}
\end{equation}
\begin{equation}
\begin{split}
\kappa_{3}(f) &= \left[\frac{A-c}{2B}\left|\log\left(\frac{A+c}{A-c}\right)\right|^3 + \frac{A+c}{2B}\left|\log\left(\frac{A-c}{A+c}\right)\right|^3\right] \\
&=\frac{A}{B}\left|\log\left(\frac{A+c}{A-c}\right)\right|^3,
\end{split}
\end{equation}
\begin{equation}
\begin{split}
\bar{\kappa}_{3}(f) = \left[\frac{A-c}{2B}\left|1+\frac{c}{B}\right|^3+\frac{A+c}{2B}\left|1-\frac{c}{B}\right|^3+\left(1-\frac{A}{B}\right)\left|\frac{c}{B}\right|^{3}\right]\left|\log\left(\frac{A+c}{A-c}\right)\right|^3 .   
\end{split}
\end{equation}

The corresponding $\mu$ and $\gamma$ are given as follows
\begin{equation}
\mu = \frac{2d\frac{c}{B}}{\sqrt{\frac{A}{B}d-\frac{c^2}{B^2}d}} = \frac{2\sqrt{d}c}{\sqrt{AB-c^2}},
\end{equation}

\begin{equation}
\gamma = \frac{0.56\left[\frac{A-c}{2B}\left|1+\frac{c}{B}\right|^3+\frac{A+c}{2B}\left|1-\frac{c}{B}\right|^3+\left(1-\frac{A}{B}\right)\left|\frac{c}{B}\right|^{3}\right]}{(\frac{A}{B}-\frac{c^2}{B^2})^{3/2}d^{1/2}},
\end{equation}
which completes the proof.
\end{proof}

\section{$f$-DP of the Poisson Binomial Mechanism}\label{supppbm}
The Poisson binomial mechanism \cite{chen2022poisson} is presented in Algorithm \ref{Poisson BinomialMechanism}.
\begin{algorithm}
\caption{Poisson Binomial Mechanism}
\label{Poisson BinomialMechanism}
\begin{algorithmic}
\STATE \textbf{Input}: $p_{i} \in [p_{min},p_{max}], \forall i \in \mathcal{N}$
\STATE Privatization: $Z_{pb} \triangleq PB(p_{1},p_{2},\cdots,p_{N}) = \sum_{i\in \mathcal{N}}Binom(M,p_{i})$.
\end{algorithmic}
\end{algorithm}
In the following, we show the $f$-DP guarantee of the Poisson binomial mechanism with $M=1$. The extension to the proof for $M >1$ is straightforward by following a similar technique.
\begin{theorem}
The Poisson binomial mechanism with $M=1$ in Algorithm \ref{Poisson BinomialMechanism} is $f^{pb}(\alpha)$-differentially private with 
\begin{equation}
\begin{split}
f^{pb}(\alpha) &= \min\bigg\{\max\bigg\{0,1-\frac{1-p_{min}}{1-p_{max}}\alpha,\frac{p_{min}}{p_{max}}(1-\alpha)\bigg\},\\
&\max\bigg\{0,1-\frac{p_{max}}{p_{min}}\alpha,\frac{1-p_{max}}{1-p_{min}}(1-\alpha)\bigg\}\bigg\}.    
\end{split}
\end{equation}
\end{theorem}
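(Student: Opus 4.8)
The plan is to reduce the analysis of the aggregate $Z_{pb}$ to a single Bernoulli comparison and then apply the Neyman-Pearson machinery of Lemma \ref{generaltheorem}. Since neighboring datasets differ in exactly one user's record, only one parameter changes, say $p_{i}\to p_{i}'$ with $p_{i},p_{i}'\in[p_{min},p_{max}]$, while the remaining parameters are identical. With $M=1$ I would write $Z_{pb}=B_{i}+W$, where $B_{i}\sim\mathrm{Bernoulli}(p_{i})$ (resp. $\mathrm{Bernoulli}(p_{i}')$) is the contribution of user $i$ and $W=\sum_{j\neq i}\mathrm{Bernoulli}(p_{j})$ is the common, independent contribution of the other users. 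The map $b\mapsto b+W$ is a fixed randomized channel applied identically under both hypotheses, so by the post-processing (data-processing) property of tradeoff functions \cite{dong2021gaussian} we get $T(P,Q)=T(B_{i}+W,B_{i}'+W)\geq T(\mathrm{Bernoulli}(p_{i}),\mathrm{Bernoulli}(p_{i}'))$. In other words, the masking by the other users can only make the two hypotheses harder to distinguish, so the worst case over neighboring datasets is obtained by ignoring $W$.

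Next I would compute the single-Bernoulli tradeoff function explicitly and minimize it over $p_{i},p_{i}'\in[p_{min},p_{max}]$. Splitting into the two directions $p_{i}'<p_{i}$ and $p_{i}'>p_{i}$, the likelihood ratio $Q/P$ takes only the two values $(1-p_{i}')/(1-p_{i})$ at output $0$ and $p_{i}'/p_{i}$ at output $1$ and is monotone, so Lemma \ref{generaltheorem} (equivalently, direct Neyman-Pearson on a two-point space) yields a piecewise-linear tradeoff function with a single interior kink, whose extreme slopes are exactly $-(1-p_{i}')/(1-p_{i})$ and $-p_{i}'/p_{i}$. Driving these to their most extreme values makes the tradeoff function smallest, which forces $\{p_{i},p_{i}'\}=\{p_{min},p_{max}\}$: the direction $p_{i}'<p_{i}$ gives $p_{i}=p_{max},p_{i}'=p_{min}$ and the first term $\max\{0,\,1-\tfrac{1-p_{min}}{1-p_{max}}\alpha,\,\tfrac{p_{min}}{p_{max}}(1-\alpha)\}$, while symmetrically $p_{i}'>p_{i}$ gives $p_{i}=p_{min},p_{i}'=p_{max}$ and the second term. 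Taking the minimum over the two directions yields the stated $f^{pb}(\alpha)$.

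If instead I wanted to stay entirely within the framework of Lemma \ref{generaltheorem} without invoking data processing, I would apply it directly to the Poisson binomial laws $P,Q$. Using the log-concavity of the Poisson binomial PMF, the ratio $r_{k}=\Pr(W=k-1)/\Pr(W=k)$ is increasing in $k$, and since $\Pr(Q=k)/\Pr(P=k)=(p_{i}'r_{k}+1-p_{i}')/(p_{i}r_{k}+1-p_{i})$ is a monotone fractional-linear (M\"obius) transform of $r_{k}$, the overall likelihood ratio is monotone in $k$ and Lemma \ref{generaltheorem} applies after relabeling. The endpoints $k=0$ and $k=N$ reproduce precisely the two extreme ratios $(1-p_{i}')/(1-p_{i})$ and $p_{i}'/p_{i}$, so the first and last slopes of the convex tradeoff function coincide with those of the single-Bernoulli envelope.

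The main obstacle is exactly this reduction step: justifying that the intermediate outputs created by the masking $W$ never pull the tradeoff function below the two-piece single-Bernoulli envelope. The clean route is the data-processing inequality above; the hands-on route is to argue that a convex, decreasing tradeoff function pinned at $(0,1)$ and $(1,0)$ with prescribed extreme end-slopes is minimized by the two-piece function realizing those slopes with a single kink, any additional interior kinks with intermediate slopes only pushing the curve upward. Once the reduction to the single Bernoulli is secured, the remaining work — the two Neyman-Pearson tradeoff functions and the minimization over $[p_{min},p_{max}]$ — is routine, and I would finally convert to $(\epsilon,\delta)$-form via Lemma \ref{Lemmaftoappro} if needed.
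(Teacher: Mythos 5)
Your proof is correct, and your primary route is genuinely different from the paper's. The paper works directly with the two Poisson binomial laws: it establishes the monotone likelihood ratio from the log-concavity inequality $P(Z=k+1)P(Z=k-1)<P(Z=k)^{2}$ (the same fact underlying your M\"obius-transform remark), invokes Lemma \ref{generaltheorem} to obtain the exact piecewise-linear tradeoff function, verifies case by case that the worst neighboring pair is $(p_{i},p'_{i})=(p_{max},p_{min})$, and finally lower-bounds the convex tradeoff curve by its two endpoint tangents with slopes $-\tfrac{1-p_{min}}{1-p_{max}}$ and $-\tfrac{p_{min}}{p_{max}}$ --- which is precisely your ``hands-on'' fallback argument. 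Your primary decomposition $Z_{pb}=B_{i}+W$ together with post-processing, $T(B_{i}+W,\,B_{i}'+W)\geq T(\mathrm{Bern}(p_{i}),\mathrm{Bern}(p_{i}'))$, short-circuits all of this: the right-hand side is exactly the two-piece envelope, and its pointwise minimization over $p_{i},p_{i}'\in[p_{min},p_{max}]$ is immediate because both extreme slopes are optimized simultaneously at $(p_{max},p_{min})$. What the paper's longer computation buys is the exact tradeoff function of the aggregate (strictly above the two-piece envelope in the interior, hence potentially a tighter guarantee than $f^{pb}$); what your route buys is a one-line resolution of the step you correctly flag as the crux, and an argument that extends verbatim to $M>1$ or to any independent masking noise. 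Both approaches establish only the lower bound $T\geq f^{pb}$, which is all that Definition \ref{dpdefinition} requires for the stated conclusion.
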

\begin{proof}

For Poisson Binomial, let 
\begin{equation}
\begin{split}
&X = PB(p_{1},p_{2},\cdots,p_{i-1},p_{i},p_{i+1},\cdots,p_{N}), \\
&Y = PB(p_{1},p_{2},\cdots,p_{i-1},p'_{i},p_{i+1},\cdots,p_{N}),\\
&Z = PB(p_{1},p_{2},\cdots,p_{i-1},p_{i+1},\cdots,p_{N}),
\end{split}
\end{equation}
in which $PB$ stands for Poisson Binomial. In this case,
\begin{equation}
\frac{P(Y=k+1)}{P(X=k+1)} = \frac{P(Z=k+1)(1-p'_{i})+P(Z=k)p'_{i}}{P(Z=k+1)(1-p_{i})+P(Z=k)p_{i}}.
\end{equation}
In addition, 
\begin{equation}
\begin{split}
&P(Y=k+1)P(X=k)-P(Y=k)P(X=k+1)\\
&= [P(Z=k+1)P(Z=k-1)-(P(Z=k))^2](p_{i}-p'_{i}).
\end{split}
\end{equation}
Since $P(Z=k+1)P(Z=k-1)-(P(Z=k))^2 < 0$ for Poisson Binomial distribution, we have 
\begin{equation}
\begin{split}
&P(Y=k+1)P(X=k)-P(Y=k)P(X=k+1) 
\begin{cases}
>0, \hfill ~~~~~~~\text{if $p_{i} < p'_{i}$}, \\
<0, \hfill ~~~~~~~\text{if $p_{i} > p'_{i}$}.\\
\end{cases}    
\end{split}
\end{equation}

That being said, $\frac{P(Y=k)}{P(X=k)}$ is an increasing function of $k$ if $p_{i} < p'_{i}$ and a decreasing function of $k$ if $p_{i} > p'_{i}$. Following the same analysis as that in the proof of Theorem \ref{privacyofbinomial}, for $p_{i} > p'_{i}$, we have
\begin{equation}\label{betaphipluspb}
\begin{split}
\beta_{\phi}^{+}(\alpha) &= 1 - [P(Y < k) + \gamma P(Y=k)] \\
&= P(Y \geq k) - P(Y=k)\frac{\alpha - P(X < k)}{P(X=k)}   \\
& = P(Y \geq k) + \frac{P(Y=k)P(X < k)}{P(X=k)} - \frac{P(Y=k)}{P(X=k)}\alpha,
\end{split}
\end{equation}
for $\alpha \in [P(X < k), P(X \leq k)]$ and $k \in \{0,1,2,\cdots,N\}$.

In the following, we show that the infimum of $\beta_{\phi}^{+}(\alpha)$ is attained when $p_{i} = p_{max}$ and $p'_{i} = p_{min}$. 

\textbf{Case 1:} $k = 0$. In this case,
\begin{equation}\label{kequal0}
\begin{split}
&P(Y \geq 0) = 1,\\
&P(Y = 0) = P(Z = 0)(1-p'_{i}),\\
&P(X < 0) = 0,\\
&P(X = 0) = P(Z = 0)(1-p_{i}).
\end{split}
\end{equation}
Plugging (\ref{kequal0}) into (\ref{betaphipluspb}) yields
\begin{equation}
\beta_{\phi}^{+}(\alpha) = 1 - \frac{1-p'_{i}}{1-p_{i}}\alpha.
\end{equation}
It is obvious that the infimum is attained when $p_{i} = p_{max}$ and $p'_{i} = p_{min}$.

\textbf{Case 2:} $k > 0$. In this case,
\begin{equation}\label{klarger0}
\begin{split}
&P(Y \geq k) = P(Z\geq k) + P(Z=k-1)p'_{i},\\
&P(Y = k) = P(Z=k)(1-p'_{i})+P(Z=k-1)p'_{i},\\
&P(X <k) = P(Z<k) - P(Z=k-1)p_{i},\\
&P(X = k) = P(Z = k)(1-p_{i}) + P(Z=k-1)p_{i}.
\end{split}
\end{equation}
Plugging (\ref{klarger0}) into (\ref{betaphipluspb}) yields
\begin{equation}
\begin{split}
&\beta_{\phi}^{+}(\alpha) = p(Z>k) + P(Z=k)p'_{i} +[P(X\leq k)-\alpha]\frac{[P(Z=k)-[P(Z=k)-P(Z=k-1)p'_{i}]]}{P(X=k)}.
\end{split}
\end{equation}
The $p'_{i}$ related term is given by
\begin{equation}\label{equal69}
\begin{split}
&\bigg[\frac{P(X=k)P(Z=k)}{P(X=k)} -\frac{[P(Z=k)-P(Z=k-1)][P(X\leq k)-\alpha]}{P(X=k)}\bigg]p'_{i}.
\end{split}
\end{equation}
Observing that (\ref{equal69}) is a linear function of $\alpha$, we only need to examine $\alpha \in \{P(X<k),P(X\leq k)\}$. More specifically, when $\alpha = P(X \leq k)$, it is reduced to $P(Z=k)p'_{i}$; when $\alpha = P(X < k)$, it is reduced to $P(Z = k-1)p'_{i}$. In both cases, the infimum is attained when $p'_{i} = p_{min}$.

Given that $p'_{i} = p_{min}$, the same technique as in the proof of Theorem \ref{privacyofbinomial} can be applied to show that the infimum is attained when $p = p_{max}$.

Since $\frac{P(Y=k)}{P(X=k)}$ is a decreasing function of $k$ when $p_{i} > p'_{i}$, we have 
\begin{equation}
\begin{split}
\frac{p_{min}}{p_{max}} \leq \frac{P(Y=k)}{P(X=k)} \leq \frac{1-p_{min}}{1-p_{max}}.
\end{split}
\end{equation}

Given that $\beta_{\phi}^{+}(\alpha)$ is a decreasing function of $\alpha$ with $\beta_{\phi}^{+}(0) = 1$ and $\beta_{\phi}^{+}(1) = 0$, we can readily conclude that $\beta_{\phi}^{+}(\alpha) \geq \max\{0,1-\frac{1-p_{min}}{1-p_{max}}\alpha\}$ and $\beta_{\phi}^{+}(\alpha) \geq \frac{p_{min}}{p_{max}}(1-\alpha)$. That being said, $\beta_{\phi}^{+}(\alpha) \geq \max\{0,1-\frac{1-p_{min}}{1-p_{max}}\alpha,\frac{p_{min}}{p_{max}}(1-\alpha)\}$.



Similarly, for $p_{i} < p'_{i}$, we have
\begin{equation}
\begin{split}
\beta_{\phi}^{-}(\alpha) &= 1 - [P(Y > k) + \gamma P(Y=k)] \\
&= P(Y \leq k) - P(Y=k)\frac{\alpha - P(X > k)}{P(X=k)}   \\
& = P(Y \leq k) + \frac{P(Y=k)P(X > k)}{P(X=k)} - \frac{P(Y=k)}{P(X=k)}\alpha
\end{split}
\end{equation}
for $\alpha \in [P(X > k), P(X \geq k)]$ and $k \in \{0,1,2,\cdots,N\}$. The infimum is attained when $p_{i} = p_{min}$, $p'_{i} = p_{max}$.

Since $\frac{P(Y=k)}{P(X=k)}$ is an increasing function of $k$ when $p_{i} < p'_{i}$, we have 
\begin{equation}
\begin{split}
\frac{1-p_{max}}{1-p_{min}} \leq \frac{P(Y=k)}{P(X=k)} \leq \frac{p_{max}}{p_{min}}.
\end{split}
\end{equation}

Given that $\beta_{\phi}^{-}(\alpha)$ is an increasing function of $\alpha$ with $\beta_{\phi}^{-}(0) = 1$ and $\beta_{\phi}^{-}(1) = 0$, we can easily conclude that $\beta_{\phi}^{-}(\alpha) \geq \max\{0,1-\frac{p_{max}}{p_{min}}\alpha\}$ and $\beta_{\phi}^{-}(\alpha) \geq \frac{1-p_{max}}{1-p_{min}}(1-\alpha)$. That being said, $\beta_{\phi}^{-}(\alpha) \geq \max\{0,1-\frac{p_{max}}{p_{min}}\alpha,\frac{1-p_{max}}{1-p_{min}}(1-\alpha)\}$.
\end{proof}

\end{document}